\setlist[enumerate]{leftmargin=.5in}
\setlist[itemize]{leftmargin=.5in}
\title{Sampling Theory for Super-Resolution with Implicit Neural Representations}
\author{Mahrokh Najaf, Gregory Ongie\thanks{Department of Mathematical and Statistical Sciences, Marquette University, Milwaukee, WI, USA.\\ Corresponding author: Gregory Ongie (gregory.ongie@marquette.edu)}}
\def\1{\bm{1}}
\def\vc{{\bm{c}}}
\def\ve{{\bm{e}}}
\def\vk{{\bm{k}}}
\def\vn{{\bm{n}}}
\def\vq{{\bm{q}}}
\def\vv{{\bm{v}}}
\def\vw{{\bm{w}}}
\def\vx{{\bm{x}}}
\def\vy{{\bm{y}}}
\def\vz{{\bm{z}}}
\def\mF{{\bm{F}}}
\def\mI{{\bm{I}}}
\DeclareMathAlphabet{\mathsfit}{\encodingdefault}{\sfdefault}{m}{sl}
\SetMathAlphabet{\mathsfit}{bold}{\encodingdefault}{\sfdefault}{bx}{n}
\def\gA{{\mathcal{A}}}
\def\gC{{\mathcal{C}}}
\def\gF{{\mathcal{F}}}
\def\gK{{\mathcal{K}}}
\def\gL{{\mathcal{L}}}
\def\gM{{\mathcal{M}}}
\def\gN{{\mathcal{N}}}
\def\gP{{\mathcal{P}}}
\def\gQ{{\mathcal{Q}}}
\def\gW{{\mathcal{W}}}
\def\gX{{\mathcal{X}}}
\newcommand{\R}{\mathbb{R}}
\DeclareMathOperator*{\argmin}{arg\,min}
\DeclareMathOperator{\sign}{sign}
\DeclareMathOperator{\rank}{rank}
\renewcommand{\S}{\mathbb{S}}
\newcommand{\T}{\top}
\newcommand{\F}{\mathcal{F}}
\newcommand{\Td}{\mathbb{T}^d}
\newcommand{\torus}{\mathbb{T}}
\newcommand{\TP}{\mathsf{TP}}
\newcommand{\Csym}{\mathbb{C}_{\mathrm{sym}}}
\newcommand{\Peta}{S_\eta}
\newcommand{\Zd}{\mathbb{Z}^d}
\newcommand{\C}{\mathbb{C}}
\renewcommand{\Im}{\mathrm{Im}}
\newtheorem{myLem}{Lemma}
\newtheorem*{myLemStar}{Lemma}
\newtheorem{myThm}{Theorem}
\begin{document}

\maketitle
\begin{abstract}
Implicit neural representations (INRs) have emerged as a powerful tool for solving inverse problems in computer vision and computational imaging. INRs represent images as continuous domain functions realized by a neural network taking spatial coordinates as inputs. However, unlike traditional pixel representations, little is known about the sample complexity of estimating images using INRs in the context of linear inverse problems. Towards this end, we study the sampling requirements for recovery of a continuous domain image from its low-pass Fourier samples by fitting a single hidden-layer INR with ReLU activation and a Fourier features layer using a generalized form of weight decay regularization. Our key insight is to relate minimizers of this non-convex parameter space optimization problem to minimizers of a convex penalty defined over an infinite-dimensional space of measures. We identify a sufficient number of Fourier samples for which an image realized by an INR is exactly recoverable by solving the INR training problem.  To validate our theory, we empirically assess the probability of achieving exact recovery of images realized by low-width single hidden-layer INRs, and illustrate the performance of INRs on super-resolution recovery of continuous domain phantom images.
\end{abstract}

\section{Introduction}
Many inverse problems in signal and image processing are naturally posed in continuous domain. For example, in magnetic resonance imaging, measurements are modeled as samples of the Fourier transform of a function defined over continuous spatial coordinates \cite{fessler2010model}. Often these continuous domain inverse problems are approximated using discrete image or signal representations, along with an appropriate discretization of the forward model, such as the discrete Fourier transform in place of the continuous Fourier transform. However, this can lead to undesirable discretization artifacts, such as Gibb's ringing \cite{gelb2002,veraart2016gibbs,ongie2015,ongie2016off}. Such artifacts may be reduced by increasing the resolution of the discretization grid, but at the cost of additional computational complexity and potentially making the problem ill-posed without additional regularization.

Recently, continuous domain image representations using neural networks have gained traction in a variety of inverse problems arising in computer vision and computational imaging (see, e.g., the recent review \cite{xie2022neural}). These so-called \emph{implicit neural representations} (INRs), also variously known as 
\emph{coordinate-based neural networks} or \emph{neural radiance fields}, parameterize a signal or image as a neural network defined over continuous spatial coordinates \cite{mildenhall2021nerf,sitzmann2020implicit,tancik2020fourier}. The main benefits of INRs over discrete representations are that (1) they offer a compact (non-linear) representation of images with potentially fewer parameters than a discrete representation, (2) they allow for more accurate continuous domain forward modeling, which can reduce the impact of discretization artifacts, and (3) standard neural network training can be used to fit an INR to measurements, sidestepping the need for custom iterative solvers.

However, the fundamental limits of image estimation using INRs have not been explored in detail. One basic unanswered question is: How many linear measurements are necessary and sufficient for accurate reconstruction of an image by fitting an INR? Additionally, how does the architecture of the INR influence the recovered image?  As a step in the direction of answering these questions, we consider the recovery of an image from its low-pass Fourier samples using INRs. In particular, we focus on a popular INR architecture originally introduced in \cite{tancik2020fourier} that combines a fully connected neural network with a Fourier features layer. 

Recent work has shown that training a single hidden-layer ReLU network with 
$\ell^2$ regularization on its parameters (also known as ``weight decay'' regularization) imparts a sparsity effect, such that the network learned has low effective width \cite{savarese2019infinite,ongie2019function,parhi2021banach,parhi2023deep}. Bringing this perspective to INRs, we study the conditions under which an image realizable by a single-hidden layer, low-width INR is exactly recoverable from low-pass Fourier samples by minimizing a generalized weight decay objective. Our main theoretical insight is to relate minimizers of this non-convex parameter space optimization problem to minimizers of a convex optimization problem posed over a space of measures.

Using this framework, we prove two sampling theorems. First, we identify a sufficient number of low-pass Fourier samples such that an image realizable by a width-1 INR is the unique minimizer of the INR training problem regularized with a modified form of weight decay. Additionally, under more restrictive assumptions on the frequencies used to define the Fourier features layer, we extend the result to images realizable by a general width-$s$ INR.  

Finally, we validate our theory with a series of experiments. First, we demonstrate that random images generated by low-width INRs are exactly recoverable from low-pass Fourier samples by solving the generalized weight decay regularized INR training problem. Next, we illustrate the performance of INRs on super-resolution recovery of continuous domain phantom images, highlighting the effects of generalized weight regularization in practical INR training. Lastly, we empirically explore the role of depth of INRs for super-resolution recovery. We show that deeper INRs trained with weight decay regularization perform better at recovering piecewise constant phantom images, suggesting that deep INRs may exhibit a bias towards piecewise constant representations.

Preliminary versions of the results in this paper were presented in \cite{asilomar}. The majority of proofs were omitted from \cite{asilomar}, which are included here. Additionally, this paper expands the sampling theory to the general width-$s$ INR case, and contains an expanded experiments section.

\subsection{Related Work}\label{sec:relatedwork}
Closely related to the Fourier sampling model studied in this work, several papers have explored INRs for MRI reconstruction, including super-resolution imaging \cite{wu2021irem,wu2022arbitrary,van2022scale}. Other applications of INRs to MRI reconstruction include: multi-contrast imaging \cite{mcginnis2023single}, dynamic imaging \cite{huang2023neural,feng2022spatiotemporal,kunz2024implicit}, and compressed sensing with prior-image constraints \cite{shen2022nerp}.

While this work focuses on one type of INR architecture, a wide variety of other INR architectures have been proposed. Early INR architectures focused on fully connected ReLU networks with an initial \emph{positional encoding} layer, inspired by transformer architectures \cite{mildenhall2021nerf}. Later work investigated alternative positional encoding layers given by random Fourier features \cite{tancik2020fourier} (the focus of this work), and a multiresolution hash encoding \cite{muller2022instant}. Other approaches to INR architecture design do not use an initial positional encoding layer, but instead consider fully connected networks with non-standard activation functions, including periodic \cite{sitzmann2020implicit} or variable-periodic \cite{liu2024finer} functions, Gaussian functions \cite{ramasinghe2022beyond}, sinc functions \cite{saratchandran2024sampling}, and wavelet-inspired activation functions \cite{saragadam2023wire,shenouda2024relus}. See the recent review \cite{essakine2024we} for a detailed comparison of various popular INR architectures.

Several recent works have also sought to mathematically characterize the expressive power and inductive bias of INRs. A neural tangent kernel (NTK) analysis of INRs is given in \cite{tancik2020fourier}, which shows that a tunable random Fourier features layer can help alleviate the low spatial-frequency spectral bias associated with fully connected ReLU networks. Relatedly, \cite{yuce2022structured} studies the expressive power of a variety of INR architectures (including those proposed in \cite{sitzmann2020implicit,tancik2020fourier}) through a generalized Fourier series analysis. Their analysis shows that only functions given by a linear combination of certain harmonics of the initial feature map, with additional structural constraints on the coefficients, are realizable with a given INR architecture. Follow-up work expands on the Fourier series perspective specifically for INRs with sinusoidal activations \cite{novello2022understanding} and its implications for robust INR initialization \cite{novello2025tuningfrequenciesrobusttraining}. Additionally, recent work investigates incorporating batch normalization in INR layers as a means to reduce low spatial-frequency spectral bias of INR architectures \cite{cai2024batch}. 

Most closely related to this work, \cite{shenouda2024relus} introduces a means to measure function space regularity of a shallow INR representation using the \emph{variation norm}, which is the function space regularizer induced by weight decay regularization. Similarly, in this work we consider regularizing INRs by quantities analogous to the variation norm. However, the present work goes beyond \cite{shenouda2024relus} in proving super-resolution recovery guarantees for images realized by low-width INRs.

Finally, the proof techniques used in this work draw from the ``off-the-grid'' compressed sensing literature. This includes works focused on super-resolution of sparse measures \cite{bhaskar2013atomic,chi2014compressive,candes2014towards,chi2020harnessing} and multidimensional generalizations \cite{poon2019multidimensional,eftekhari2021stable,poon2023geometry,kurmanbek2023multivariate}. In particular, the image model investigated in this work is closely related to (though distinct from) the continuous-domain piecewise constant image model investigated in \cite{ongie2015,ongie2016off,ongie2017convex}. Algebraic facts regarding the zero-sets of two-variable trigonometric polynomials established in \cite{ongie2016off} are central to key proofs in this work; higher-dimensional extensions of these results have  recently been explored in \cite{zou2021recovery}.

\section{Problem formulation}
\label{sec:Problem formulation}
We consider the recovery of a $d$-dimensional real-valued continuous domain signal/image $f:[0,1]^d\rightarrow \R$ from given a finite sample of its Fourier coefficients
\[
\hat{f}[\vk] = \int_{[0,1]^d}f(\vx)e^{-2\pi \mathrm{i} \vk^\top \vx}d\vx,
\]
where $\vk\in \mathbb{Z}^d$ is any $d$-tuple of integer spatial frequencies. In particular, suppose we sample frequencies belonging to a uniformly sampled square region in frequency domain
$
\Omega = \{\vk \in \mathbb{Z}^d : \|\vk\|_\infty \leq K \}
$
where $K$ is a positive integer.
Let $\gF_\Omega f = (\hat{f}[\vk])_{\vk \in \Omega}$ denote the vector of Fourier coefficients of $f$ restricted to $\Omega$. We study the inverse problem of recovering the function $f$ given $\vy = \gF_\Omega f$. 

Clearly, this is an ill-posed inverse problem if no further assumptions are made on $f$. One approach to resolve this ill-posedness is to assume $f$ is well-approximated by a function realized by a ``simple'' neural network $f_\theta:[0,1]^d\rightarrow \R$, i.e., an INR, with trainable parameters $\theta$. Under this assumption, one can attempt to recover $f$ from its measurements $\vy$ by solving the (non-linear) least squares problem
\begin{equation}\label{eq:lsfit_noreg}
    \min_{\theta} \tfrac{1}{2}\|\gF_\Omega f_\theta - \vy\|_2^2.
\end{equation}
However, this problem may still be ill-posed (i.e., \eqref{eq:lsfit_noreg} may have multiple solutions) unless the number of trainable parameters used to define the INR $f_\theta$ is heavily constrained. Though, even without constraining the number of parameters, useful solutions of \eqref{eq:lsfit_noreg} may be found by relying on implicit regularization induced by practical gradient-based algorithms \cite{gunasekar2018characterizing}. 

As an alternative to using an overly constrained network architecture or relying on implicit regularization to resolve ill-posedness, we propose incorporating an explicit parameter space regularizer $R(\theta)$ to the least squares objective:
\begin{equation}\label{eq:lsfit}
    \min_{\theta} \tfrac{1}{2}\|\gF_\Omega f_\theta - \vy\|_2^2 + \lambda R(\theta),
\end{equation}
where $\lambda > 0$ is a regularization parameter.
In particular, we focus on a class of regularizers $R(\theta)$ that generalize \emph{weight decay regularization} \cite{krogh1991simple}, the squared $\ell^2$-norm of all trainable parameters; this class is described in more detail below in \Cref{sec:weight_decay}. 

In a noise-free scenario, the regularization parameter $\lambda$ in \eqref{eq:lsfit} needs to be very small to ensure data consistency  is tightly enforced.  As a model for this situation, we will focus on the equality constrained problem
\begin{equation}\label{eq:opt_param_space1}
\min_{\theta} R(\theta)~~s.t.~~\gF_\Omega f_\theta = \vy,
\end{equation}
which can be thought of as the limiting case of \eqref{eq:lsfit} as $\lambda\rightarrow 0$. 

Our main goal is to characterize global minimizers of  \eqref{eq:opt_param_space1}. In particular, we are interested in its \emph{function space minimizers}, i.e., functions $f_{\theta^*}$ where $\theta^*$ is a global minimizer of \eqref{eq:opt_param_space1}. Given $\vy = \gF_\Omega f$ where $f$ is realizable as an INR, we ask: when is $f$ the unique function space minimizer of \eqref{eq:opt_param_space1}? Next, we describe the architectural assumptions we put on the INR $f_\theta$ and the associated regularizer $R(\theta)$.

\subsection{INR Architecture}

In this work, we focus on an INR architecture first proposed in \cite{tancik2020fourier} that combines a trainable fully connected network with an initial Fourier features layer. For simplicity, we primarily consider a ``shallow'' INR architecture, where the fully connected network has a single-hidden layer with ReLU activation.

\begin{figure*}
    \centering
    \includegraphics[width=0.8\textwidth]{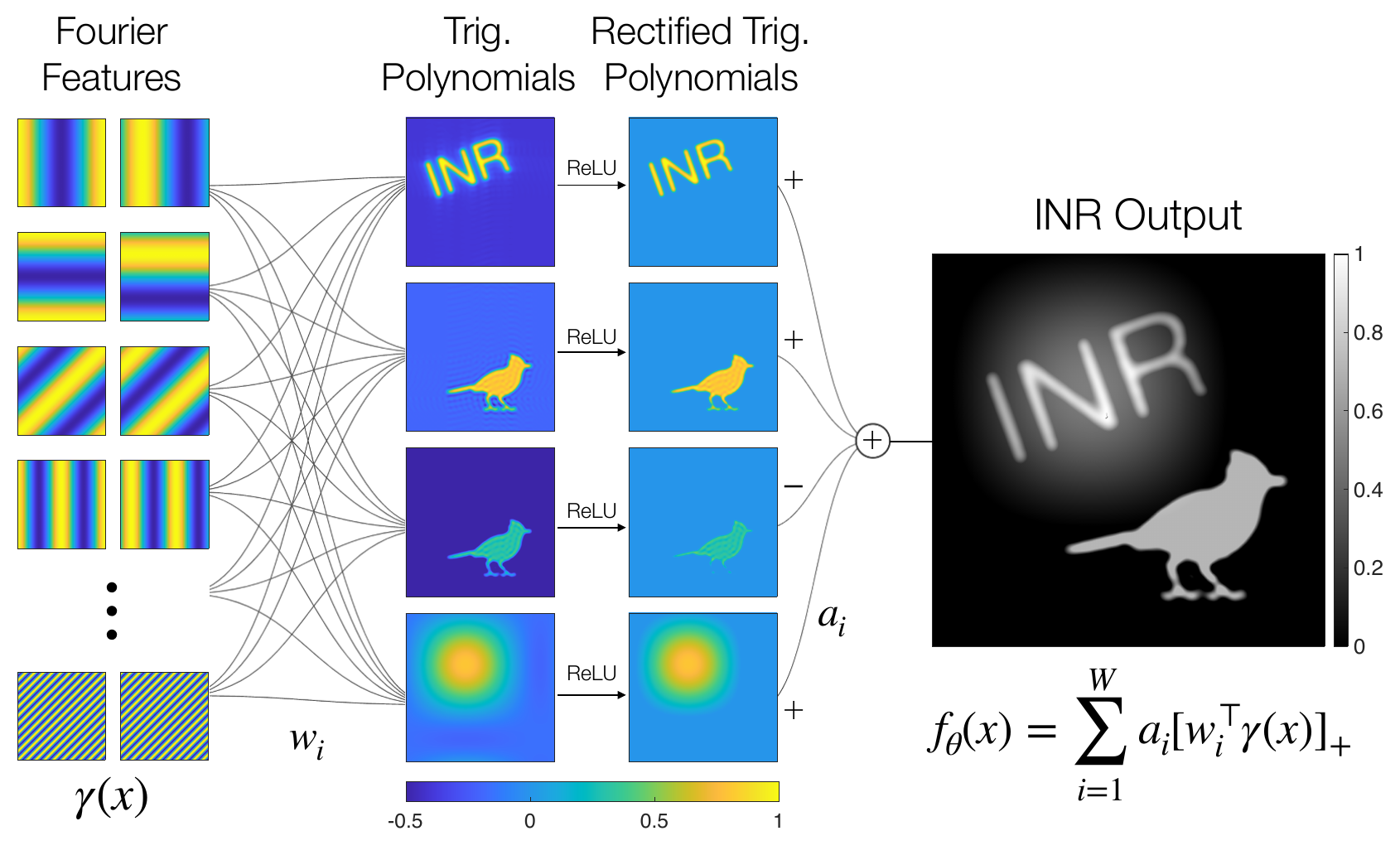}
    \caption{\footnotesize The shallow INR architecture considered in this work represents an image as a linear combination of rectified trigonometric polynomials. We study the sample complexity of estimating an image of this type from its low-pass Fourier samples.}
    \label{fig:RTP_orig}
\end{figure*}

Specifically, consider a shallow ReLU network $g_\theta:\R^D\rightarrow \R$ of width-$W$ and input dimension $D$ given by
\begin{equation}\label{eq:MLP}
g_\theta(\vv) = \sum_{i=1}^W a_i[\vw_i^\T \vv]_+,
\end{equation}
where $[t]_+ := \max\{0,t\}$ is the ReLU activation function, and $\theta = \left((a_i,\vw_i)\right)_{i=1}^W$ denotes the vector of all trainable parameters with  $a_i \in \R$ and $\vw_i \in \R^D$ for all $i\in [W]$. We denote the resulting parameter space by $\Theta_W \simeq \R^{W(D+1)}$.
By choosing $D = d$, this architecture could be used as an INR directly. However, as shown in \cite{tancik2020fourier}, it is challenging to recover high-frequency content in images using such an architecture.

Instead, to enhance recovery of high-frequency content, \cite{tancik2020fourier} proposed adding an initial Fourier features layer to the architecture: given a collection of non-zero frequency vectors $\Omega_0 = \{\vk_1,...,\vk_p\}\subset \mathbb{R}^d$ define the Fourier features embedding $\bm\gamma:\R^d\rightarrow \R^{D}$ with $D=2p+1$ by
\begin{equation}\label{eq:ffmap}
\bm\gamma(\vx) = 
[1,\sqrt{2}\cos(2\pi\vk_1^\T\vx),...,\sqrt{2}\cos(2\pi\vk_p^\T\vx),
\sqrt{2}\sin(2\pi\vk_1^\T\vx), ...,
\sqrt{2}\sin(2\pi\vk_p^\T\vx)]^\T.
\end{equation} 
This gives the shallow INR architecture:
\begin{equation}\label{eq:ftheta}
{f_\theta(\vx) = g_\theta(\bm\gamma(\vx)) = \sum_{i=1}^W a_i[\vw_i^\T \bm\gamma(\vx)]_+}.
\end{equation}
To better understand the impact of the Fourier features embedding,
for any weight vector $\vw \in \R^{D}$, let us write $\vw = [w^{(0)},\vw^{(1)}, \vw^{(2)}]$  where $w^{(0)} \in \R$ and $\vw^{(1)}, \vw^{(2)} \in \R^p$. Then defining $\tau(\vx) := \vw^\T \bm\gamma(\vx)$, we have
\begin{equation}\label{eq:trigpoly}
\tau(\vx) = w^{(0)} + \sqrt{2}\sum_{j=1}^{p} \left(w_{j}^{(1)} \cos(2\pi\bm\vk_{j}^\T\vx) + w_{j}^{(2)} \sin(2\pi\bm\vk_j^\T\vx)\right).
\end{equation}
This shows $\tau$ is a real-valued \emph{trigonometric polynomial}, i.e., $\tau$ is a function with frequency support contained in a finite set.
Therefore, a shallow INR with ReLU activation and a Fourier features embedding is a weighted linear combination of \emph{rectified trigonometric polynomials}, i.e., functions of the form $\vx \mapsto [\tau(\vx)]_+$ where $\tau$ is a real-valued trigonometric polynomial; see \Cref{fig:RTP_orig} for an illustration.

Originally, \cite{tancik2020fourier} proposed defining the frequency set $\Omega_0$ by sampling a fixed number of normally distributed random vectors $\vk \sim \gN(0,\sigma^2\mI)$, where the variance $\sigma^2$ is treated as a tunable parameter. To simplify the mathematical theory, in this work we focus on the case where $\Omega_0$ is a set of uniformly sampled integer-valued frequencies $\vk \in \Zd$. This implies that the INR \eqref{eq:ftheta} is a periodic function on $[0,1]^d$, or equivalently, a function defined over the $d$-dimensional torus $\mathbb{T}^d := (\mathbb{R}/\mathbb{Z})^d$. We note that using uniformly sampled integer frequencies in place of random frequencies for an INR Fourier features layer was also investigated in \cite{benbarka2022seeing}.

\subsection{Generalized Weight Decay Regularization}\label{sec:weight_decay}
We focus on a family of parameter space regularizers $R(\theta)$ that strictly generalize ``weight decay'' regularization \cite{krogh1991simple}, i.e., the squared $\ell^2$-norm of all trainable parameters. Consider any non-negative \emph{weighting function} $\eta$ defined over inner-layer weight vectors $\vw \in \R^D$. We say $\eta$ is an \emph{admissible} weighting function if it satisfies the following properties: 
\begin{enumerate}
    \item[(i)] $\eta$ is continuous,
    \item[(ii)] $\eta$ is positive 1-homogeneous: $\eta(\alpha \vw) = \alpha \eta(\vw)$ for all $\vw \in \R^d$ and $\alpha \geq 0$,
    \item[(iii)] $\eta(\vw) = 0$ implies $[\vw^\T\bm\gamma(\cdot)]_+ = 0$ (or, equivalently, $[\vw^\T\bm\gamma(\cdot)]_+ \neq 0$ implies $\eta(\vw)>0$).
\end{enumerate} 
Then, given any admissible weighting function $\eta$, we define the \emph{generalized weight decay penalty} applied to any parameter vector $\theta = ((a_i,\vw_i))_{i=1}^W$ by
\begin{equation}\label{eq:param_space_reg}
    R(\theta) = \frac{1}{2}\sum_{i=1}^W\left(|a_i|^2 + \eta(\vw_i)^2 \right).
\end{equation}
For example, if $\eta(\vw) = \|\vw\|_2$ we recover standard weight decay regularization. For the sampling theory given in \Cref{sec:mainthm}, we consider alternative weighting functions $\eta(\vw)$ defined in terms of penalties applied to the corresponding hidden-layer ReLU unit function $[\vw_i^\T\bm\gamma(\cdot)]_+$.

\section{Theory}
\label{sec:Theory}
Our main theoretical insight is to show that minimizers of the non-convex INR training problem \eqref{eq:opt_param_space1} coincide with minimizers of a \emph{convex} optimization problem defined over an infinite-dimensional space of measures. We use this equivalence to prove our main theorems, which identifies a sufficient number of Fourier samples needed to recover a continuous domain image realizable as shallow INR. Proofs of all results in this section are given in the Appendix.

\subsection{Generalized Weight Decay Induces Unit Sparsity}
Given an admissible weighting function $\eta$, and INR training width $W > 0$, the main optimization problem we consider in this work is
\begin{equation}\label{eq:opt_param_space}
    \min_{\theta \in \Theta_W}  \frac{1}{2}\sum_{i=1}^W\left(|a_i|^2 + \eta(\vw_i)^2\right)~~\mathrm{s.t.}~~\gF_\Omega f_\theta = \vy. \tag{\ensuremath{P_{\theta,W}}}
\end{equation}
First, we show that \eqref{eq:opt_param_space} is equivalent to optimizing a modified cost function over a restricted set of parameter vectors. In particular, let $\Theta'_{W} \subset \Theta_W$ denote the subset of parameter vectors $\theta = \left((a_i,\vw_i)\right)_{i=1}^W$ that satisfy the normalization constraint $\|\vw_i\|_2 = 1$ for all $i\in[W]$. Then we have the following equivalence:
\begin{myLem}\label{lem:ell2_to_ell1}
Let $\eta$ be any admissible weighting function. Then function space minimizers of \eqref{eq:opt_param_space}
coincide with function space minimizers of
\begin{equation}\label{eq:opt_finite_width_R}
\min_{\substack{\theta \in \Theta_{W}'}} \sum_{i=1}^W |a_i|\eta(\vw_i)~~\mathrm{s.t.}~~\gF_\Omega f_\theta = \vy.\tag{\ensuremath{P_{\theta,W}'}}
\end{equation}
\end{myLem}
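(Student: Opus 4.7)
The approach rests on two elementary facts. First, the ReLU and the admissible weighting $\eta$ are both positively $1$-homogeneous, so for any $\alpha > 0$ the rescaling $(a_i,\vw_i) \mapsto (a_i/\alpha, \alpha \vw_i)$ leaves both the neuron output $a_i[\vw_i^\T \bm\gamma(\vx)]_+$ and the product $|a_i|\eta(\vw_i)$ unchanged, while the sum $|a_i|^2 + \eta(\vw_i)^2$ transforms nontrivially. Second, AM--GM gives $\tfrac{1}{2}(|a_i|^2 + \eta(\vw_i)^2) \geq |a_i|\eta(\vw_i)$ with equality iff $|a_i| = \eta(\vw_i)$. I will also use property (iii) to handle degenerate neurons: the explicit unit vector $\vw_0 = -\ve_1$ satisfies $\vw_0^\T \bm\gamma(\vx) \equiv -1 \leq 0$ (since the first entry of $\bm\gamma(\vx)$ is identically $1$), so the corresponding ReLU unit vanishes and (iii) forces $\eta(\vw_0) = 0$, yielding a unit-norm placeholder for any trivial neuron.

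The plan is to show the optimal values $V$ of \eqref{eq:opt_param_space} and $V'$ of \eqref{eq:opt_finite_width_R} coincide, and then transfer minimizers. For $V' \leq V$, given any feasible $\theta \in \Theta_W$, I normalize each nontrivial neuron by setting $\tilde\vw_i = \vw_i/\|\vw_i\|_2$ and $\tilde a_i = a_i \|\vw_i\|_2$, and replace any neuron with $\vw_i = 0$ by $(0, \vw_0)$. The resulting $\tilde\theta \in \Theta_W'$ satisfies $f_{\tilde\theta} = f_\theta$, hence is feasible for \eqref{eq:opt_finite_width_R}. Using $1$-homogeneity of $\eta$ (so $\eta(\tilde\vw_i) = \eta(\vw_i)/\|\vw_i\|_2$), invariance of the product, and AM--GM termwise,
\begin{equation*}
\sum_{i=1}^W |\tilde a_i|\eta(\tilde\vw_i) \;=\; \sum_{i=1}^W |a_i|\eta(\vw_i) \;\leq\; \tfrac{1}{2}\sum_{i=1}^W\bigl(|a_i|^2 + \eta(\vw_i)^2\bigr).
\end{equation*}

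For the reverse inequality $V \leq V'$, given any feasible $\theta' \in \Theta_W'$, for each index with $a_i' \neq 0$ and $\eta(\vw_i') > 0$ I apply the rescaling with $\alpha_i = \sqrt{|a_i'|/\eta(\vw_i')}$ so that the rescaled weights satisfy $|a_i| = \eta(\vw_i) = \sqrt{|a_i'|\eta(\vw_i')}$, realizing equality in AM--GM on that term. Any remaining index has $a_i' = 0$ or $\eta(\vw_i') = 0$, in which case (invoking (iii) in the latter case) the neuron contributes zero to both $f_{\theta'}$ and to the \eqref{eq:opt_finite_width_R}-objective; I replace such indices by $(0, \vw_0)$, which preserves $f_\theta$ and contributes zero to the \eqref{eq:opt_param_space}-objective. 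The resulting $\theta \in \Theta_W$ then satisfies $f_\theta = f_{\theta'}$ and $\tfrac{1}{2}\sum_i (|a_i|^2 + \eta(\vw_i)^2) = \sum_i |a_i'|\eta(\vw_i')$. Combining yields $V = V'$, and applying either construction to a minimizer of one problem yields a minimizer of the other realizing the same function, so function-space minimizers coincide. The only real technical obstacle is the bookkeeping for degenerate neurons in both directions, which is neatly resolved by the explicit placeholder $\vw_0 = -\ve_1$.
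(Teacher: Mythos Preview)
Your overall strategy---AM--GM combined with the positive $1$-homogeneity rescaling---is exactly the paper's approach, and both directions are organized correctly. However, there is a genuine error in your reading of admissibility condition (iii). You claim that since $[\vw_0^\T\bm\gamma(\cdot)]_+ \equiv 0$ for $\vw_0 = -\ve_1$, condition (iii) forces $\eta(\vw_0)=0$. But (iii) is the implication $\eta(\vw)=0 \Rightarrow [\vw^\T\bm\gamma(\cdot)]_+ = 0$, not its converse. The standard weight decay weighting $\eta(\vw)=\|\vw\|_2$ is admissible (it satisfies (i)--(iii)) yet has $\eta(-\ve_1)=1\neq 0$, so your placeholder does not in general have zero $\eta$-value.

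This matters only in your $V\leq V'$ direction: when you replace a degenerate index by $(0,\vw_0)$, its contribution to the \eqref{eq:opt_param_space}-objective is $\tfrac12\eta(\vw_0)^2$, which need not vanish, so you have not shown $R(\theta)=\sum_i |a_i'|\eta(\vw_i')$. The fix is immediate: in that direction you are building $\theta\in\Theta_W$, where there is no unit-norm constraint, so you may replace degenerate indices by $(0,\bm 0)$ instead; then $\eta(\bm 0)=0$ by $1$-homogeneity and the contribution is zero. This is precisely what the paper does. Your $V'\leq V$ direction is unaffected, since there the placeholder enters the product objective as $0\cdot\eta(\vw_0)=0$ regardless of $\eta(\vw_0)$.
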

The proof, given in \Cref{sec:app:proof_ell2_to_ell1}, follows by an argument similar to the proof of \cite[Lemma 1]{savarese2019infinite}, and relies on the positive 1-homogeneity of the ReLU activation and of the weighting function $\eta$. This result shows minimizing a generalized weight decay penalty implicitly minimizes a weighted $\ell^1$-norm on the outer-layer weights $a_i$ subject to a normalization constraint on the inner-layer weight vectors $\vw_i$.  Due to the sparsity-promoting property of the $\ell^1$-norm, intuitively this result shows that minimizing generalized weight decay penalty should promote functions that are realizable by a ``sparse'' single-hidden layer INR, i.e., an INR with few active units.

\subsection{Infinite-width Convexification}

One technical hurdle to characterizing minimizers of \eqref{eq:opt_finite_width_R} is that set of parameters $\theta \in \Theta_{W}'$ satisfying the constraint $\gF_\Omega f_\theta = \vy$ is non-convex. To sidestep this issue, we show how to reformulate \eqref{eq:opt_finite_width_R} as a convex optimization problem posed over a space of measures that represents the limit as the hidden-layer width $W\rightarrow \infty$. We note that similar convex formulations for single hidden-layer neural networks have been explored in \cite{bengio2005convex,bach2017breaking,ergen2021convex}.

First, fix any set of normalized INR parameters $\theta = \left((a_i,\vw_i)\right)_{i=1}^W\in \Theta_{W}'$. By linearity of $\gF_\Omega$ we have
\begin{equation}
\gF_\Omega f_\theta = \sum_{i=1}^W a_i \gF_\Omega[\vw_i^\T\bm\gamma(\cdot)]_+.
\end{equation}
Now we show the sum on the right above is expressible as the integration of a function against a discrete measure, i.e., a linear combination of Dirac deltas. In particular, given any admissible weight function $\eta(\vw)$, define the set
\begin{equation}
    \Peta := \{\vw \in \R^D : \|\vw\|_2 =1, \eta(\vw) > 0\},
\end{equation} 
which is an open subset of the $D-1$ dimensional unit sphere, $\mathbb{S}^{D-1} \subset \R^D$. Also, for any signed measure $\mu$ defined over $\Peta$, consider the linear operator $\gK_\Omega$ defined by
\begin{equation}
\gK_\Omega \mu := \int_{\Peta} \gF_\Omega[\vw^\T\bm\gamma(\cdot)]_+ d\mu(\vw).
\end{equation}
Then, for the measure $\mu^* = \sum_{i=1}^W a_i \delta_{\vw_i}$ where $\delta_\vw$ denotes a Dirac delta measure centered at $\vw \in \Peta$, we have $ \gK_\Omega \mu^* = \gF_\Omega f_\theta$. This shows the constraint $\gF_\Omega f_\theta = \vy$ in \eqref{eq:opt_finite_width_R} is equivalent to $\gK_\Omega \mu^* = \vy$.

Next, we define the $\eta$-weighted total variation norm $\|\mu\|_{TV,\eta}$ of any signed measure $\mu$ over $\Peta$ by\footnote{By continuity of $\eta$, there exists a constant $B$ such that $\eta(\vw) \leq B$ for all $\|\vw\|=1$, which ensures the integral above is well-defined.}
\[
\|\mu\|_{TV,\eta} := \int_{\Peta} \eta(\vw) d|\mu|(\vw),
\]
where $|\mu|$ is the total variation measure associated with $\mu$. For example, we have $|\mu^*| = \sum_{i=1}^W |a_i|\delta_{\vw_i}$, and so
$\|\mu^*\|_{TV,\eta} = \sum_{i=1}^W |a_i|\eta(\vw_i)$,
which coincides with the objective in \eqref{eq:opt_finite_width_R}. 

Therefore, letting $\gM_W(\Peta)$ denote the set of signed measures defined over $\Peta$ expressible as a weighted linear combination of at most $W$ Diracs, we see that \eqref{eq:opt_finite_width_R} is equivalent to the optimization problem
\begin{equation}\label{eq:opt_discrete_measures}
p^*_W = \min_{\mu \in \gM_W(\Peta)} \|\mu\|_{TV,\eta}~~s.t.~~\gK_\Omega \mu = \vy.\tag{\ensuremath{P_{\mu,W}}}
\end{equation}
In particular, the measure $\mu^* = \sum_{i=1}^W a_i \delta_{\vw_i}$ is a minimizer of \eqref{eq:opt_discrete_measures} if and only if the parameter vector $\theta^* = \left((a_i,\vw_i)\right)_{i=1}^W \in \Theta_{W}'$ is a minimizer of \eqref{eq:opt_finite_width_R}. Also, by \Cref{lem:ell2_to_ell1}, \eqref{eq:opt_finite_width_R} has the same function space minimizers as the original INR training problem \eqref{eq:opt_param_space}. This shows that we can describe all function space minimizers of \eqref{eq:opt_param_space} by identifying the minimizers of \eqref{eq:opt_discrete_measures}.

However, analyzing \eqref{eq:opt_discrete_measures} directly is still challenging since $\gM_W(\Peta)$ is not a closed vector space.
Instead, we pass to the larger Banach space $\gM(\Peta)$ of signed Radon measures defined over $\Peta$ with finite total variation norm, and consider the optimization problem
\begin{equation}\label{eq:opt_func_space}
    p^* = \min_{\mu \in \gM(\Peta)} \|\mu\|_{TV,\eta}~~s.t.~~\gK_\Omega \mu = \vy.\tag{\ensuremath{P_\mu}}
\end{equation}
This can be thought of as the ``infinite-width'' analogue of \eqref{eq:opt_discrete_measures}, or equivalently, \eqref{eq:opt_finite_width_R}. However, unlike these problems, \eqref{eq:opt_func_space} is a \emph{convex} optimization problem, albeit one posed over an infinite-dimensional space of measures.

Problem \eqref{eq:opt_func_space} also has a function space interpretation: for any $\mu \in \gM(\Peta)$, define
\[
f_{\mu}(\vx) := \int_{\Peta} [\vw^\T\bm\gamma(\vx)]_+ d\mu(\vw).
\]
Then $f_\mu$ can be thought of as ``infinite-width'' shallow INR, for which $\gF_\Omega f_\mu = \gK_\Omega \mu$. Similar to before, we call $f$ a function space minimizer of \eqref{eq:opt_func_space}, if $f = f_{\mu^*}$ where $\mu^*$ is any minimizer of \eqref{eq:opt_func_space}.

The existence of sparse minimizers of \eqref{eq:opt_func_space} has direct implications for \eqref{eq:opt_discrete_measures}. First, since $\gM_W(\Peta) \subset \gM(\Peta)$ we always have $p^* \leq p_W^*$. Additionally, if a minimizer $\mu^*$ of \eqref{eq:opt_func_space} is a finite linear combination of at most $W$ Diracs, i.e., $\mu^* \in \gM_W(\Peta)$, then $\mu^*$ must also be a minimizer of  \eqref{eq:opt_discrete_measures}, and so $p^* = p_W^*$. Our next result shows that, for sufficiently large training widths $W$, such a minimizer always exists assuming the set of frequencies used to define the Fourier features embedding contains a base set of frequencies. 

\begin{myLem}\label{lem:low_width_minimizers}
Suppose the set of frequencies $\Omega_0 \subset \Zd$ used to define the Fourier features embedding $\bm\gamma$ given in \eqref{eq:ffmap} is such that either $\ve_i \in \Omega_0$ or $-\ve_i \in \Omega_0$ for all $i=1,...,d$ where $\ve_i$ is the $i$th elementary basis vector. Assume $W \geq |\Omega|$. Then, given any $\vy \in \mathrm{Im}(\gF_\Omega)$, \eqref{eq:opt_func_space} has a minimizer $\mu^* \in \gM_W(\Peta)$.
\end{myLem}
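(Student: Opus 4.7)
The plan is to establish the result in two stages: first show that $(P_\mu)$ admits at least one minimizer, then apply a representer-theorem argument to extract one supported on at most $|\Omega|$ atoms, which lies in $\mathcal{M}_W(\Peta)$ since $W \geq |\Omega|$.

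For existence, the first step is to verify that the feasible set $\{\mu \in \mathcal{M}(\Peta) : \mathcal{K}_\Omega \mu = \vy\}$ is non-empty. This is where the hypothesis $\pm \ve_i \in \Omega_0$ enters: it guarantees that the Fourier features embedding $\bm\gamma$ is rich enough that every $\vy \in \mathrm{Im}(\mathcal{F}_\Omega)$ is realized by some finite discrete measure. Concretely, any trigonometric polynomial with frequency support in $\Omega$ can be written as a sum of products of one-dimensional trigonometric polynomials in the coordinate directions $\ve_i$, and each such product can in turn be expressed as a finite combination of rectified trigonometric polynomials $[\vw^\T \bm\gamma(\cdot)]_+$ using product-to-sum identities together with $t = [t]_+ - [-t]_+$. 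Once feasibility is established, existence of a minimizer follows from standard arguments: reparametrizing via $\tilde\mu = \eta \cdot \mu$ converts the $\eta$-weighted TV norm into an ordinary TV norm, and then weak-$*$ compactness of sublevel sets (Banach--Alaoglu), weak-$*$ continuity of $\mathcal{K}_\Omega$ (via continuity and boundedness of $\vw \mapsto \mathcal{F}_\Omega[\vw^\T\bm\gamma(\cdot)]_+$ on $\Peta$), and weak-$*$ lower semicontinuity of the reparametrized objective together yield a minimizer.

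For the sparsity bound, I would use a Krein--Milman / Fisher--Jerome-type argument. The solution set $\mathcal{S}$ of $(P_\mu)$ is convex, non-empty, and weak-$*$ compact, so it has an extreme point $\mu^\star$. The linear constraint $\mathcal{K}_\Omega \mu = \vy$ imposes $|\Omega|$ real conditions on $\mu$ after accounting for the conjugate symmetry forced by realness. If $\mathrm{supp}(\mu^\star)$ contained more than $|\Omega|$ atoms, a finite-dimensional kernel count would produce a nonzero signed measure $\nu$ supported on $\mathrm{supp}(\mu^\star)$ with $\mathcal{K}_\Omega \nu = 0$, and for sufficiently small $\varepsilon > 0$ the perturbations $\mu^\star \pm \varepsilon \nu$ would be distinct elements of $\mathcal{S}$ whose midpoint is $\mu^\star$, contradicting extremality. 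Hence $|\mathrm{supp}(\mu^\star)| \leq |\Omega| \leq W$, so $\mu^\star \in \mathcal{M}_W(\Peta)$.

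The main obstacle is the non-compactness of $\Peta$: since $\eta$ may tend to zero along the relative boundary of $\Peta$ in $\mathbb{S}^{D-1}$, the weighted TV norm does not directly control the ordinary TV norm, and weak-$*$ limits of minimizing sequences might a priori lose mass to this boundary. The reparametrization $\tilde\mu = \eta \cdot \mu$ is the key technical device for circumventing this, since it transports the problem into a standard TV-minimization over the full sphere for which compactness and extreme-point arguments apply cleanly. Verifying feasibility in discrete-measure form is the other delicate ingredient, and relies essentially on the basis-frequency assumption imposed on $\Omega_0$.
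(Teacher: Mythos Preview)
Your high-level plan---existence via compactness, then sparsity via an extreme-point argument---is correct and matches the paper's route, which packages both steps into a single invocation of an abstract representer theorem. The paper handles the non-compactness of $\Peta$ not by your reparametrization $\tilde\mu=\eta\cdot\mu$ but by identifying $\gM(\Peta)$ with the dual of $\gC_0(\Peta)$ and showing that $\gK_\Omega$ is weak-$*$ continuous as the adjoint of a bounded operator $\gL_\Omega:\Csym(\Omega)\to \gC_0(\Peta)$.

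The genuine gap is your feasibility step. The identities you invoke do not deliver what you claim: $t = [t]_+ - [-t]_+$ applied to $\tau = \vw^\T\bm\gamma(\cdot)$ recovers only trigonometric polynomials with frequency support in $\Omega_0^*=\Omega_0\cup(-\Omega_0)\cup\{\bm 0\}$, not in the larger set $\Omega$; and product-to-sum sends $\cos(2\pi x_1)\cos(2\pi x_2)$ to cosines at frequencies $(1,\pm1)$, which again lie outside $\Omega_0^*$ and so cannot be written as $\vw^\T\bm\gamma(\cdot)$. There is no finite algebraic identity expressing a general element of $\TP(\Omega)$ as a linear combination of the functions $[\vw^\T\bm\gamma(\cdot)]_+$. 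The paper instead proves surjectivity of $\gK_\Omega$ by showing the pre-adjoint $\gL_\Omega$ is injective, and \emph{this} is where the hypothesis $\pm\ve_i\in\Omega_0$ is actually used: it guarantees that $\tau_0(\vx)=\tfrac12+\tfrac{1}{2d}\sum_j\cos(2\pi x_j)$ lies in the range of $\bm\gamma$. Since $\tau_0$ takes values in $[0,1]$ with a unique global maximum at $\bm 0$, a translated and offset version $\tau_1(\vx)=\tau_0(\vx-\vx_0)-\alpha$ yields a rectified bump $[\tau_1]_+$ supported in an arbitrarily small ball around $\vx_0$. Given any nonzero $\vz\in\Csym(\Omega)$, placing this ball inside the positivity set of the trigonometric polynomial $\Phi=\gF_\Omega^*\vz$ gives $[\gL_\Omega\vz](\vw_1')=\|\vw_1\|_2^{-1}\int_{\Td} [\tau_1]_+\Phi\,d\vx>0$, proving injectivity. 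This localized-bump construction, not an algebraic identity, is the mechanism your feasibility argument is missing.
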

The proof of \Cref{lem:low_width_minimizers}, which makes use of the abstract ``representer theorem'' of \cite{bredies2020sparsity}, is given in \Cref{sec:app:proof_lem2}.
This result implies that when $W \geq |\Omega|$ the minimums of \eqref{eq:opt_func_space} and \eqref{eq:opt_discrete_measures} must coincide, i.e.,  $p^* = p_W^*$, and a sparse measure $\mu^* \in \gM_W(\Peta)$ is a minimizer of \eqref{eq:opt_func_space} if and only if it is a minimizer of \eqref{eq:opt_discrete_measures}. Another immediate implication of \Cref{lem:low_width_minimizers} is that $W \geq |\Omega|$ is sufficient to guarantee the existence of feasible solutions for \eqref{eq:opt_param_space}, i.e., given arbitrary Fourier data $\vy \in \text{Im}(\F_\Omega)$ there is always a shallow INR $f_\theta$ of width $W\leq |\Omega|$ such that $\gF_\Omega f_\theta = \vy$.

\subsection{Exact Recovery of Shallow INRs}\label{sec:mainthm}
Now we give our main results, which identify a sufficient number of Fourier samples for which an image realizable as a shallow INR is exactly recoverable as the unique function space minimizer of the INR training problem \eqref{eq:opt_param_space}. 

Our first result focuses on recovery of a function $f$ realizable as \emph{width-1} shallow INR.

\begin{myThm}[Width-1 INR, general Fourier features, $d\geq 2$]\label{thm:main}
Consider the shallow INR architecture given in \eqref{eq:ftheta} with input dimension $d\geq 2$, where the set of frequencies $\Omega_0 \subset \Zd$ used to define the Fourier features embedding $\bm\gamma:\R^d\rightarrow\R^D$ given in \eqref{eq:ffmap} is such that $\|\vk\|_\infty \leq K_0$ for all $\vk \in \Omega_0$.
Let $f = a_1[\vw_1^\T\bm\gamma(\cdot)]_+$ for some $a_1 \in \R$ and $\vw_1 \in \R^D$. Suppose $\Omega\supseteq \{\vk \in \Zd : \|\vk\|_\infty \leq 3 K_0\}$, and let $\vy = \gF_\Omega f$. Then, for all training widths $W\geq 1$, $f$ is the unique function space minimizer of the INR training problem \eqref{eq:opt_param_space} with weighting function $\eta(\vw) = \|\gF_\Omega[\vw^\T\bm\gamma(\cdot)]_+\|_2$.
\end{myThm}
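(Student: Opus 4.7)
The plan is to reformulate \eqref{eq:opt_param_space} as a dual certificate problem for the convex measure-space program \eqref{eq:opt_func_space}. By \Cref{lem:ell2_to_ell1} and the measure reformulation, showing that $f$ is the unique function-space minimizer reduces to showing that $\mu^\star = a_1' \delta_{\vw_1'}$, with $\vw_1' := \vw_1/\|\vw_1\|_2$ and $a_1' := a_1\|\vw_1\|_2$, is the unique minimizer of \eqref{eq:opt_func_space}. Standard duality theory for TV-regularized measure problems reduces this in turn to constructing a dual certificate $q(\vw) = \langle \gF_\Omega[\vw^\T\bm\gamma(\cdot)]_+, \vp\rangle$ for some $\vp$ satisfying $q(\vw_1') = \sign(a_1')\,\eta(\vw_1')$ and $|q(\vw)| < \eta(\vw)$ for every other $\vw \in \Peta$.

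Given the specific form $\eta(\vw) = \|\gF_\Omega[\vw^\T\bm\gamma(\cdot)]_+\|_2$, the natural certificate is
\begin{equation*}
\vp = \sign(a_1')\,\frac{\gF_\Omega[(\vw_1')^\T\bm\gamma(\cdot)]_+}{\|\gF_\Omega[(\vw_1')^\T\bm\gamma(\cdot)]_+\|_2},
\end{equation*}
for which Cauchy--Schwarz immediately yields $|q(\vw)| \le \eta(\vw)$, with equality at $\vw=\vw_1'$ and equality elsewhere \emph{only if} $\gF_\Omega[\vw^\T\bm\gamma(\cdot)]_+$ is a scalar multiple of $\gF_\Omega[(\vw_1')^\T\bm\gamma(\cdot)]_+$. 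The theorem therefore reduces to the following algebraic injectivity claim: for any unit-norm $\vw$ with $[\vw^\T\bm\gamma(\cdot)]_+ \not\equiv 0$, the relation $\gF_\Omega[\vw^\T\bm\gamma(\cdot)]_+ = c\,\gF_\Omega[(\vw_1')^\T\bm\gamma(\cdot)]_+$ with $c\ne 0$ forces $\vw = \vw_1'$ and $c = 1$.

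To establish this, write $\tau(\vx)=\vw^\T\bm\gamma(\vx)$, $\tau_1(\vx)=(\vw_1')^\T\bm\gamma(\vx)$, and $h = [\tau]_+ - c[\tau_1]_+$, so that $\widehat h[\vk] = 0$ for every $\|\vk\|_\infty \le 3K_0$. Using the decomposition $[\tau]_+ = \tfrac{1}{2}(\tau + |\tau|)$ and the fact that $\tau,\tau_1$ have frequency support in $\{\|\vk\|_\infty \le K_0\}$, the vanishing at the mid-band frequencies $K_0 < \|\vk\|_\infty \le 3K_0$ reduces to $\widehat{|\tau|}[\vk] = c\,\widehat{|\tau_1|}[\vk]$ on that band. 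The core idea is then to exploit the algebraic identity $|\tau|^2 = \tau^2$, a trigonometric polynomial of degree $2K_0$, together with multiplication by the low-degree factors $\tau$ and $\tau_1$ via the identity $\tau[\tau]_+ = [\tau]_+^2$, in order to bootstrap the vanishing of Fourier content from $\|\vk\|_\infty \le 3K_0$ out to all frequencies, forcing $h \equiv 0$ on $\torus^d$. The bandwidth $3K_0$ is precisely what is needed to absorb two convolutions by degree-$K_0$ trigonometric polynomials. This algebraic step is the main obstacle and draws on the type of zero-set reasoning for multivariate trigonometric polynomials developed in the Ongie--Jacob line of work referenced in \Cref{sec:relatedwork}.

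Once $h\equiv 0$, the level sets $\{\tau>0\}$ and $\{\tau_1>0\}$ coincide and $\tau = c\tau_1$ holds on their common open support; analytic continuation extends this identity to all of $\torus^d$, and linear independence of the components of $\bm\gamma$ then gives $\vw = c\vw_1'$. The constraint $\|\vw\|_2 = \|\vw_1'\|_2 = 1$ forces $c = \pm 1$, and $c = -1$ is excluded because it would imply $[\tau]_+ = -[\tau_1]_+ \le 0$, hence $[\tau_1]_+ \equiv 0$, contradicting $\vw_1' \in \Peta$. This yields the strict dual certificate inequality everywhere off $\vw_1'$ and thereby the desired uniqueness.
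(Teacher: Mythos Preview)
Your high-level strategy matches the paper exactly: pass to the measure problem \eqref{eq:opt_func_space}, build the dual certificate $\vp = \sign(a_1')\,\gF_\Omega[(\vw_1')^\T\bm\gamma(\cdot)]_+/\eta(\vw_1')$, invoke Cauchy--Schwarz, and reduce everything to an injectivity statement for $\vw \mapsto \gF_\Omega[\vw^\T\bm\gamma(\cdot)]_+$ on unit vectors. The endgame (positivity of the DC coefficient to force $c>0$, analytic continuation, and ruling out $c=-1$) is also essentially the paper's argument.

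The gap is in the injectivity step itself. Your proposed route via $|\tau|^2=\tau^2$ and $\tau[\tau]_+ = [\tau]_+^2$ is not worked out, and I do not see how to make it work as stated: the first identity gives a \emph{quadratic} constraint on $\widehat{|\tau|}$, and the second is quadratic in $\widehat{[\tau]_+}$, so neither yields a clean linear recursion that propagates vanishing of $\widehat h$ from $\|\vk\|_\infty\le 3K_0$ outward. There is no evident ``bootstrap'' mechanism here because the right-hand side $[\tau]_+^2$ involves all frequencies of $[\tau]_+$ simultaneously. The paper instead uses the \emph{linear} (in $[\tau]_+$) differential identity
\[
\tau\,\nabla[\tau]_+ \;-\; [\tau]_+\,\nabla\tau \;=\; 0 \quad\text{a.e.},
\]
which in Fourier domain becomes a structured linear system in the coefficients $\hat\tau[\bm\ell]$, $\bm\ell\in\Omega_0^*$. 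Concretely, from $\vc=\gF_\Omega[\tau]_+$ one builds an operator $\gQ_\vc$ on $\TP(\Omega_0^*)$ whose kernel is shown to be exactly $\mathrm{span}\{\tau\}$; the bandwidth hypothesis $\Omega\supseteq\{\|\vk\|_\infty\le 3K_0\}$ is used precisely to ensure that enough Fourier equations of this annihilation relation are available (it guarantees $\Omega|\Omega_0^*\supseteq 2\Omega_0^*$, so one can test against $\bm\rho=\varphi\nabla\tau-\tau\nabla\varphi$ and force $\varphi/\tau$ constant on $\{\tau>0\}$). This one-dimensional kernel statement is what delivers injectivity. Your identities do not obviously produce an analogue of this linear annihilation structure, so the ``main obstacle'' you flag is, as written, still open.
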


See \Cref{sec:app:thm1_proof} for the proof. This result shows that $O(K_0^d)$ low-pass Fourier samples are sufficient to uniquely recover a width-1 shallow INR with a Fourier features embedding having a maximum frequency $K_0$ (in an $\ell^\infty$ sense). In particular, if the Fourier sampling set is $\Omega = \{\vk \in \mathbb{Z}^d : \|\vk\|_\infty \leq K\}$, then $K \geq 3K_0$ is sufficient to ensure unique recovery.

Our second result focuses on the case of functions $f$ realizable as a width-$s$ shallow INR. However, we are unable to directly extend \Cref{thm:main} to this setting in all generality. Instead, we restrict our attention to input dimension $d=2$ (i.e., 2D images), and we consider an INR architecture defined in terms of a restricted set of Fourier features. In particular, we assume the Fourier features embedding $\bm\gamma:\mathbb{T}^2\rightarrow\R^4$ has the form
\begin{equation}\label{eq:gamma_restricted}
\bm\gamma(\vx) = \sqrt{2}[\cos(2\pi x_1),\sin(2\pi x_1), \cos(2\pi x_2),\sin(2\pi x_2)]^\T ~\text{for all}~\vx = [x_1,x_2]^\T \in \torus^2.
\end{equation}
Note here that the constant feature $1$ is omitted from $\bm\gamma$. 
Also, we assume that $f$ is a \emph{positive linear combination} of rectified trigonometric polynomials of the form $[\vw^\top \bm\gamma(\cdot)]_+$. Finally, for technical reasons, we will assume that none of the weight vectors $\vw\in\R^4$ belong to the set
\begin{equation}\label{eq:Vset}
V = \{\vw \in \R^4: w_1 = w_2 = 0,~\text{or}~w_3=w_4 = 0,~\text{or}~w_1^2 + w_2^2 = w_3^2+w_4^2\}.
\end{equation}
Observe that $V$ is a measure-zero subset of $\R^4$, i.e., our result still holds for almost all weight vectors $\vw \in \R^4$.

\begin{myThm}[Width-$s$ INR, restricted Fourier features, $d=2$]\label{thm:main2}
Consider the shallow INR architecture given in \eqref{eq:ftheta} with input dimension $d=2$ and restricted Fourier features embedding $\bm\gamma$ given in \eqref{eq:gamma_restricted}.
Let $f = \sum_{i=1}^s a_i[\vw_i^\T\bm\gamma(\cdot)]_+$ where $a_i > 0$ and $\vw_i \in \R^4/V$ for all $i\in[s]$. Suppose $\Omega\supseteq \{\vk \in \mathbb{Z}^2 : \|\vk\|_1 \leq 2s\}$, and let $\vy = \gF_\Omega f$. Then, for all training widths $W\geq s$, $f$ is the unique function space minimizer of the INR training problem \eqref{eq:opt_param_space} with weighting function $\eta(\vw) = \int_{\torus^2}[\vw^\T\bm\gamma(\vx)]_+ d\vx$.
\end{myThm}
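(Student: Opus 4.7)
The plan is to reduce the non-convex parameter-space problem to the convex measure-space problem \eqref{eq:opt_func_space} via \Cref{lem:ell2_to_ell1} and the convexification preceding \Cref{lem:low_width_minimizers}, exploit the special choice of $\eta$ and the DC Fourier constraint to restrict to non-negative measures, and then identify the unique function-space minimizer via an annihilator argument tailored to two-variable trigonometric polynomials. For any training width $W\geq s$, the normalized ground-truth measure $\mu^* := \sum_{i=1}^s (a_i\|\vw_i\|_2)\,\delta_{\vw_i/\|\vw_i\|_2}$ lies in $\gM_s(\Peta) \subseteq \gM_W(\Peta) \subseteq \gM(\Peta)$, so it suffices to show $\mu^*$ is the unique function-space minimizer of the infinite-dimensional problem. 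The key observation is that $\eta(\vw) = \int_{\torus^2}[\vw^\T\bm\gamma]_+\,d\vx = \widehat{g_\vw}[\mathbf 0]$ coincides with the DC Fourier coefficient of $g_\vw := [\vw^\T\bm\gamma]_+$, so feasibility at $\vk = \mathbf 0 \in \Omega$ forces $\int\eta\,d\mu = \widehat{f_\mu}[\mathbf 0] = \widehat{f}[\mathbf 0] = \sum_i a_i\eta(\vw_i) = \|\mu^*\|_{TV,\eta}$ for every feasible $\mu$. Writing $\mu = \mu_+ - \mu_-$ in Jordan decomposition yields $\|\mu\|_{TV,\eta} = \|\mu^*\|_{TV,\eta} + 2\int\eta\,d\mu_-$, and since $\eta > 0$ on $\Peta$, optimality forces $\mu_- \equiv 0$. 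The remaining task is: given any non-negative $\mu$ with $\gK_\Omega\mu = \vy$, show $f_\mu = f$.

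For the annihilator step, define $P(\vx) := \prod_{i=1}^s \vw_i^\T\bm\gamma(\vx)$, a real-valued trigonometric polynomial of $\ell^1$-degree $s$ whose zero set on $\torus^2$ equals $\bigcup_{i=1}^s \partial R_i$, where $R_i := \{\vw_i^\T\bm\gamma \geq 0\}$ is the support of $g_{\vw_i}$. Each $\vw^\T\bm\gamma$ is a Laplace eigenfunction on $\torus^2$ with eigenvalue $-4\pi^2$, so the Helmholtz-type operator $L := -(\Delta + 4\pi^2)$ annihilates it; a standard distributional calculation then gives
\[
L[\vw^\T\bm\gamma]_+ = \|\nabla(\vw^\T\bm\gamma)\|\,\delta_{\partial R_\vw},
\]
a non-negative line measure on the decision boundary. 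In particular, $P\cdot Lf \equiv 0$ as a distribution (since $P|_{\partial R_i} = 0$ for each $i$), while $P\cdot Lf_\mu$ is a signed measure supported on $\bigcup_{\vw\in\mathrm{supp}(\mu)} \partial R_\vw$. Combining $\gF_\Omega(f_\mu - f) = 0$ on $\Omega \supseteq \{\|\vk\|_1\leq 2s\}$ with $\mathrm{supp}(\hat P) \subseteq \{\|\vk\|_1\leq s\}$, a convolution degree count yields $\widehat{P\cdot Lf_\mu}[\vk] = 0$ for every $\vk$ with $\|\vk\|_1 \leq s$.

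The main obstacle is upgrading this partial Fourier vanishing to the full distributional identity $P\cdot Lf_\mu \equiv 0$. This is a two-dimensional Prony-type identifiability problem for a measure supported on a union of zero curves of $\ell^1$-degree-$1$ bivariate trigonometric polynomials, and is handled using the zero-set algebra for bivariate trigonometric polynomials developed in \cite{ongie2016off,ongie2017convex}; the genericity condition $\vw_i \notin V$ precisely excludes boundary curves that are horizontal lines (when $w_1 = w_2 = 0$), vertical lines (when $w_3 = w_4 = 0$), or reducible ``checker'' curves (when $w_1^2 + w_2^2 = w_3^2 + w_4^2$), all of which would violate the required algebraic non-degeneracy. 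Once $P\cdot Lf_\mu \equiv 0$ is established, $\mathrm{supp}(Lf_\mu) \subseteq \{P = 0\} = \bigcup_i \partial R_i$ forces $\mathrm{supp}(\mu) \subseteq \{\pm \vw_i/\|\vw_i\|_2\}_{i=1}^s$; a linear independence argument for $\{\gF_\Omega g_\vw\}$ on this finite set, together with the matching Fourier data, then gives $\mu = \mu^*$ and thus $f_\mu = f$. The restrictions to $d=2$ and to the restricted Fourier features \eqref{eq:gamma_restricted} are essential because the cited zero-set algebra is intrinsically two-dimensional; extensions to higher $d$ would require the analogous higher-dimensional machinery explored in \cite{zou2021recovery}.
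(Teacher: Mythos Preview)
Your primal-side reduction is elegant and correct up to a point: the observation that the DC constraint forces $\int_{\Peta}\eta\,d\mu=\|\mu^*\|_{TV,\eta}$ for every feasible $\mu$, and hence that any optimizer must be a \emph{non-negative} measure, is exactly right. This is the primal counterpart of the paper's choice of dual variable $\varphi=1$, which saturates the constraint $|\Phi(\vw)|\le\eta(\vw)$ everywhere with the correct sign.

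The genuine gap is your step~7. Knowing $\widehat{P\cdot Lf_\mu}[\vk]=0$ for $\|\vk\|_1\le s$ does \emph{not} imply $P\cdot Lf_\mu\equiv 0$: the distribution $P\cdot Lf_\mu$ is a \emph{signed} measure (since $P$ changes sign), and no Prony-type argument applies because its support $\bigcup_{\vw\in\mathrm{supp}(\mu)}Z(\tau_\vw)$ depends on the unknown $\mu$ and is not a priori contained in any fixed algebraic set of bounded degree. The zero-set algebra from \cite{ongie2016off} lets you conclude things \emph{once} you know a trigonometric polynomial vanishes on a curve; it does not let you upgrade partial Fourier vanishing of a measure to full vanishing.

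The fix is immediate once you see it, and it is precisely what the paper does: replace $P$ by $P^2=\prod_{i=1}^s\tau_i^2$, which has $\ell^1$-degree $2s$ and therefore lies in $\TP(\Omega)$. Since $L=-(\Delta+4\pi^2)$ is a Fourier multiplier, $LP^2\in\TP(\Omega)$ as well, so $\langle L(f_\mu-f),P^2\rangle=\langle f_\mu-f,LP^2\rangle=0$ directly from $\gF_\Omega(f_\mu-f)=0$. Combined with $\langle Lf,P^2\rangle=0$ this gives $\langle Lf_\mu,P^2\rangle=0$. Now both $Lf_\mu\ge 0$ (because $\mu\ge 0$, which you already established) and $P^2\ge 0$, so the pairing vanishing forces $\mathrm{supp}(Lf_\mu)\subseteq\{P=0\}$ immediately---no Prony argument needed. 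From there your zero-set algebra step (that $P$ vanishing on $Z(\tau_\vw)$ forces $\tau_\vw$ to divide $P$, hence $\vw=\pm\vw_i$) is correct. One minor overclaim: you conclude $\mu=\mu^*$, but only $f_\mu=f$ follows (and is all that is needed); the paper's \Cref{lem:linind_of_rtps} gives exactly this.

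After this fix your argument is a valid primal-side alternative to the paper's dual-certificate proof. The paper packages the same two ingredients---the constant test function $1$ and the squared annihilator $\rho=\prod_i\tau_i^2$---into the single dual certificate $\varphi=1-\alpha(\Delta+4\pi^2)\rho$ and invokes \Cref{lem:dual_cert}; your approach separates them into ``positivity first, then annihilation,'' which is arguably more transparent but relies on the same machinery.
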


See \Cref{sec:app:thm2_proof} for the proof. This result shows that $O(s^2)$ low-pass Fourier samples are sufficient to uniquely recover a 2D image realized by a width-$s$ shallow INR having restricted Fourier features. In particular, if the Fourier sampling set is $\Omega = \{\vk \in \mathbb{Z}^2 : \|\vk\|_\infty \leq K\}$, then $K \geq 2s$ is sufficient to ensure unique recovery.

The main idea of the proofs of \Cref{thm:main} and \Cref{thm:main2} is to pass to the convex dual of \eqref{eq:opt_func_space}, which reduces to a semi-infinite program, i.e., a finite dimensional linear program with infinitely many constraints. Similar to previous works that have focused on super-resolution of sparse measures \cite{candes2014towards,poon2019multidimensional,eftekhari2021stable}, we prove unique recovery is equivalent to the construction of a particular dual optimal solution known as a \emph{dual certificate}; see \Cref{app:dual} for more details. 

Note that both these results rely on a generalized form of weight decay regularization that differs from standard weight decay: standard weight decay corresponds to a weighting function $\eta(\vw) = \|\vw\|_2$, while in \Cref{thm:main} we assume $\eta(\vw) = \|\gF_\Omega[\vw^\T\bm\gamma(\cdot)]_+\|_2$ and in \Cref{thm:main2} we assume $\eta(\vw) = \int_{\torus^2}[\vw^\T\bm\gamma(\vx)]_+ d\vx$. Ultimately, these generalized weighting functions arise as a consequence of our proof techniques. However, their use as regularizers is still intuitive: they can be seen as (approximately) penalizing the $L^p$-norm ($p=1$ or $2$) contribution made by each unit $[\vw_i^\T\bm\gamma(\cdot)]_+$ to the overall image. This is immediately apparent in the case of $\eta(\vw) = \int_{\torus^2}[\vw^\T\bm\gamma(\vx)]_+ d\vx = \|[\vw^\T\bm\gamma(\cdot)]_+\|_{L^1(\torus^d)}$. And for $\eta(\vw) = \|\gF_\Omega[\vw^\T\bm\gamma(\cdot)]_+\|_2$, by Parseval's Theorem, we have $\eta(\vw) \approx \|[\vw^\T\bm\gamma(\cdot)]_+\|_{L^2(\Td)}$ assuming the low-pass frequency set $\Omega$ is sufficiently large.

Despite strong theoretical guarantees and possible benefits in terms of interpretability, these more general forms of weight decay are more complicated and computationally expensive to implement than standard weight decay, which may limit their practical utility. However, we conjecture that recovery guarantees similar to \Cref{thm:main} and \Cref{thm:main2} can be extended to the simpler case of standard weight decay regularization (i.e., weighting function $\eta(\vw)=\|\vw\|_2$), though doing so may require new proof techniques. This conjecture is verified empirically in the next section. 

\section{Experiments}\label{sec:Experiments}
Here we present a series of experiments to validate our sampling theory and to illustrate the effects of training INRs with (generalized) weight decay regularization in super-resolution recovery of 2D images. First, to validate the sampling guarantees given in Theorems 1 and 2, we analyze the probability of achieving exact recovery of random continuous-domain images realizable by shallow INRs from their low pass Fourier coefficients. Next, we illustrate the performance of shallow INRs in super-resolution recovery of various continuous-domain phantom images; additional experiments investigate the sparsity-inducing effects of (generalized) weight decay regularization. Finally, going beyond the theory developed in this work, we empirically investigate the role of depth in Fourier features INRs, demonstrating that deeper INRs trained with weight decay regularization achieve improved reconstruction on piecewise constant phantom images relative to shallow INRs, and compare the results with other INR architectures besides ReLU networks with Fourier features layers.

\paragraph{Implementation Details} In all our experiments, to approximate the continuous domain Fourier sampling operator $\gF_\Omega$, we implement a discretized version $\tilde{\gF}_\Omega$ using the discrete Fourier transform. In particular, given a continuous domain image $f:[0,1]^2\rightarrow \R$, we define $\tilde{\gF}_\Omega f = \mF_\Omega\mathcal{E}f$ where $\mathcal{E}f$ is a 2D pixel array obtained by evaluating $f$ on an $N \times N$ uniform grid within $[0,1]^2$, and $\mF_\Omega$ is the $N^2\times N^2$ 2D discrete Fourier transform matrix restricted to rows corresponding to the frequency set $\Omega$; we set $N=1024$. Additionally, the weight functions $\eta$ given in \Cref{thm:main} and \Cref{thm:main2} used to define generalized weight decay penalties are implemented via the approximations $\tilde{\eta}(\vw) = \|\tilde{\gF}_\Omega [\vw^\T \bm\gamma(\cdot)]_+\|_2$ and $\tilde{\eta}(\vw) = \frac{1}{N^2}\|\mathcal{E}[\vw^\T \bm\gamma(\cdot)]_+\|_1$, respectively. All experiments were run in the PyTorch numerical computing environment using a single NVIDIA V100 GPU with 32 GB RAM. Additional implementation details are described in the Supplementary Materials \ref{sec:supp:exp_details}. Code to reproduce all experiments is available at \url{https://github.com/gregongie/super_inrs}. 

\subsection{Super-Resolution with Shallow INRs}

\subsubsection{Exact Recovery Experiments}\label{sec:exact_recovery}
To validate the exact recovery guarantees  \Cref{thm:main} and \Cref{thm:main2}, we consider a ``student-teacher'' setup. Specifically, we generate random continuous domain images realizable as a ``teacher'' shallow INR of varying widths $W$ (\Cref{fig:RTP} shows an example for $W=3$), and investigate the maximum sampling frequency $K$ (in an $\ell^\infty$ sense) needed to recover the image by solving the optimization problem \eqref{eq:opt_param_space1} using a ``student'' shallow INR with the same architecture but larger width. As an attempt to solve the equality constrained problem \eqref{eq:opt_param_space1} we apply the Augmented Lagrangian (AL) method (see, e.g., \cite[Ch 7]{nocedal1999numerical}); see \Cref{sec:app:AL} for a detailed algorithmic description.

\begin{figure}[htbp!]
    \centering    
    \includegraphics[width=0.75\columnwidth]{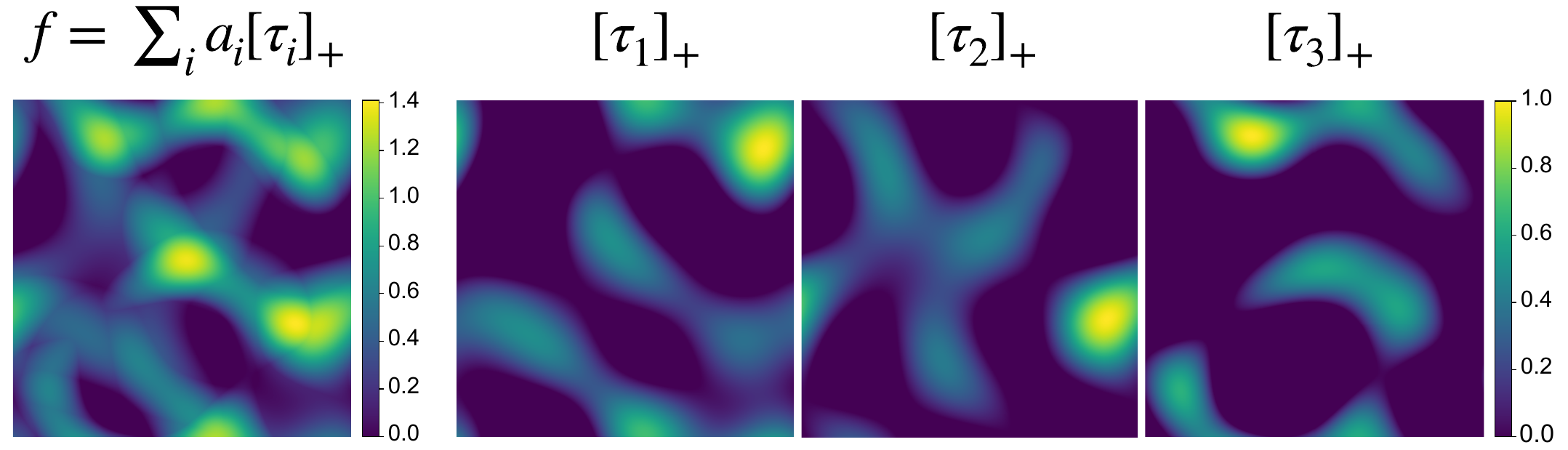}\\[1em]
    \includegraphics[width=0.75\columnwidth]{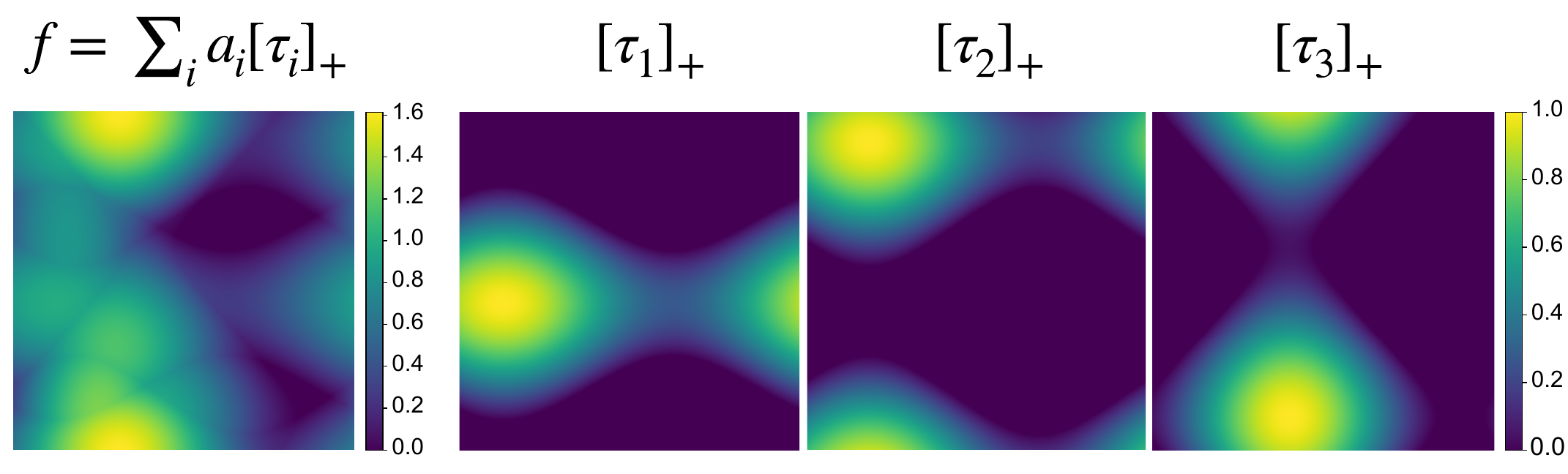}\\[1em]    
    \caption{\footnotesize Example of a random width 3 ``teacher'' INR used in our exact recovery experiments to validate \Cref{thm:main} (top panel) and \Cref{thm:main2} (bottom panel). The image $f$ is a weighted sum of the three randomly generated rectified trigonometric polynomials $[\tau_i]_+ =[\vw_i^\T \bm\gamma(\cdot)]_+$ shown on the right. Note that the two settings consider INRs defined with Fourier features embeddings  $\bm\gamma(\cdot)$ having different frequencies.}
    \label{fig:RTP}
\end{figure}
 
We consider two settings, matching the settings of \Cref{thm:main} and \Cref{thm:main2}, which differ in terms of the Fourier features used and the form of generalized weight decay penalty used. In the setting of \Cref{thm:main}, we use the general Fourier features embedding \eqref{eq:ffmap} with maximum frequency $K_0=2$ resulting in $D=25$ features, while in the setting of \Cref{thm:main2}, we use the restricted Fourier features embedding \eqref{eq:gamma_restricted} consisting of $D=4$ features only. The generalized weight decay penalties defined in \Cref{thm:main} and \Cref{thm:main2} we call \emph{modified WD-I} and \emph{modified WD-II}, respectively. We additionally compare against standard weight decay (i.e., the squared $\ell^2$-norm of all trainable parameters), which we call \emph{standard WD}.
In both settings, we vary both the maximum sampling frequency cutoff $K$ and the teacher width $W$, and run $10$ random trials for each $(K,W)$ pair to approximate the probability of exact recovery. A random teacher network is defined by drawing inner-layer weight vectors $\vw_1,...,\vw_W$ uniformly at random from the sphere, and drawing outer-layer weights $a_1,...,a_W$ uniformly at random from the interval $[1,5]$. The student network width is fixed at $100$ in all cases. We say the recovery is ``exact'' if the pixel-wise mean-squared error (MSE) between the ground truth image and student INR rasterized onto a $1024\times 1024$ pixel grid drops below $10^{-9}$ at any point during training. More details for these experiments are provided in Supplementary Materials \ref{sec:supp:exp_details}.

\Cref{fig:exact_rec} shows the results of these experiments as probability tables. In the probability tables corresponding to  \Cref{thm:main}, for standard WD regularization, we see that a linear trend between the teacher width $W$ and the minimal sampling cutoff $K$ needed to achieve exact recovery with high probability up to width 5, while exact recovery generally fails for teacher widths greater than 5. For modified WD-I regularization, although the linear trend is less pronounced, exact recovery is achieved in all random trials when the teacher width is 1 (consistent with \Cref{thm:main}) and remains attainable in many cases for teacher widths greater than one, but generally fails for widths greater than 5. Note that, for both regularization methods, the parameter space optimization problem is non-convex, and so convergence of the AL method to a global minimizer is not guaranteed. Therefore, it is unclear if the failure at achieving exact recovery for large widths in the setting of Theorem 1 is due to a difficulty in reaching a global minimizer with the proposed optimization method, or a more fundamental lack of identifiability.

\begin{figure*}[htbp!]

  \centering     \includegraphics[width=\textwidth]{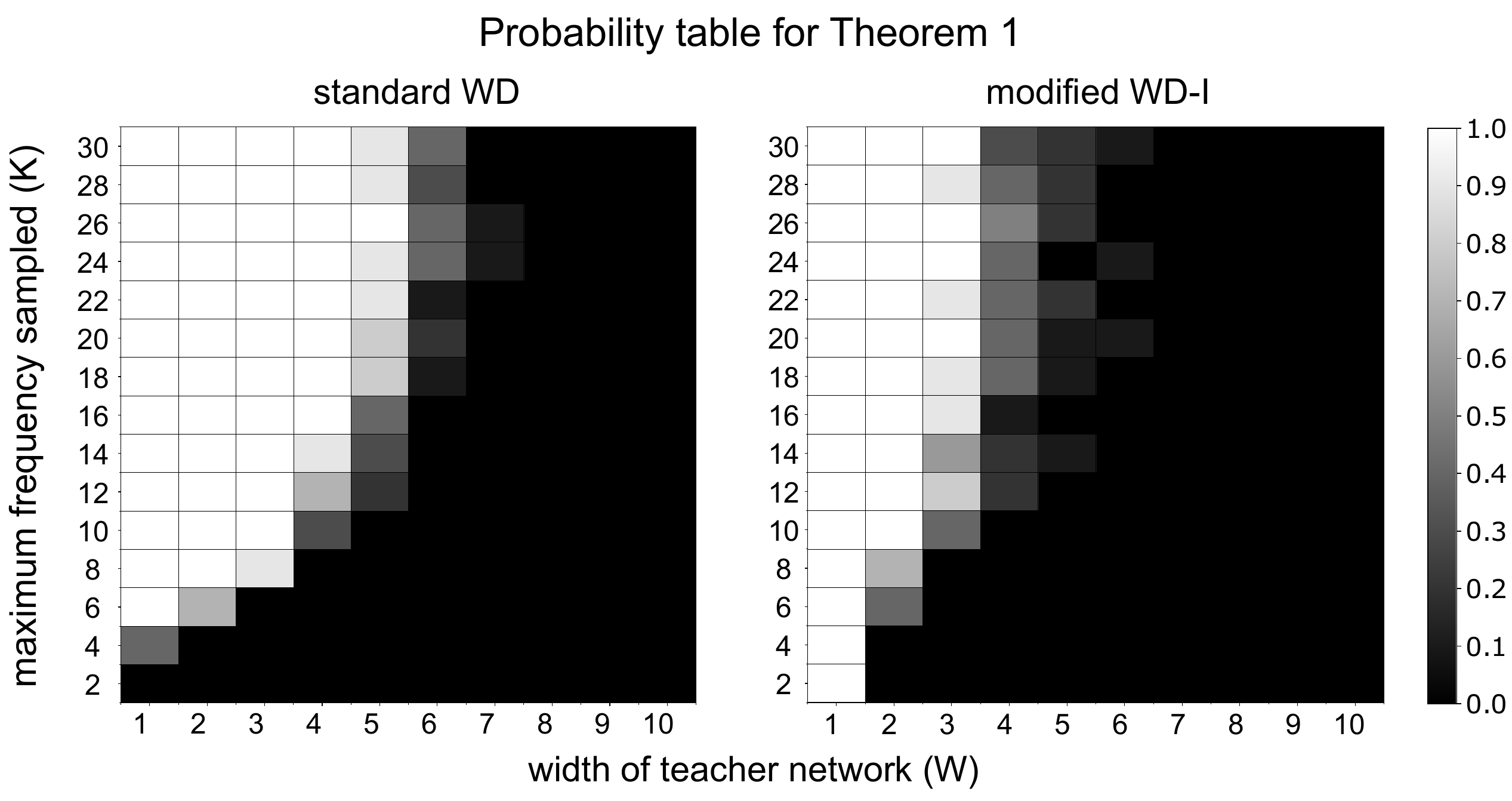}\\[1em]
  \centering     \includegraphics[width=\textwidth]{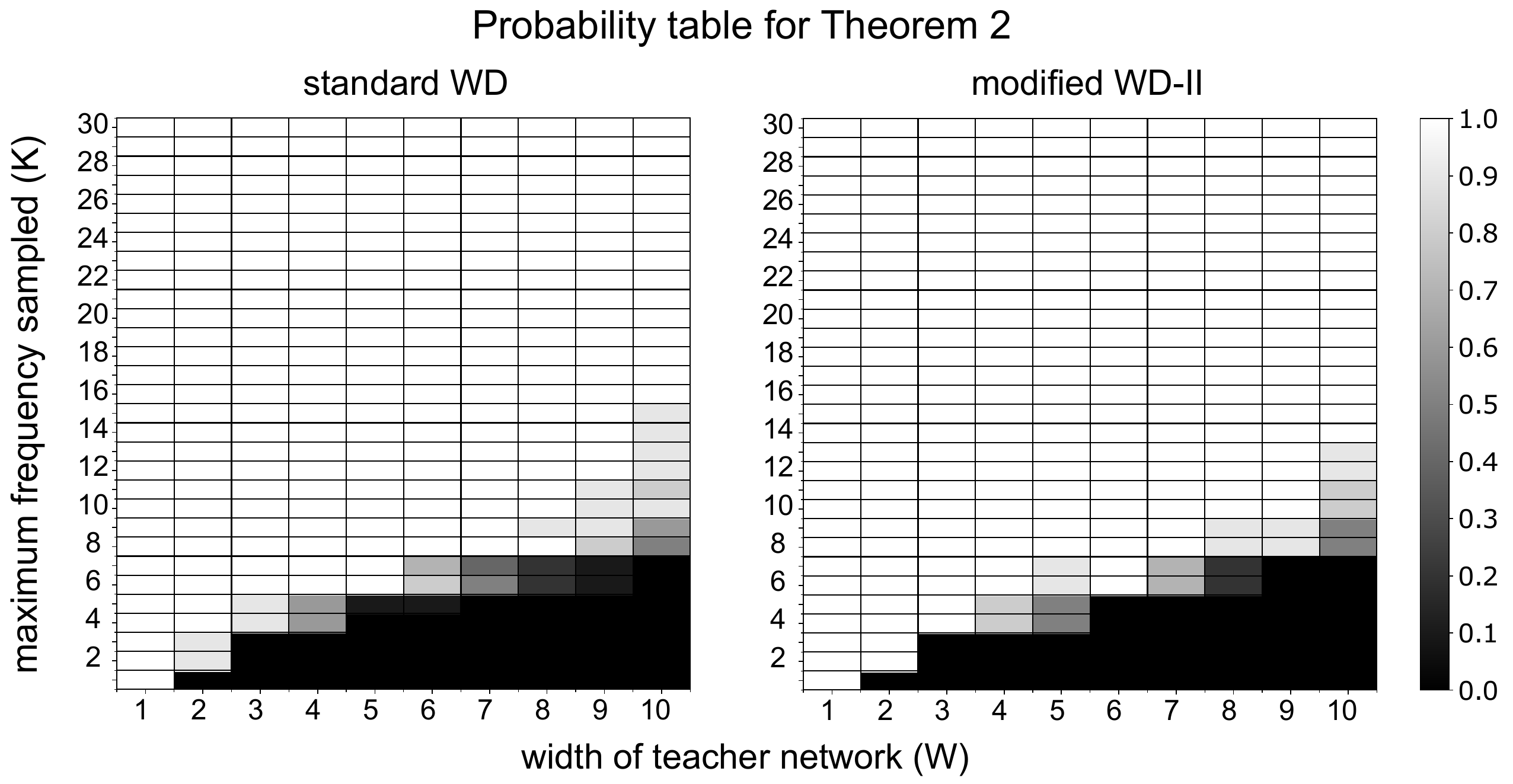}\\
    \caption{\footnotesize Empirical probability of exactly recovering an random image representable by a single hidden-layer INR of width-$W$ from low-pass Fourier samples with maximum sampling frequency $K$ by solving \eqref{eq:opt_param_space}. The left panel corresponds to the setting of \Cref{thm:main}, comparing standard weight decay (standard WD) with modified weight decay regularization proposed for  \Cref{thm:main} (modified WD-I) and the right panel corresponds to the setting of \Cref{thm:main2}, comparing standard weight decay (standard WD)  with the modified weight decay proposed for \Cref{thm:main2} (modified WD-II). Empirical probabilities are estimated from 10 random trials.}
    \label{fig:exact_rec}
\end{figure*}

In the probability tables corresponding to \Cref{thm:main2}, the linear transition is clear for both standard WD and modified WD-II regularization. This is consistent with \Cref{thm:main2}, which proves that $K = O(W)$ Fourier samples are sufficient for exact recovery. Also, in this setting, exact recovery is achieved for teacher widths greater than 5. We conjecture that this is because restricted Fourier features used in this setting result in a better conditioned problem.

This set of experiments suggests that standard weight decay regularization performs similar or better at achieving exact recovery of random phantom images relative to the modified forms of weight decay. However, there may be practical benefits to using modified weight decay regularization over standard weight decay regularization in achieving \emph{approximate} recovery, as we show in the next set of experiments with more realistic phantom images.

\subsubsection{Recovery of Continuous Domain Phantom Images}
In \Cref{fig:dot_shepp} and \Cref{fig:pwc_pws}, we illustrate the use of shallow INRs for the super-resolution recovery of four continuous domain phantoms whose Fourier coefficients are known exactly: (1) \textsf{DOT}, a ``dot phantom'' image, created for this study, that contains randomly generated disk-shaped features and is exactly realizable by a shallow INR architecture with a width of $W \geq 50$ and maximum Fourier features frequency $K_0 \geq 8$, (2) \textsf{SL}, the Shepp-Logan phantom  \cite{shepp1974fourier}, (3) \textsf{PWC-BRAIN} a piecewise constant MRI ``brain phantom'' introduced in \cite{guerquin2011realistic}, and (4) \textsf{PWS-BRAIN} a piecewise smooth version of the MRI brain phantom obtained by multiplying \textsf{PWC-BRAIN} by smooth function. Note that the  \textsf{SL},  \textsf{PWC-BRAIN}, and  \textsf{PWS-BRAIN} phantoms are all discontinuous functions, so they are not exactly realizable with the INR architecture considered in this work, whose output must be a continuous function. We also note all these four phantoms are not bandlimited, such that a simple reconstruction obtained by zero-filling in the Fourier domain and applying the IFFT results in significant ringing artifacts (see the second column in each panel of \Cref{fig:dot_shepp} and \Cref{fig:pwc_pws}).

For each phantom, the maximum sampling frequency $K$, maximum Fourier features frequency $K_0$, and the width of the INR $W$, are set as shown in \Cref{tab:sample_table}.
\begin{table}[ht!]
    \centering
    \begin{tabular}{|c|c|c|c|}
        \hline
        Phantom & $K$ & $K_0$ &  $W$ \\ 
        \hline
        \textsf{DOT}  & 32  & 10 & 100\\ 
        \hline
        \textsf{SL} &  48  & 10 & 100\\ 
         \hline
        \textsf{PWC-BRAIN} & 64 & 20 & 500\\ 
        \hline  
        \textsf{PWS-BRAIN} & 64 & 20 & 500\\ 
        \hline
    \end{tabular}
    \caption{Parameter settings for recovery of all phantoms by fitting a shallow INR}
    \label{tab:sample_table}
\end{table}
For this set of experiments, we fit the INR by minimizing the regularized least squares formulation \eqref{eq:lsfit} using the Adam optimizer for $40,000$ iterations with a learning rate of $1\times 10^{-3}$ followed by an additional $10,000$ iterations with a learning rate of $1\times 10^{-4}$. The weight decay hyperparameter $\lambda$ was selected via grid search to minimize the pixel-wise mean-squared error (MSE) between the ground truth image and INR rasterized onto a $1024\times 1024$ pixel grid.

Unlike the exact recovery experiments, where standard weight decay regularization seems to perform better in achieving exact recovery, for all phantoms we find that using a shallow INR trained with modified weight decay regularization results in lower image MSE and fewer artifacts in the reconstructions compared to standard weight decay regularization. In particular, the MSE is a magnitude lower on the \textsf{DOT} phantom with modified weight decay and the reconstruction has far fewer ringing artifacts. A more modest improvement is shown on the other phantoms, where training with modified weight decay yields a slight improvement in MSE, but some ringing artifacts are still visible. 

\begin{figure}[htbp!]
    \centering
    \includegraphics[width=0.85\textwidth]{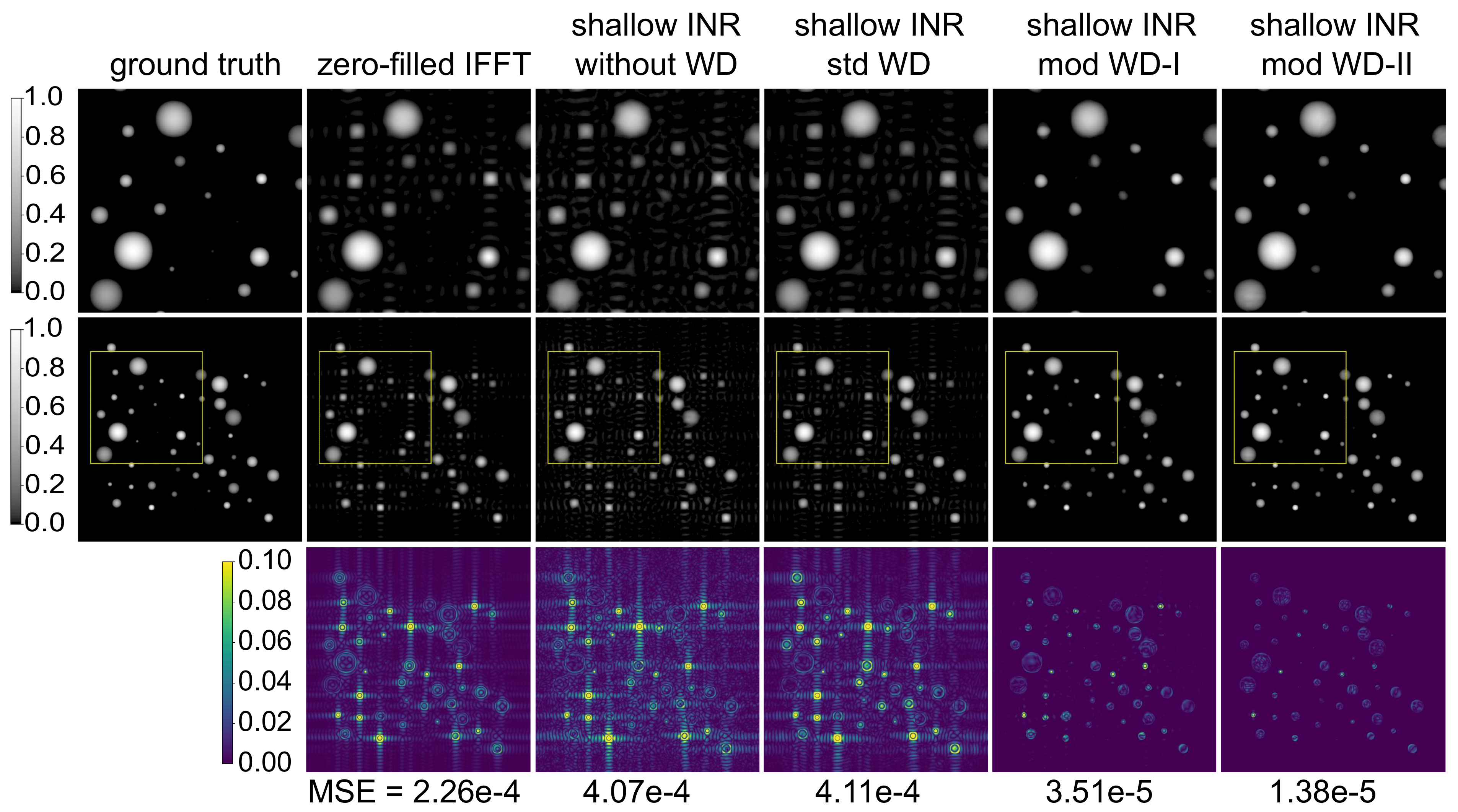}\vspace{1em}
    \includegraphics[width=0.85\textwidth]{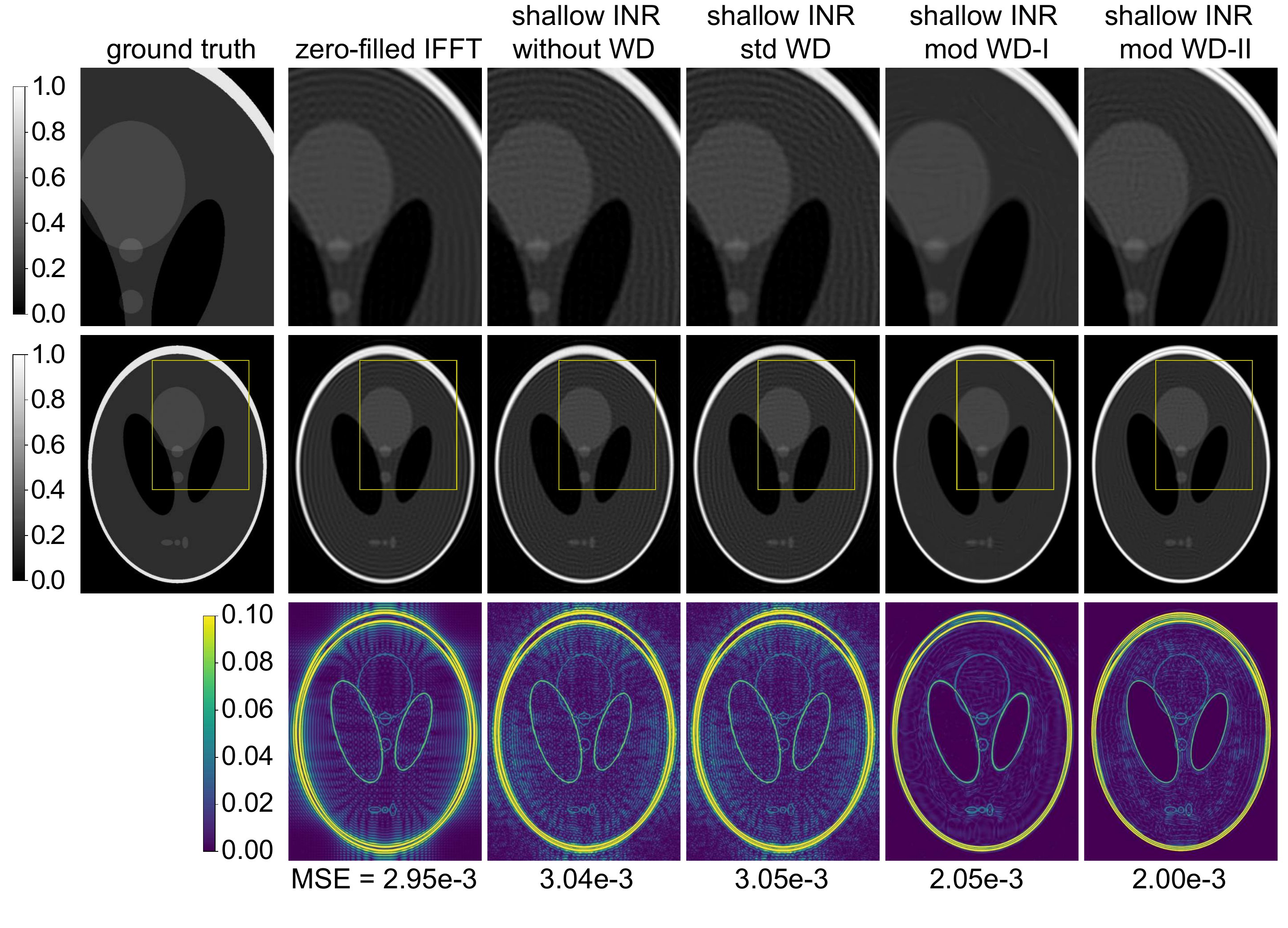}
    \caption{\footnotesize \textbf{Recovery of continuous domain \textsf{DOT} and \textsf{SL} phantoms from low-pass Fourier samples.} 
    We compare the zero-filled IFFT reconstruction with a shallow INR trained without regularization (without WD), with standard weight decay regularization (shallow INR, std WD), the modified weight decay regularization proposed for  \Cref{thm:main} (shallow, mod WD-I), and the modified weight decay regularization proposed for  \Cref{thm:main2} (shallow, mod WD-II). The bottom row shows the absolute value of the difference with the ground truth.}
    \label{fig:dot_shepp}
\end{figure}

\begin{figure}[htbp!]
    \centering
    \includegraphics[width=0.8\textwidth]{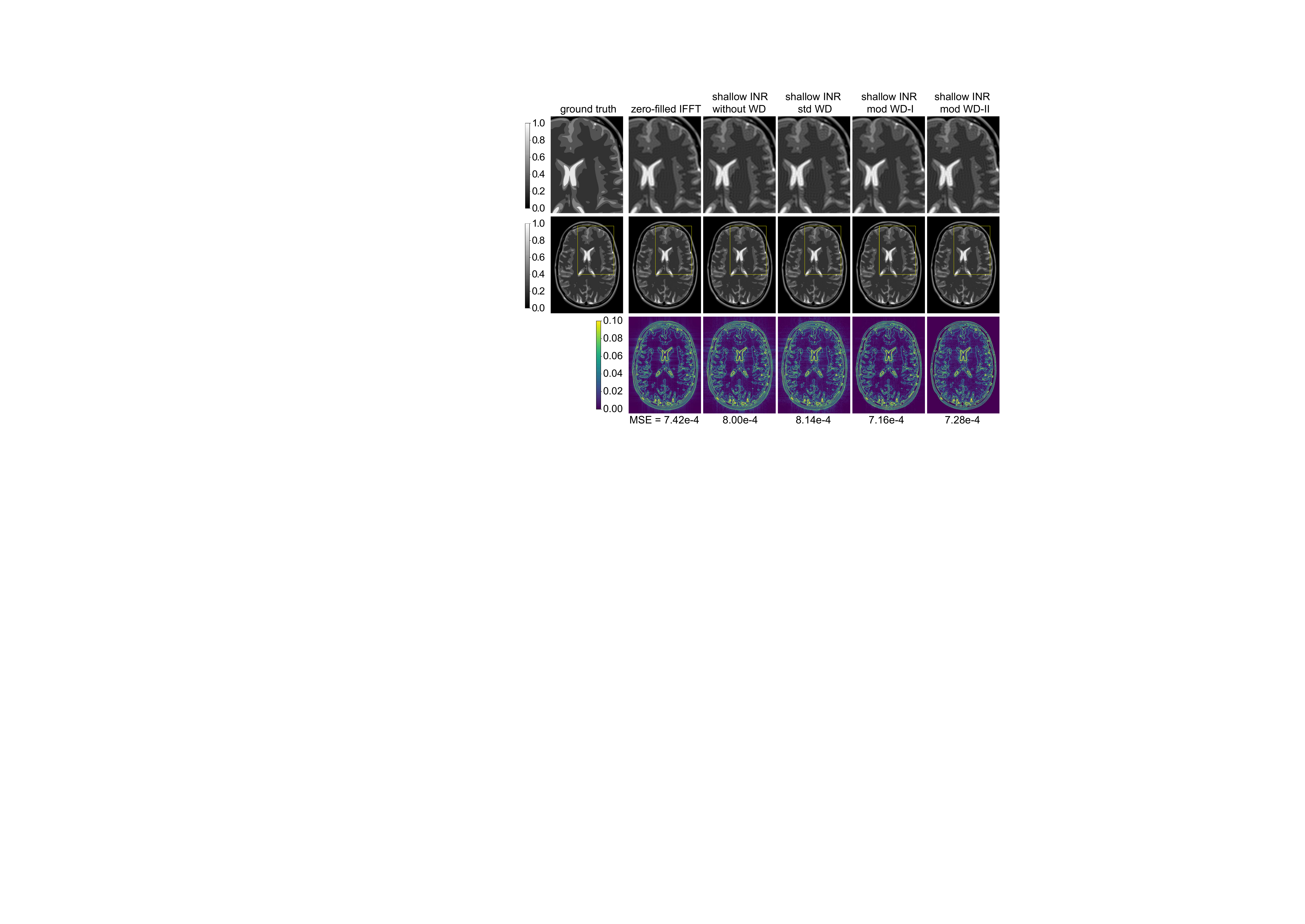}\vspace{1em}
    \includegraphics[width=0.8\textwidth]{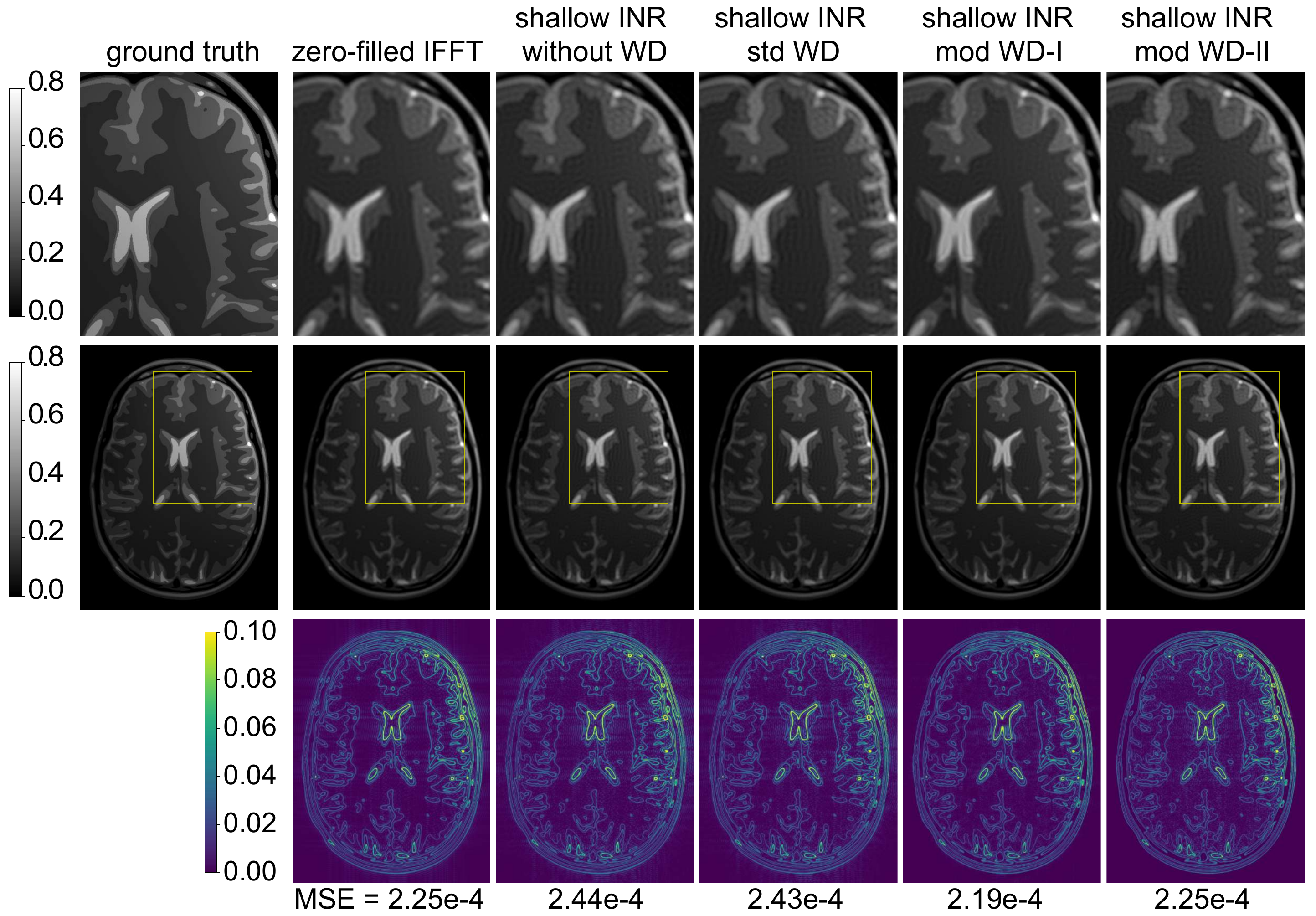}
    \caption{\footnotesize \textbf{Recovery of continuous domain \textsf{PWC-BRAIN} and \textsf{PWS-BRAIN} phantoms from low-pass Fourier samples.} 
    We compare the zero-filled IFFT reconstruction with a shallow INR trained without regularization (without WD), with standard weight decay regularization (shallow INR, std WD), the modified weight decay regularization proposed for  \Cref{thm:main} (shallow, mod WD-I), and the modified weight decay regularization proposed for  \Cref{thm:main2} (shallow, mod WD-II). The bottom row shows the absolute value of the difference with the ground truth.}
    \label{fig:pwc_pws}
\end{figure}

To illustrate the impact of varying the maximum Fourier features frequency, $K_0$, in the Supplementary Materials \ref{sec:supp:figures} we show the reconstruction of the \textsf{SL} phantom with $K_0 = 5, 10, 15, 20$, trained with standard WD, modified WD-I, and modified WD-II. Setting $K_0=10$ resulted in reconstructions with the lowest image MSE for both modified WD-I and modified WD-II.

Additionally, in \Cref{fig:std_mod_wd_10K}, we demonstrate the impact of weight decay regularization strength on the reconstructed \textsf{PWC-BRAIN} phantom. First, we see that including weight decay regularization (i.e., setting $\lambda>0$) yields reconstructions with lower MSE and fewer visual ringing artifacts than training without regularization ($\lambda = 0$). However, using too large of a $\lambda$ value yields an over-smoothed reconstruction. In \Cref{fig:std_mod_wd_10K}, we also investigate the impact of regularization strength on the sparsity of the trained INR weights. In particular, given a ReLU unit of the form $a[\vw^\T\bm\gamma(\cdot)]_+$, we define its unit size to be $|a|\eta(\vw)$, where $\eta$ is the weighting function used to define the generalized weight decay regularization term in \eqref{eq:param_space_reg}. By \Cref{lem:ell2_to_ell1}, the sum of the unit sizes should implicitly be minimized when using generalized weight decay. From the histograms of the unit sizes shown in \Cref{fig:std_mod_wd_10K}, we see a large spread in unit sizes when not using regularization ($\lambda =0$), while as we increase the regularization strength, the number of units with size close to zero increases markedly. The effect is especially pronounced for large $\lambda$ in the modified weight decay methods, where nearly all unit sizes are clustered close to zero except for a few large outliers, indicating the trained INR has very low effective width. 

\begin{figure*}[htbp!]
    \centering    \includegraphics[width=0.57\textwidth]{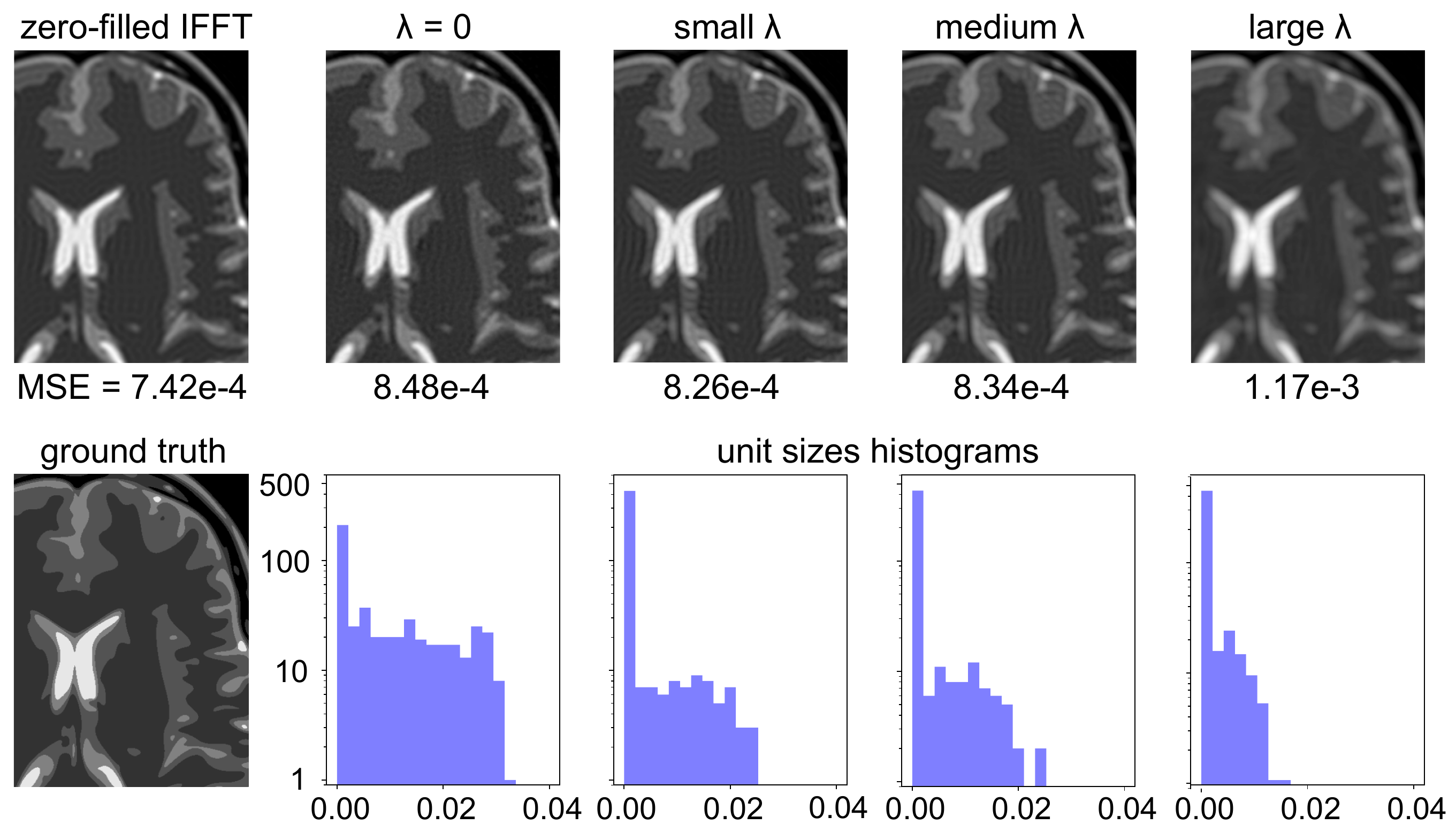} \\
    (a) standard weight decay regularization\\[1em]
\includegraphics[width=0.57\textwidth]{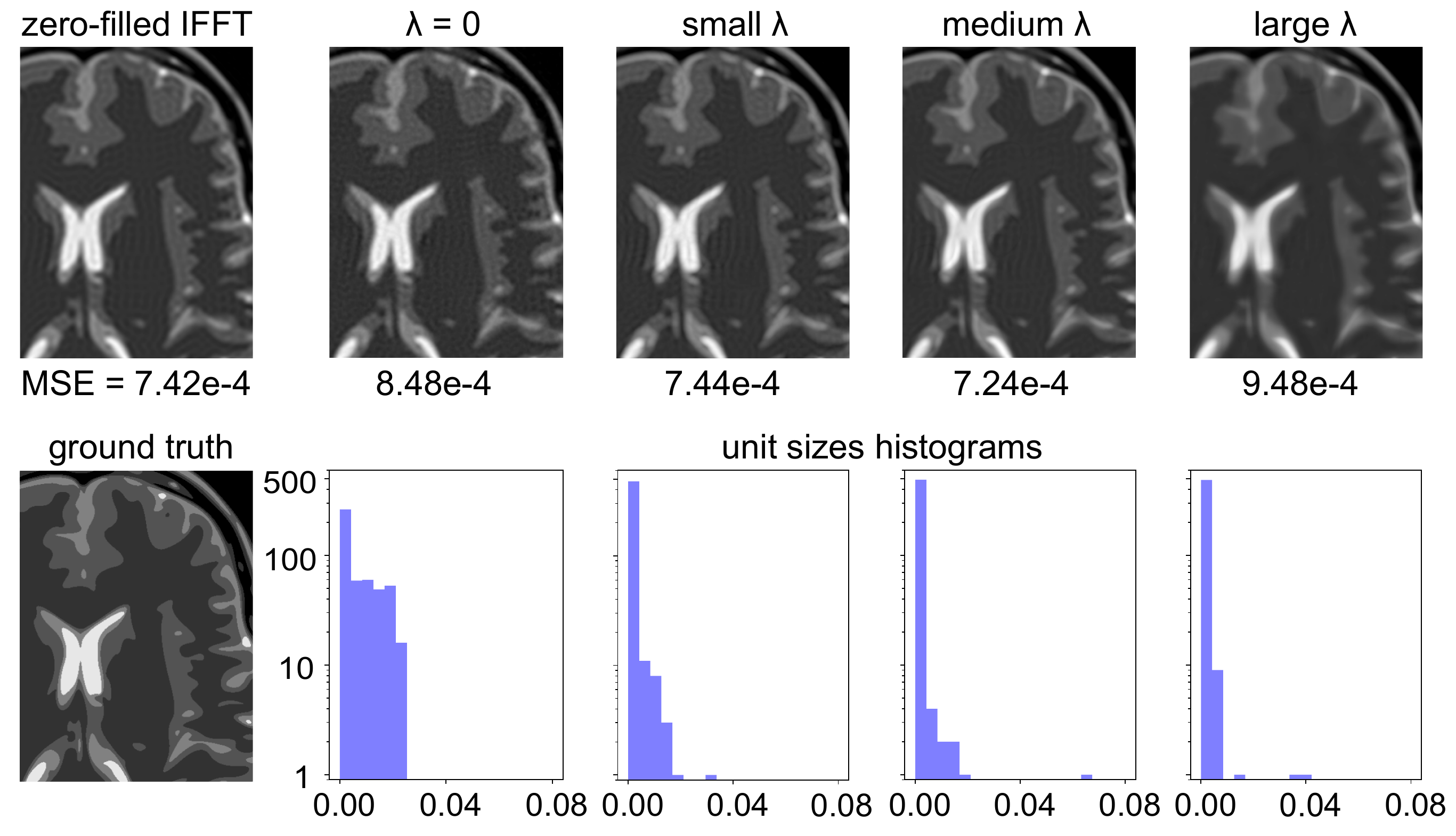}\\
   (b) modified weight decay regularization proposed in \Cref{thm:main}\\[1em]
   \includegraphics[width=0.57\textwidth]{figs/brain_2nd_mod_hist.pdf}\\
   (c) modified weight decay regularization proposed in  \Cref{thm:main2}
    \caption{\footnotesize \textbf{Effect of regularization on the \textsf{PWC BRAIN} phantom reconstruction and unit sizes of a trained depth 2 INR.} In the first row of each panel, we compare the zero-filled IFFT reconstruction with the reconstructions obtained with a shallow INR trained with standard/modified weight decay regularization for different values of regularization strength $\lambda$. The histograms in the second row of each panel represent the distribution of unit sizes $\{|a_i|\eta(\vw_i)\}_{i=1}^{500}$ of the trained INRs across different values of $\lambda$. As $\lambda$ increases, the unit sizes cluster at zero, indicating the trained INR is sparse in the sense that it has a few active units. (Note: histograms are plotted on a log scale.)}
    \label{fig:std_mod_wd_10K}
\end{figure*}

In \Cref{fig:brain_active_units_normalized}, we visualize the four largest active units from the unit size histograms presented in \Cref{fig:std_mod_wd_10K} for both standard and modified weight decay regularization applied to the \textsf{PWC-BRAIN} phantom. As predicted by our theory, applying stronger weight decay regularization (i.e., larger $\lambda$) results in a sparse INR, where only a few units contribute significantly to the output image. Consequently, these active units closely resemble the final reconstructed images. As the effect of regularization decreases (i.e., smaller $\lambda$), the number of active units increases. In particular, without regularization (i.e., $\lambda = 0$), the trained INR is no longer sparse, resulting in a larger number of active units, which appear random in structure and do not visually resemble the target phantom image. 
For the \textsf{DOT} phantom, similar figures representing the effect of regularization on the phantom reconstruction, unit sizes of a trained shallow INR, and the visualization of the four largest active units are provided in the Supplementary Materials \ref{sec:supp:figures}.
\begin{figure*}[htbp!]
    \centering    
\includegraphics[width=\textwidth]
{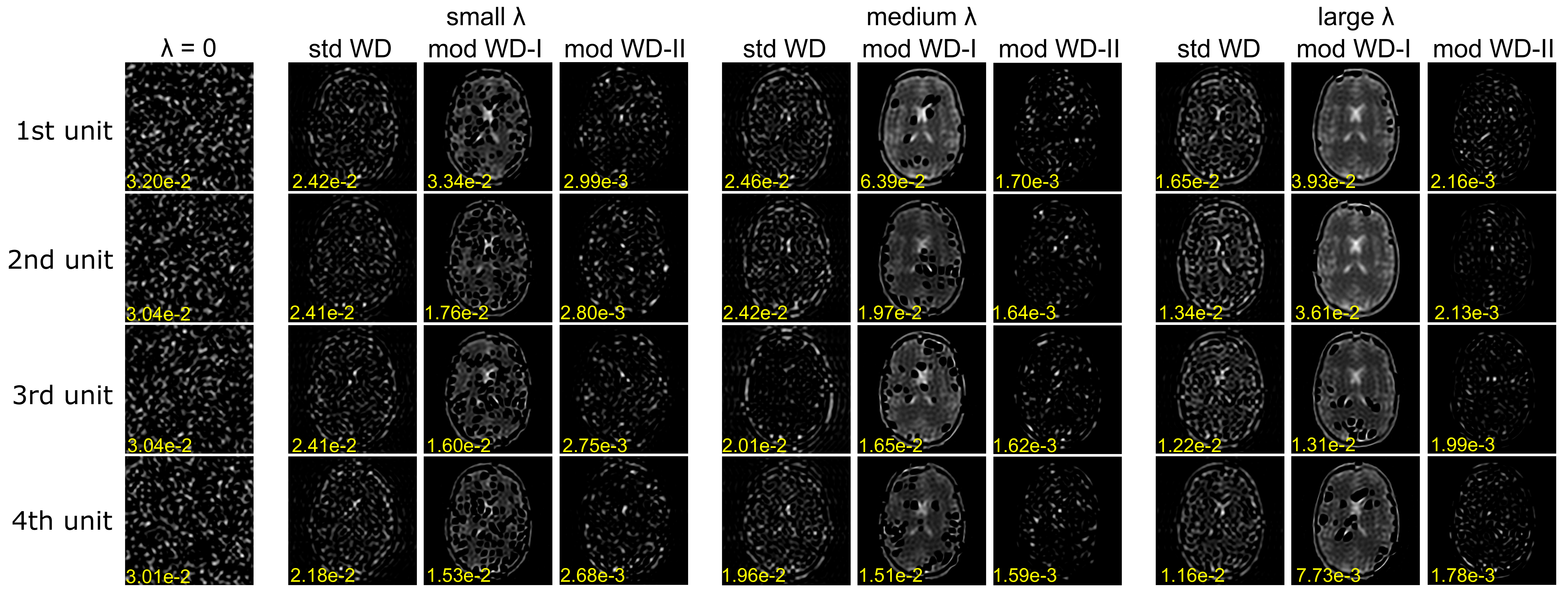}
    \caption{\footnotesize \textbf{Visualization of the four largest active units for the \textsf{PWC-BRAIN} phantom, obtained with a shallow INR.} The INR is trained with standard weight decay regularization (std WD), The modified weight decay regularization proposed in  \Cref{thm:main} (mod WD-I), and the modified weight decay regularization proposed in \Cref{thm:main2} (mod WD-II). All images are normalized to the [0,1] range. The corresponding ``unit size'' $|a_i|\eta(\vw_i)$ is displayed on the image in yellow.}
    \label{fig:brain_active_units_normalized}
\end{figure*}

\subsection{Super-Resolution with Deep INRs}
\subsubsection{Effect of Depth for Fourier Feature INRs}
\label{subsec:Effect-of-Depth}

Though our theory focuses on shallow INRs, deeper INR architectures are often used in practice. Here, we empirically explore the effect of INR depth on super-resolution recovery. Note that the generalized weight decay penalties investigated previously are only defined for shallow INRs, so for these experiments we focus on standard weight decay regularization only.

In \Cref{fig:PWC_SL_depth_reg_eff}, we show the results of fitting an INR of width 100 with depths of $2, 5, 10$, $15$, and $20$, with and without weight decay regularization for the recovery of the \textsf{SL} and \textsf{PWC BRAIN} phantoms.  
These results show that increasing INR depth results in lower MSE and fewer ringing artifacts, indicating that deeper INRs may be biased towards piecewise constant images. Additionally, for both phantoms, comparing reconstructions with and without regularization shows that, at any fixed depth, applying weight decay regularization leads to lower image MSE and better visual quality. This suggests that weight decay regularization can also help improve recovery in deep INR networks. For more details on the hyperparameter settings used for these results see Supplementary Materials \ref{depth}.

\begin{figure}[ht!]
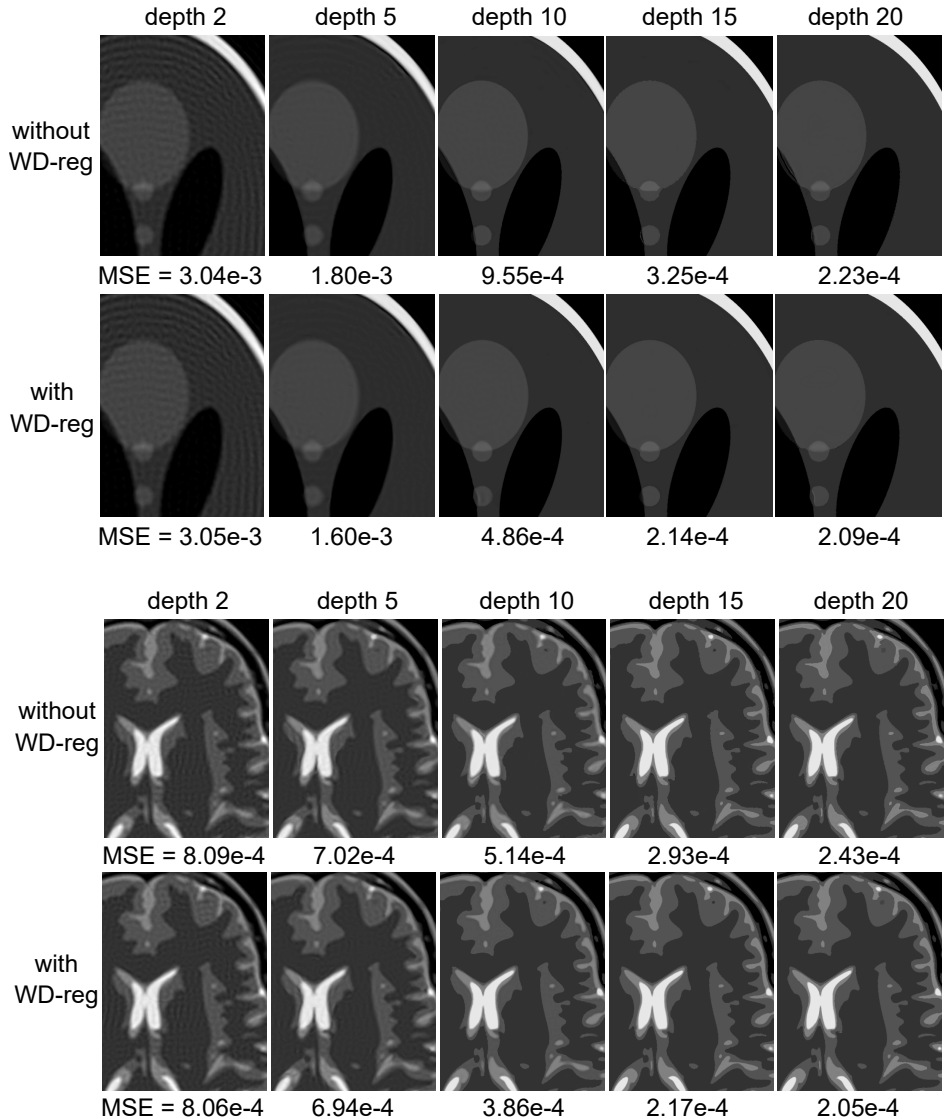

    \centering
    \includegraphics[width=0.76\textwidth]{figs/SL_depth_reg_W100.pdf}\\
    \vspace{1em}
    \includegraphics[width=0.76\textwidth]{figs/PWC_depth_reg_W100.pdf}
   \caption{\footnotesize\textbf{Effect of INR depth on \textsf{SL} and \textsf{PWC-BRAIN} phantom recovery.} The top row shows comparisons of reconstructions of an INR of width 100 across depths 2, 5, 10, 15, and 20 trained with no regularization (i.e., with the weight decay hyperparameter $\lambda = 0$). The bottom row shows comparisons of reconstructions of an INR with varying depths, trained with the standard weight decay regularization. All images are displayed on a scale of [0,1]. }
    \label{fig:PWC_SL_depth_reg_eff}
\end{figure}

\subsubsection{Comparison with Different INR Architectures}
\label{subsec:{Comparison with Different INR Architectures}}
In this work, we focus on one specific INR architecture inspired by \cite{tancik2020fourier} that combines a Fourier features embedding with a fully connected neural network with ReLU activation function. However, there are several alternative INR architectures that use different activation functions and/or feature embeddings. Here, we verify that the INR architecture investigated in this work gives comparable results to other popular INR architectures on the super-resolution recovery problem.

\Cref{fig:SL_diff_methods} and \Cref{fig:PWC_diff_methods} present the reconstructions of the \textsf{SL} and \textsf{PWC-BRAIN} phantoms using various INR architectures. Specifically, we compare with three established INR architectures: Hash Encoding networks \cite{muller2022instant}, SIREN \cite{sitzmann2020implicit}, and ReLU networks with \emph{random} Fourier features \cite{tancik2020fourier}, in contrast to the \emph{non-random} Fourier features used in previous experiments in this work. In addition, motivated by our earlier findings that deeper INR architectures tend to improve the recovery of piecewise constant phantoms, we evaluate SIREN at two depths: one with a depth of 5, denoted as ``SIREN'' and one with a depth of 14, denoted as ``deep SIREN''.

For a fair comparison, we conducted experiments with different training settings, adjusting the INR depth, the learning rate, weight decay regularization strength $\lambda$, and other architecture-specific hyperparameters. In \Cref{fig:SL_diff_methods} and \Cref{fig:PWC_diff_methods}, we display the reconstruction that achieved the lowest image MSE across all tuned hyperparameters. See Supplementary Materials \ref{diff_architectures_ex_details} for the hyperparameter settings used in these experiments.

As shown in these figures, all INR architectures result in lower image MSE values compared to zero-filled IFFT reconstruction. Reconstructions using Hash Encoding and SIREN architectures are comparable to reconstructions obtained using a shallow INR with Fourier features layer in terms of image MSE and visual quality. The deep INR architectures (deep SIREN, and deep ReLU networks with Fourier features) give a marked improvement, eliminating most visible ringing artifacts, and achieve a substantially lower image MSE than shallower architectures.  Additionally, a figure representing the \textsf{PWS-BRAIN} phantom recovery using different INR architectures is provided in the Supplementary Materials \ref{fig:PWS_diff_methods}. Here a similar improvement is seen with deeper INR architectures, though the relative change MSE and visual quality is smaller than with the piecewise constant phantoms.

\begin{figure}[htbp!]
    \centering
    \includegraphics[width=\textwidth]{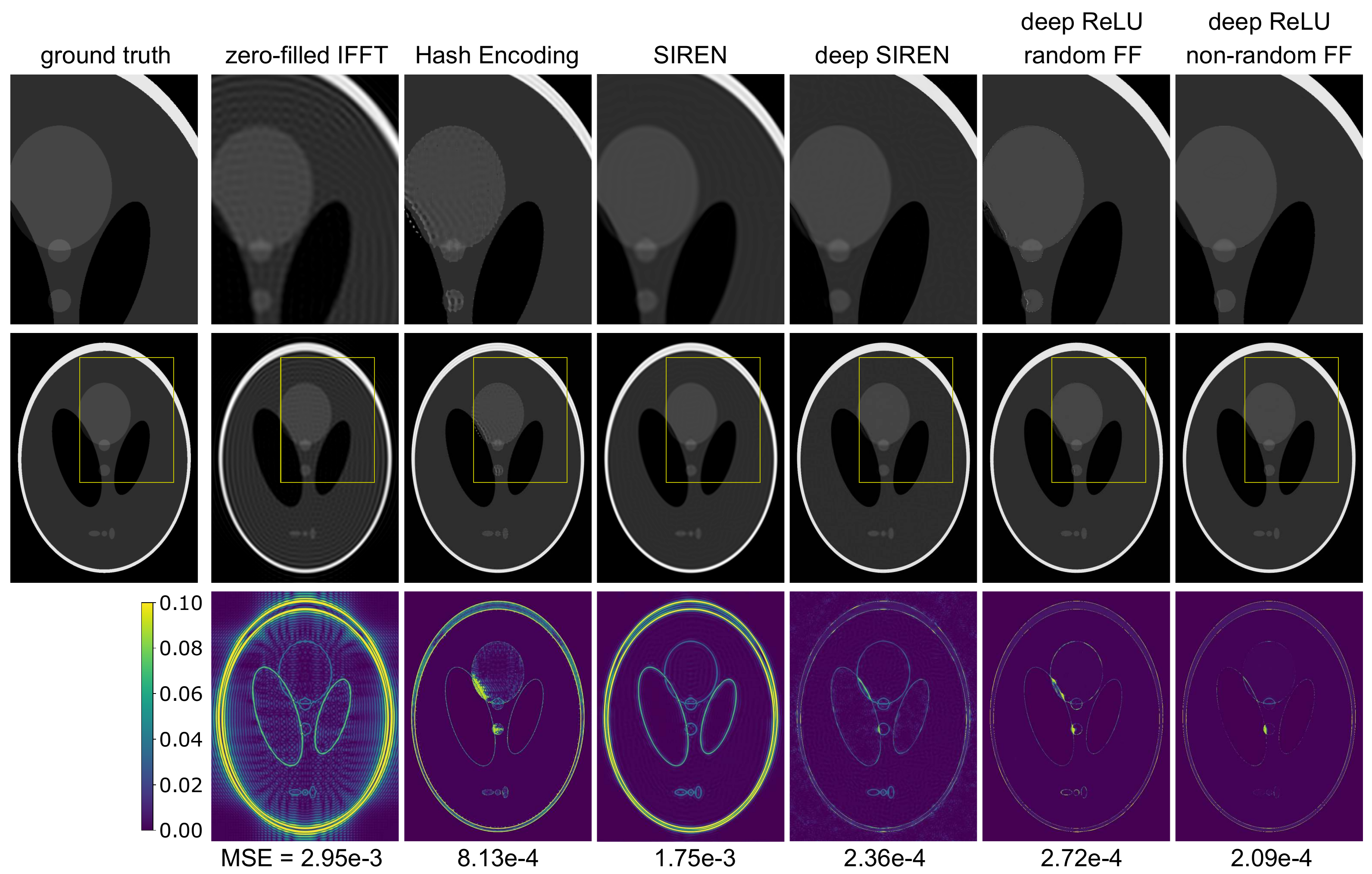}\\
    \caption{\footnotesize \textbf{ \textsf{SL} recovery using different INR architectures.} Top row: we compare the zero-filled IFFT reconstruction with the phantom recovery from a Hash Encoding INR, depth 5 SIREN (SIREN), depth 14 SIREN (deep SIREN), and depth 20 ReLU INR (deep ReLU) with \emph{random} Fourier features and \emph{non-random} Fourier features. The bottom row shows the absolute value of the difference with the ground truth. All grayscale images are shown on the scale [0, 1].}
    \label{fig:SL_diff_methods}
\end{figure}

\begin{figure*}[htbp!]
    \centering    
    \includegraphics[width=\textwidth]
    {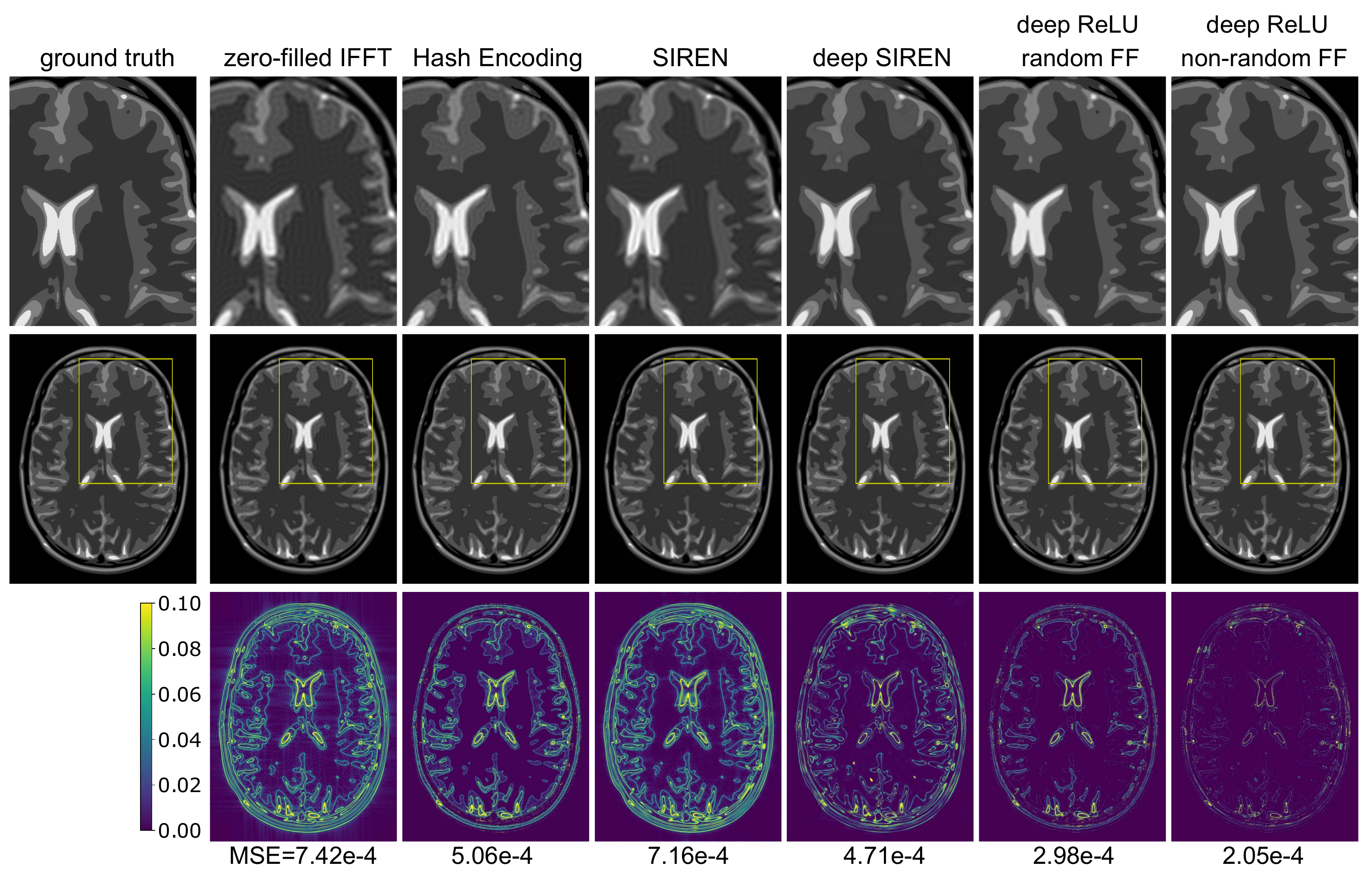}\\
    \caption{\footnotesize\textbf{\textsf{PWC-BRAIN} recovery using different INR architectures.} Top row: we compare the zero-filled IFFT reconstruction with the phantom recovery from a Hash Encoding INR, depth 5 SIREN (SIREN), depth 14 SIREN (deep SIREN), and depth 20 ReLU INR (deep ReLU) with \emph{random} Fourier features and \emph{non-random} Fourier features. The bottom row shows the absolute value of the difference with the ground truth. All grayscale images are shown on the scale [0, 1].}
    \label{fig:PWC_diff_methods}
\end{figure*}

\section{Conclusions}
\label{sec:conclusions}

In this work, we introduce a mathematical framework for studying exact recovery of continuous domain images from low-pass samples with a specific INR architecture. We show that fitting a shallow INR architecture having ReLU activation and an initial Fourier features layer with a generalized form of weight decay regularization is equivalent to a convex problem optimization defined over a space of measures. Using this perspective, we identify a sufficient number of Fourier samples to uniquely recover an image realizable as INR in two settings:  (i) the image is realizable as a width-1 shallow INR with a general Fourier features embedding defined in terms of integer frequencies, and (ii) the image is realizable as a width-$s$ shallow INR with a restricted Fourier features embedding defined in terms of two base frequencies and positive outer-layer weights. Furthermore, we conjecture that similar exact recovery results hold with standard weight decay regularization, which is verified empirically.

While this work focused on super-resolution imaging, the convex optimization framework developed here naturally extends to other linear inverse problems, such as MRI reconstruction from non-uniform Fourier samples or computed tomography reconstruction. An important next step is to investigate whether our proof techniques can yield exact recovery guarantees for INRs over a broader class of inverse problems.

Our analysis characterizes the global optima of an INR training problem, showing that its solutions coincide with those of a convex optimization problem. However, since the original INR training objective is non-convex, convergence to global optima via gradient-based methods is not guaranteed. Despite this, our empirical results demonstrate that standard INR training procedures can converge to global optima under certain conditions. Establishing theoretical convergence guarantees for practical gradient-based INR training algorithms is an interesting open problem.

Finally, the current analysis is specific to the INR architecture proposed in \cite{tancik2020fourier}. Extending our results to other popular INR architectures (as discussed in \Cref{sec:relatedwork}) is a natural direction for further exploration. Moreover, for mathematical tractability, we restricted our focus to shallow networks with a single hidden layer. Generalizing our analysis to deeper INR architectures and examining the role of weight decay regularization in such settings will be a focus of future work.

\appendix
\section{Proofs of Lemmas in \Cref{sec:Theory}} 
\subsection{Proof of \Cref{lem:ell2_to_ell1}}\label{sec:app:proof_ell2_to_ell1}
Given any parameter vector $\theta = ((a_i,\vw_i))_{i=1}^W \in \Theta_W$ we construct a new parameter vector $\theta' =  ((a'_i,\vw'_i))_{i=1}^W \in \Theta_W$ that satisfies $f_\theta = f_{\theta'}$ as follows:
For all $i$ such that $\eta(\vw_i) > 0$, define $c_i = \sqrt{|a_i|/\eta(\vw_i)}$, and set $a'_i = a_i/c_i$ and $\vw'_i =  c_i \vw_i$. For all $i$ such that $\eta(\vw_i) = 0$, set $\vw_i' = \bm 0$ and $a_i' = 0$. Then, by positive 1-homogeneity of the ReLU activation and of $\eta$, and since $\eta(\vw_i) = 0$ implies $[\vw_i^\T\bm\gamma(\cdot)]_+ = 0$, we see that $a'_i[(\vw_i')^\T \bm\gamma(\cdot)]_+ = a_i[\vw_i^\T \bm\gamma(\cdot)]_+$ for all $i$, and so $f_\theta = f_{\theta'}$. 

Now, by the inequality of arithmetic means and geometric means, we have
\[
R(\theta) = \frac{1}{2}\sum_{i=1}^W\left(|a_i|^2 + \eta(\vw_i)^2 \right) \geq \sum_{i=1}^W |a_i|\eta(\vw_i) = R(\theta').
\]
This shows that every parameter vector $\theta \in \Theta_W$ can be mapped to another parameter vector $\theta' \in \Theta_W$ that realizes the same function and achieves the modified cost $\sum_{i=1}^W |a_i|\eta(\vw_i)$ which is also a lower bound on $R(\theta)$. Therefore, we see that \eqref{eq:opt_param_space} has the same function space minimizers as
\[
\min_{\theta \in \Theta_W} \sum_{i=1}^W |a_i|\eta(\vw_i)~~s.t.~~\gF_\Omega f_\theta = \vy.
\]
Finally, since the quantity $|a_i|\eta(\vw_i)$ is invariant to the re-scaling $\vw_i \mapsto \vw_i/\|\vw_i\|$ and $a_i \mapsto \|\vw_i\|a_i$, and this re-scaling of parameters realizes the same function, we may additionally constrain the optimization to all parameter vectors satisfying $\|\vw_i\| = 1$ for all $i \in [W]$, i.e., to all $\theta \in \Theta_{W}'$, which gives the equivalence between function space minimizers of \eqref{eq:opt_param_space} and \eqref{eq:opt_finite_width_R}.

\subsection{Proof of \Cref{lem:low_width_minimizers}}\label{sec:app:proof_lem2}
To prove \Cref{lem:low_width_minimizers}, we apply the abstract representer theorem of \cite[Theorem 3.3]{bredies2020sparsity}. A simplified version of this result is stated as follows: if $Q$ is a norm on a locally convex space $\gX$, and $\gA:\gX\rightarrow H$ is a continuous linear operator mapping surjectively onto a finite dimensional Hilbert space $H$, then given any $y \in H$ the optimization problem
\[
\min_{x \in \gX}~Q(x)~~s.t.~~\gA x = y,
\]
has a minimizer of the form $x^* = \sum_{i=1}^s a_i u_i$ where $s \leq \dim H$, $a_i \in \R$, and $u_i$ are extremal points of the unit ball $\{x\in\gX : \phi(x) \leq 1\}$.

In the present setting, we consider the locally convex space $\gX = \gM(\Peta)$ of finite signed Radon measures over $\Peta$ equipped with the weak-$*$ topology. Here, we identify $\gM(\Peta)$ as the continuous dual of $\gC_0(\Peta)$, the Banach space of continuous functions on $\Peta$ vanishing on the boundary set $\partial\Peta \subseteq \{\vw \in \mathbb{S}^{D-1} : \eta(\vw) = 0\}$ equipped with the supremum norm. We also consider the norm $Q= \|\cdot\|_{TV,\eta}$ on $\gM(\Peta)$, and the linear operator $\gA = \gK_\Omega$ mapping into $H =  \Csym(\Omega)$, the space of complex-valued conjugate symmetric arrays indexed by $\Omega$, which is isomorphic to $\R^{|\Omega|}$.

By a similar argument to \cite[Theorem 4.2]{bredies2020sparsity}, it is straightforward to show that the extreme points of the unit ball $\{\mu \in \gM(\Peta) : \|\mu\|_{TV,\eta} \leq 1\}$ coincide with the set of scaled Dirac measures $\{\pm \eta(\vw)^{-1}\delta_\vw : \vw \in \Peta\}$. Hence, assuming $\gK_\Omega$ is weak-$*$ continuous and surjective, we see that \eqref{eq:opt_func_space} has a sparse solution $\mu^* = \sum_{i=1}^s a_i \delta_{\vw_i}$ where $a_i \in \R$ and $\vw_i \in \Peta$, and $s \leq \dim_\R \Csym(\Omega) = |\Omega|$. If $W \geq |\Omega|$, this implies $\mu^*$ is a minimizer of \eqref{eq:opt_discrete_measures} and so $p^*_W=p^*$, as claimed.

All that remains so show is that $\gK_\Omega:\gM(\Peta)\rightarrow \Csym(\Omega)$ is weak-$*$ continuous and surjective. First, to prove that $\gK_\Omega$ is weak-$*$ continuous, it suffices to show $\gK_\Omega$ is the adjoint of a bounded linear operator \cite[Prop.~1.3, pg.~167]{conway2019course}. In particular, consider the linear operator $\gL_\Omega: \Csym(\Omega) \rightarrow \gC_0(\Peta)$ defined for any $\vz \in \Csym(\Omega)$ by
\begin{equation}\label{eq:predualofKOmega}
[\gL_\Omega\vz](\vw) = \langle \gF_\Omega[\vw^\T\bm\gamma(\cdot)]_+, \vz\rangle~~\forall\vw \in \Peta.
\end{equation}
By the Cauchy-Schwarz inequality, we have
\[
|[\gL_\Omega\vz](\vw)| \leq \|\gF_\Omega[\vw^\T\bm\gamma(\cdot)]_+\|_2\|\vz\|_2,
\]
and Parseval's Theorem gives the bound
\begin{align*}
\|\gF_\Omega[\vw^\T\bm\gamma(\cdot)]_+\|_2 & = 
\|\gF_\Omega^*\gF_\Omega[\vw^\T\bm\gamma(\cdot)]_+\|_{L^2(\Td)}\\
& \leq \|[\vw^\T\bm\gamma(\cdot)]_+\|_{L^2(\Td)}\\ & \leq \|\vw^\T\bm\gamma(\cdot)\|_{L^2(\Td)}\\ & = \|\vw\|_2 = 1,
\end{align*}
where in the last equality we used the fact that the components $\{\gamma_i\}_{i=1}^D$ of the Fourier features embedding $\bm\gamma$ defined in \eqref{eq:ffmap} are orthonormal as functions in $L^2(\Td)$.
Therefore, we have shown $|[\gL_\Omega\vz](\vw)| \leq \|\vz\|_2$ for all $\vw \in \Peta$,
and so $\gL_\Omega$ is a bounded linear operator. Now we show $\gK_\Omega = \gL_\Omega^*$:  for any $\vz \in \Csym(\Omega)$ and $\mu \in \gM(\Peta)$, we have
\begin{align*}
\langle \mu, \gL_\Omega\vz\rangle & = \int_{\Peta} [\gL_\Omega\vz](\vw) d\mu(\vw)\\
& = \int_{\Peta} \langle \F_\Omega[\vw^\T\bm\gamma(\cdot)]_+, \vz\rangle d\mu(\vw)\\
& = \left\langle\int_{\Peta}  \F_\Omega[\vw^\T\bm\gamma(\cdot)]_+  d\mu(\vw), \vz\right\rangle\\
& = \langle \gK_\Omega\mu, \vz\rangle,
\end{align*}
which proves $\gK_\Omega = \gL_\Omega^*$, as claimed.

Finally, we show that $\gK_\Omega$ is surjective, i.e., $\mathrm{Im}(\gK_\Omega)= \Csym(\Omega)$, or equivalently, $\rank \gK_\Omega = |\Omega|$. For this, it suffices to show $\gL_\Omega$ has a trivial nullspace, since by the rank-nullity theorem this implies $\rank \gL_\Omega = |\Omega|$, and so $\rank \gL_\Omega^* = \rank \gK_\Omega  = |\Omega|$, as well. 

Towards this end, suppose $\vz \in \Csym(\Omega)$ is non-zero. We will show $\gL_\Omega\vz$ is not identically zero, i.e., there exists a vector $\vw_1' \in \Peta$ such that $[\gL_\Omega\vz](\vw_1') \neq 0$. First, observe that for all $\vw \in \Peta$ we have
\begin{align*}
[\gL_\Omega\vz](\vw) & = \langle \gF_\Omega [\vw^\T\bm\gamma(\cdot)]_+,\vz\rangle\\
& = 
\int_{\Td}[\vw^\T\bm\gamma(\vx)]_+[\gF_\Omega^*\vz](\vx)\,d\vx.
\end{align*}
Define $\Phi = \gF_\Omega^*\vz$, which is a non-zero real-valued trigonometric polynomial. Without loss of generality, assume the set of values of which $\Phi$ is positive $P = \{\vx \in \Td: \Phi(\vx) > 0\}$ is non-empty (if it is empty, the argument below can be repeated for the set over which $\Phi$ is negative, instead). By continuity of $\Phi$, $P$ is an open set, and so it contains an open ball $B(\vx_0,\epsilon)$ for some radius $\epsilon > 0$. Let 
\[\tau_0(\vx) = \frac{1}{2} + \frac{1}{2d}\sum_{j=1}^d \cos(2\pi x_j),\] which belongs to the range of the Fourier features embedding by assumption, i.e., there exists $\vw_0 \in \R^D$ such that $\tau_0(\vx) = \vw_0^\top \bm\gamma(\vx)$. Note that $0 \leq \tau_0(\vx) \leq 1$ for all $\vx \in \Td$, and it is easy to prove that $\vx = \bm 0$ is the unique global maximizer of $\tau_0$. Now, define $\tau_1(\vx) := \tau_0(\vx-\vx_0)-\alpha$ where we choose $0 < \alpha < 1$ to be such that $[\tau_1]_+$ is non-zero with support contained in $B(\vx_0,\epsilon)$. Since $\tau_1$ is simply a translation and constant shift of $\tau_0$ it has the same frequency support as $\tau_0$. Therefore, there exists a non-zero vector $\vw_1 \in \R^D$ such that $\tau_1(\vx) = \vw_1^\T \bm\gamma(\vx)$. Defining $\vw_1' = \vw_1/\|\vw_1\|_2$, we see that
\[
[\gL_\Omega\vz](\vw_1') = \|\vw_1\|_2^{-1}
\int_{\Td}[\tau_1(\vx)]_+\Phi(\vx)\,d\vx = \|\vw_1\|_2^{-1}\int_{P}[\tau_1(\vx)]_+\Phi(\vx)\,d\vx > 0.
\]
Finally, since $[(\vw_1')^\T\bm\gamma(\cdot)]_+ = \|\vw_1\|_2^{-1}[\tau_1(\cdot)]_+ \neq 0$, by the admissibility conditions it must be the case that $\eta(\vw_1') > 0$, which gives $\vw_1' \in \Peta$. This shows there exists $\vw_1' \in \Peta$ such that $[\gL_\Omega \vz](\vw_1') \neq 0$, as claimed.
\subsection{Description of the Dual Problem}\label{app:dual}
The optimization problem \eqref{eq:opt_func_space} is convex. We now describe the associated convex dual problem\footnote{Technically speaking, this is the pre-dual problem, since the Banach space $\gM(\Peta)$ is not reflexive. However, formally it has the same properties as a convex dual, so we call it the dual problem for simplicity.}.
\begin{myLem}
Let $p^*$ be the minimum of the primal problem \eqref{eq:opt_func_space}. Then we have $p^* = d^*$ where
\begin{equation}\label{eq:dual_problem}
   d^* = \sup_{\vz \in \Csym(\Omega)} \langle \vy,\vz\rangle~~s.t.~~|[\gL_\Omega\vz](\vw)| \leq \eta(\vw)~~\forall \vw \in \Peta,\tag{\ensuremath{D}}
\end{equation}
where $\gL_\Omega: \Csym(\Omega)\rightarrow \gC_0(\Peta )$ is the bounded linear operator defined in \eqref{eq:predualofKOmega}.
\end{myLem}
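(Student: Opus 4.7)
My plan is to derive the dual by Lagrangian duality, using the fact that $\gK_\Omega = \gL_\Omega^*$ as established in the proof of \Cref{lem:low_width_minimizers}. Form the Lagrangian
\begin{equation*}
L(\mu, \vz) = \|\mu\|_{TV,\eta} + \langle \vz, \vy - \gK_\Omega\mu\rangle = \|\mu\|_{TV,\eta} - \langle \gL_\Omega\vz, \mu\rangle_{\gC_0,\gM} + \langle \vy,\vz\rangle,
\end{equation*}
where the middle pairing is the duality pairing between $\gC_0(\Peta)$ and $\gM(\Peta)$. The primal value is $p^* = \inf_\mu \sup_\vz L(\mu,\vz)$ (the inner sup enforces $\gK_\Omega\mu = \vy$), and I propose to show that swapping the order yields the claimed dual.

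The key identity I would use is the dual characterization of the weighted TV norm,
\begin{equation*}
\|\mu\|_{TV,\eta} = \sup\Bigl\{\textstyle\int \phi\, d\mu \; : \; \phi \in \gC_0(\Peta),\; |\phi(\vw)| \leq \eta(\vw)\;\forall \vw \in \Peta\Bigr\},
\end{equation*}
which follows from the polar decomposition $d\mu = u\, d|\mu|$ together with a continuous approximation of $u\,\eta$ by functions in $\gC_0(\Peta)$ (admissible since $\eta$ is continuous and vanishes on $\partial \Peta$). This identity immediately gives the inner infimum
\begin{equation*}
\inf_{\mu \in \gM(\Peta)}\bigl(\|\mu\|_{TV,\eta} - \langle \gL_\Omega\vz, \mu\rangle\bigr) =
\begin{cases} 0 & \text{if } |[\gL_\Omega\vz](\vw)| \leq \eta(\vw)\;\forall \vw \in \Peta, \\ -\infty & \text{otherwise,}\end{cases}
\end{equation*}
where the ``otherwise'' branch follows by plugging in $\mu = t\,\mathrm{sgn}([\gL_\Omega\vz](\vw_0))\,\delta_{\vw_0}$ at a point $\vw_0$ violating the bound and letting $t\to\infty$. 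This yields the dual objective $\langle \vy,\vz\rangle$ subject to dual feasibility, so $\sup_\vz \inf_\mu L(\mu,\vz) = d^*$. Weak duality $p^* \geq d^*$ is then automatic.

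The main obstacle is \emph{strong duality}, i.e., justifying the sup-inf swap. My plan is to invoke a Fenchel--Rockafellar-type theorem: writing the primal as $\inf_\mu f(\mu) + g(\gK_\Omega \mu)$ with $f = \|\cdot\|_{TV,\eta}$ and $g = \iota_{\{\vy\}}$, the relevant qualification is that $\vy$ lies in the relative interior of $\gK_\Omega(\mathrm{dom}\,f)$. Since $f$ is finite on all of $\gM(\Peta)$, $\gK_\Omega$ was shown to be surjective onto the finite-dimensional space $\Csym(\Omega)$ in the proof of \Cref{lem:low_width_minimizers}, and $\vy \in \mathrm{Im}(\gK_\Omega) = \Csym(\Omega)$ is automatically interior (finite dimension), this qualification holds. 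Combined with the fact that $p^* < \infty$ (guaranteed by \Cref{lem:low_width_minimizers}), standard Fenchel--Rockafellar duality then gives $p^* = d^*$ and closes the proof. An alternative, more elementary route would be to exploit that the constraint space $\Csym(\Omega)$ is finite-dimensional and apply a direct separation-of-convex-sets argument in $\R^{|\Omega|} \times \R$ to the epigraph of the value function $v(\vy') = \inf\{\|\mu\|_{TV,\eta} : \gK_\Omega\mu = \vy'\}$, which is convex and finite everywhere by surjectivity, hence continuous at the interior point $\vy$.
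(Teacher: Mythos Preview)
Your proposal is correct and uses the same Fenchel--Rockafellar framework as the paper, but you apply it in the opposite direction. The paper treats \eqref{eq:dual_problem} as the ``primal'' Fenchel problem $\inf_\vz F(\vz) + G(\gL_\Omega \vz)$ with $F(\vz) = -\langle\vy,\vz\rangle$ and $G$ the indicator of $\{\phi\in\gC_0(\Peta):|\phi|\leq\eta\}$, so that \eqref{eq:opt_func_space} appears as its Fenchel dual over $\gC_0(\Peta)^* = \gM(\Peta)$; the qualification they verify is simply that $G$ is finite and continuous at $\gL_\Omega\bm 0 = 0$, which is immediate since $0$ lies in the interior of the constraint set. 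Your direction instead takes \eqref{eq:opt_func_space} as primal and needs an Attouch--Brezis-type condition $\vy \in \mathrm{int}\,\gK_\Omega(\gM(\Peta))$, which you correctly derive from the surjectivity of $\gK_\Omega$ established in the proof of \Cref{lem:low_width_minimizers}. The paper's orientation is slightly more economical---it avoids invoking surjectivity, sidesteps the non-reflexivity of $\gM(\Peta)$ by working on the pre-dual side, and yields existence of a minimizer for \eqref{eq:opt_func_space} as a byproduct---whereas your orientation is equally valid but leans on that extra ingredient and would naturally deliver attainment in \eqref{eq:dual_problem} instead.
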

\begin{proof}
First, we may re-cast \eqref{eq:dual_problem} as the convex minimization problem:
\begin{equation*}
-d^* = \inf_{\vz \in \Csym(\Omega)} F(\vz) + G(\gL_\Omega\vz),
\end{equation*}
where $F(\vz) = -\langle \vy,\vz\rangle$ and $G$ is the indicator function of the convex set \[C = \{\phi \in \gC_0(\Peta) : |\phi(\vw)| \leq \eta(\vw), \forall \vw \in \Peta\}.\]
Since $\gL_\Omega$ is a bounded linear operator whose adjoint is $\gK_\Omega: \gM(\Peta)\rightarrow \Csym(\Omega)$, by Fenchel-Rockafellar duality \cite[Theorem II.4.1]{ekeland1999convex} we have
\begin{equation}\label{eq:fenchelstep}
-d^* \geq \sup_{\mu \in \gM(\Peta)} -G^*(\mu) -F^*(\gK_\Omega\mu),
\end{equation}
where $G^*$ and $F^*$ are the convex conjugates of $G$ and $F$, respectively. By the characterization of the total variation $\|\cdot\|_{TV}$ as a dual norm, we have:
\begin{align*}
\|\mu\|_{TV,\eta} = \|\eta\cdot \mu\|_{TV} & =  \sup_{\substack{\psi \in \gC_0(\mathbb{S}^{D-1})\\|\psi(\vw)|\leq 1,\forall \vw\in\mathbb{S}^{D-1}}} \int_{\S^{D-1}} \psi(\vw) \eta(\vw) d\mu(\vw)\\
& =  \sup_{\substack{\phi \in \gC_0(\Peta)\\|\phi(\vw)|\leq \eta(\vw),\forall \vw\in\Peta}} \int_{\Peta} \phi(\vw) d\mu(\vw) = \sup_{\phi \in C} \langle \mu,\phi\rangle = G^*(\mu).
\end{align*}
Additionally, it is straightforward to show $F^*$ is the indicator function of the singleton $\{\vy\}$. Therefore, \cref{eq:fenchelstep} is equivalent to 
\[
d^* \leq p^* := \inf_{\mu \in \gM(\Peta)}\|\mu\|_{TV,\eta}~~\mathrm{s.t.}~~\gK_\Omega\mu = \vy.
\]
Finally, since it is clear that $F$ and $G$ are proper convex lower semi-continuous functions, $F(\vz)$ is finite at $\vz = \bm 0$, and $G(\phi) $ is finite and continuous at $\phi = \gL_\Omega\bm 0 = 0$, by  \cite[Theorem II.4.1]{ekeland1999convex} strong duality holds, i.e., $p^* = d^*$, as claimed. Additionally, \cite[Theorem II.4.1]{ekeland1999convex} ensures the primal problem has a solution, i.e., there exists a $\mu^* \in \mathcal{M}(\Peta)$ attaining the infimum, so that the use of $\min$ in place of $\inf$ in \eqref{eq:opt_func_space} is justified.
\end{proof}

Now we show how the existence of a particular feasible variable for the dual problem \eqref{eq:dual_problem} ensures optimality of a sparse measure for the primal problem \eqref{eq:opt_func_space}.
Suppose $\vy = \gF_\Omega f$ where $f = \sum_{i=1}^s a_i [\vw_i^\top \bm\gamma(\cdot)]_+$ such that $a_i \neq 0$, $\vw_i \in \Peta$ for all $i\in[s]$, and $\vw_i \neq \vw_j$ for all $i\neq j$. 
Then for $\mu^* = \sum_{i=1}^s a_i\delta_{\vw_i}$ we have $\|\mu^*\|_{TV,\eta} = \sum_{i=1}^s |a_i|\eta(\vw_i)$, and by linearity of $\gF_\Omega$, we see that
\begin{align*}
\langle \vy, \vz \rangle & = \langle \gF_\Omega f,\vz\rangle = \sum_{i=1}^s a_i \langle \gF_\Omega[\vw_i^\T\bm\gamma(\cdot)]_+,\vz\rangle = \sum_{i=1}^s a_i [\gL_\Omega\vz](\vw_i).
\end{align*}
Therefore, if there exists a dual feasible $\vz \in \Csym(\Omega)$ such that $\Phi = \gL_\Omega\vz$ satisfies $\Phi(\vw_i) = \sign(a_i)\eta(\vw_i)$ for all $i\in [s]$, we see that 
\[
\langle \vy,\vz\rangle = \sum_{i=1}^s |a_i|\eta(\vw_i) = \|\mu^*\|_{TV,\eta},
\]
i.e., the dual objective matches the primal objective, which implies $\mu^*$ is a global minimizer of \eqref{eq:opt_func_space}, and so $f = f_{\mu^*}$ is a function space minimizer.

However, this argument does not determine whether $f_{\mu^*}$ is the unique function space minimizer of \eqref{eq:opt_func_space}. A sufficient condition that guarantees uniqueness is as follows:
\begin{myLem}\label{lem:dual_cert}
Suppose $\mu^* = \sum_{i=1}^s a_i \delta_{\vw_i}$ is feasible for \eqref{eq:opt_func_space} where the $\vw_i \in \Peta$ are distinct and $a_i \neq 0$ for all $i\in[s]$. Also, suppose there exists a finite set $\gW \subset \Peta$ such that $\{\vw_1,...,\vw_s\} \subseteq \gW$, and
for any measure $\nu_0 \in \gM(\Peta)$
with $\mathrm{supp}(\nu_0) \subseteq \gW$
we have $\gK_\Omega \nu_0 = 0$ implies $f_{\nu_0} = 0$. 
Furthermore, if there exists $\vz \in \Csym(\Omega)$ such that $\Phi = \gL_\Omega \vz$ satisfies
\begin{enumerate}
    \item[(i)] $|\Phi(\vw)| \leq \eta(\vw)$ for all $\vw \in \Peta$,
    \item[(ii)] $\Phi(\vw_i) = \sign(a_i)\eta(\vw_i)$ for all $i\in [s]$, and
    \item[(iii)] $|\Phi(\vw)| < \eta(\vw)$ for all $\vw \in \Peta/\gW$,
\end{enumerate}
then $f_{\mu^*} = \sum_{i=1}^s a_i[\vw_i^\T\bm\gamma(\cdot)]_+$ is the unique function space minimizer of \eqref{eq:opt_func_space}.
\end{myLem}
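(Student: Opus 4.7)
The plan is to deploy the classical dual-certificate template used for $\ell^1$/total-variation recovery, adapted to the $\eta$-weighted TV norm. The argument factors into three stages: (a) show $\mu^*$ is a minimizer by exhibiting matching primal and dual objective values; (b) show that every other minimizer must be supported in the finite set $\gW$, by extracting equality conditions from the weak-duality chain of inequalities; (c) pass from measure-space identification to function-space uniqueness via the nullspace hypothesis on $\gW$.

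Stage (a) is the short part. Condition (i) makes $\vz$ feasible for \eqref{eq:dual_problem}, and condition (ii) combined with the adjoint identity $\gK_\Omega = \gL_\Omega^*$ gives
$$\langle \vy, \vz\rangle = \langle \gK_\Omega \mu^*, \vz\rangle = \int_\Peta \Phi\, d\mu^* = \sum_{i=1}^s a_i\,\sign(a_i)\,\eta(\vw_i) = \|\mu^*\|_{TV,\eta}.$$
By weak duality (which is strong duality in this setting, from the previous lemma in \Cref{app:dual}), $\mu^*$ attains the primal minimum of \eqref{eq:opt_func_space}.

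Stage (b) is the heart of the proof and is where conditions (ii)--(iii) are both essential. Let $\mu'$ be any minimizer. The same adjoint computation as above shows $\int_\Peta \Phi\, d\mu' = \langle \vy,\vz\rangle = \|\mu^*\|_{TV,\eta} = \|\mu'\|_{TV,\eta}$. Squeezing this value between the elementary bounds
$$\int_\Peta \Phi\, d\mu' \;\leq\; \int_\Peta |\Phi|\, d|\mu'| \;\leq\; \int_\Peta \eta\, d|\mu'| \;=\; \|\mu'\|_{TV,\eta}$$
then forces equality throughout. Equality in the second inequality means $\int_\Peta(\eta - |\Phi|)\,d|\mu'| = 0$. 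Since $\eta - |\Phi|$ is non-negative, continuous, and, by (iii), strictly positive on the open set $\Peta \setminus \gW$, this forces $\mathrm{supp}(\mu') \subseteq \gW$: indeed, any point $\vw \in \Peta \setminus \gW$ has a neighborhood on which $\eta - |\Phi|$ is bounded below by a positive constant, so if $|\mu'|$ had any mass on that neighborhood the integral would be strictly positive, a contradiction.

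Stage (c) is then immediate. Setting $\nu_0 := \mu' - \mu^*$, we have $\mathrm{supp}(\nu_0) \subseteq \gW$ (since $\mathrm{supp}(\mu^*) = \{\vw_1,\dots,\vw_s\} \subseteq \gW$ by assumption and $\mathrm{supp}(\mu') \subseteq \gW$ by Stage (b)) and $\gK_\Omega \nu_0 = \vy - \vy = 0$, so the nullspace hypothesis gives $f_{\nu_0} = 0$, i.e., $f_{\mu'} = f_{\mu^*}$. Since this holds for any minimizer $\mu'$, the function space minimizer is unique. The one step requiring care is the continuity-plus-openness argument at the end of Stage (b); everything else is a direct transcription of the standard dual-certificate proof, and condition (iii) is precisely the ingredient that makes that step work.
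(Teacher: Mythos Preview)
Your proof is correct and complete. It differs from the paper's in its overall structure, though both are standard dual-certificate templates and use the same three ingredients (conditions (i)--(iii) and the nullspace hypothesis on $\gW$).

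The paper argues by contradiction at the level of \emph{any feasible competitor}: it takes an arbitrary feasible $\mu$ with $f_\mu \neq f_{\mu^*}$, applies the Lebesgue decomposition to $\nu = \mu - \mu^*$ with respect to the counting measure on $\gW$, writes $\nu = \nu_0 + \nu_1$ with $\mathrm{supp}(\nu_0)\subseteq\gW$ and $\nu_1$ singular to $\gW$, shows $\nu_1 \neq 0$ (else the nullspace hypothesis would force $f_\mu = f_{\mu^*}$), and then uses (iii) on the $\nu_1$ piece to obtain the strict inequality $\|\mu\|_{TV,\eta} > \|\mu^*\|_{TV,\eta}$ directly. In particular the paper never separately establishes that $\mu^*$ is optimal---that falls out of the same chain of inequalities.

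Your route is more modular: you first pin down optimality of $\mu^*$ via strong duality, then use the equality case of the weak-duality chain to force $\mathrm{supp}(\mu')\subseteq\gW$ for any minimizer $\mu'$, and only then invoke the nullspace hypothesis. This avoids the Lebesgue decomposition entirely and is arguably cleaner to read; the paper's version, on the other hand, yields the slightly stronger statement that every feasible $\mu$ with $f_\mu \neq f_{\mu^*}$ is \emph{strictly} suboptimal (not merely that all minimizers realize the same function), which is of course equivalent but packaged differently.
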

\begin{proof}
Suppose $\mu \in \gM(\Peta)$ is another measure feasible for \eqref{eq:opt_func_space} (i.e., $\gK_\Omega \mu = \vy$) such that $f_\mu \neq f_{\mu^*}$. Define $\nu = \mu-\mu^*$. By Lebesgue's Decomposition Theorem, there exist measures $\nu_0, \nu_1 \in \gM(\Peta)$ such that $\nu = \nu_0 + \nu_1$ where $\nu_0$ is absolutely continuous with respect to the measure $\delta_{\gW} := \sum_{\vw \in \gW} \delta_{\vw}$ and $\nu_1$ is mutually singular to $\delta_{\gW}$. In particular, we have $\mathrm{supp}(\nu_0) \subseteq \gW$, and so $\nu_1$ is also mutually singular to $\mu^*$.

First, we show $\nu_1 \neq 0$ by way of contradiction: if $\nu_1 = 0$ then
\[
\gK_\Omega \nu_0 = \gK_\Omega \nu = \gK_\Omega\mu-\gK_\Omega\mu^* = 0,
\]
and so $f_{\nu_0} = 0$ by assumption. This implies $f_{\nu} = 0$, or equivalently, $f_{ \mu} = f_{\mu^*}$, a contradiction. Therefore, $\nu_1\neq 0$, and in particular $\|\nu_1\|_{TV,\eta} > 0$ since $\|\cdot\|_{TV,\eta}$ is a norm.

Next, using the fact that $\nu_1$ is mutually singular to both $\mu^*$ and $\nu_0$, we have
\begin{align*}
\|\mu\|_{TV,\eta} & = \|\mu^* + \nu\|_{TV,\eta}\\
& = \|\mu^* + \nu_0\|_{TV,\eta} + \|\nu_1\|_{TV,\eta}\\
& >  \langle \mu^* + \nu_0, \Phi \rangle + \langle\nu_1,\Phi\rangle\\
& = \langle \mu^*,\Phi \rangle + \langle\nu,\Phi\rangle\\
& = \|\mu^*\|_{TV,\eta} + \langle \nu,\Phi\rangle\\
& = \|\mu^*\|_{TV,\eta}+ \langle \nu,\gL_\Omega\vz\rangle\\
& = \|\mu^*\|_{TV,\eta} + \langle \gK_\Omega\nu,\vz\rangle\\
& = \|\mu^*\|_{TV,\eta},
\end{align*}
where the strict inequality comes from the fact that  $\|\nu_1\|_{TV,\eta} > \langle\nu_1,\Phi\rangle$ since $\mathrm{supp}(\nu_1) \subset \Peta/\gW$ and the assumption that $|\Phi(\vw)| < \eta(\vw)$ for all $\vw \in \Peta/\gW$. 
Therefore, $\|\mu\|_{TV,\eta} > \|\mu^*\|_{TV,\eta}$ for all feasible $\mu$ with $f_{\mu}  \neq f_{\mu^*}$, which shows $f_{\mu^*} = \sum_{i=1}^s a_i[\vw_i^\T \bm\gamma(\cdot)]_+$ is the unique function space minimizer of \eqref{eq:opt_func_space}. 
\end{proof}

We call the function $\Phi$ appearing in \Cref{lem:dual_cert} a \emph{dual certificate} for the primal feasible measure $\mu^*$ associated with the support set $\gW$. The proofs of our exact recovery theorems, given below, rely on explicitly constructing dual certificates. 

\section{Proof of Sampling Theorems}
For our proofs in this section, it will be convenient to consider a complex Fourier basis for trigonometric polynomials. For any $\Gamma \subset \mathbb{Z}^d$ that is finite and symmetric (i.e., $\Gamma = -\Gamma$), we let $\TP(\Gamma) \subset L^2(\torus^d)$ denote the space of real-valued trigonometric polynomials $\tau:\torus^d\rightarrow\R$ with Fourier coefficients supported in $\Gamma$. Equivalently, we have
\[
\TP(\Gamma) = \left\{ \sum_{\vk \in \Gamma} c[\vk] e^{i2\pi \vk^\T \vx}  : \vc \in \Csym(\Gamma)\right\},
\]
where recall that we use $\Csym(\Gamma)$ to denote all complex-valued conjugate symmetric arrays indexed by $\Gamma$, i.e., $\vc = (c[\vk] : \vk \in \Gamma) \in \Csym(\Gamma)$ if and only if $c[\vk] = \overline{c[-\vk]}$ for all $\vk \in \Gamma$.
Note that $\TP(\Gamma)\simeq \Csym(\Gamma)$ by identifying any $\tau \in \TP(\Gamma)$ with its array of Fourier coefficients $\F_\Gamma\tau = (\hat{\tau}[\vk] : \vk \in \Gamma) \in \Csym(\Gamma)$.

In particular, suppose the Fourier features embedding $\bm\gamma$ given in \eqref{eq:ffmap} is defined in terms of frequencies $\Omega_0 = \{\vk_1,...,\vk_p\} \subset \Zd$ such that $\vk_i \neq \pm \vk_j$ for all $i\neq j$ and $\vk_i \neq \bm 0$ for all $i \in [p]$. Then we have
\[
    \left\{ \vw^\T \bm\gamma(\cdot) : \vw\in\R^{2p+1} \right\} = \TP(\Omega_0^*)
\]
where $\Omega_0^* = \Omega_0 \cup -\Omega_0 \cup \{\bm 0\}$. Furthermore, in this case the components of $\bm\gamma$ are orthonormal as functions in $L^2(\torus^d)$, and so if $\tau = \vw^\T \bm\gamma(\cdot)$ we have
$\|\tau\|_{L^2(\Td)} =\|\vw\|_2$.

Finally, we will frequently use the fact that if $\tau,\sigma \in \TP(\Gamma)$ where $\Gamma \subset \mathbb{Z}^d$ is finite and symmetric, then by the convolution theorem, their product $\tau\sigma \in \TP(2\Gamma)$ where $2\Gamma := \{\vk + \bm\ell : \vk,\bm\ell \in \Gamma\}$, i.e., the Minkowski sum of $\Gamma$ with itself. For example, if $\Gamma = \{\vk \in \Zd : \|\vk\|_\infty \leq K_0\}$ for some positive integer $K_0$, then $2\Gamma = \{\vk \in \Zd : \|\vk\|_\infty \leq 2K_0\}$. Likewise, if $\Gamma = \{\vk \in \Zd : \|\vk\|_1 \leq K_0\}$ then $2\Gamma = \{\vk \in \Zd : \|\vk\|_1 \leq 2K_0\}$.

\subsection{Proof of \Cref{thm:main}}\label{sec:app:thm1_proof}

Before giving the proof of \Cref{thm:main}, we prove the following key lemma, which shows that a rectified trigonometric polynomial is uniquely identifiable from sufficiently many of its low-pass Fourier samples:
\begin{myLem}\label{lem:injectivity}

Under the same assumptions on $\Omega_0$ and $\Omega$ in \Cref{thm:main},
the map $\vw \mapsto \gF_\Omega[\vw^\T\bm\gamma(\cdot)]_+$ is injective over the set $\{\vw \in \R^D : [\vw^\T\bm\gamma(\cdot)]_+ \neq 0 \}$.
\end{myLem}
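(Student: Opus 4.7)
The plan is to reduce the injectivity statement to the assertion that the trigonometric polynomial $\tau = \vw^\T\bm\gamma(\cdot) \in \TP(\Omega_0^*)$ is uniquely determined by the low-pass Fourier samples of $[\tau]_+$. Under the stated assumption $\Omega \supseteq \{\vk : \|\vk\|_\infty \leq 3K_0\}$, we have in particular $\Omega_0^* \subseteq \Omega$, which is all this lemma will actually need. Since the coordinate functions of $\bm\gamma$ are orthonormal in $L^2(\torus^d)$, the map $\vw \mapsto \vw^\T\bm\gamma(\cdot)$ is injective, so uniqueness of $\tau$ is equivalent to uniqueness of $\vw$. Thus suppose $\tau_1,\tau_2 \in \TP(\Omega_0^*)$ satisfy $[\tau_i]_+ \not\equiv 0$ and $\gF_\Omega[\tau_1]_+ = \gF_\Omega[\tau_2]_+$, and set $h = [\tau_1]_+ - [\tau_2]_+$, so $\gF_\Omega h = 0$; the goal is to show $\tau_1 \equiv \tau_2$.

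The key ingredient is the pointwise inequality $h(\vx)(\tau_1(\vx) - \tau_2(\vx)) \geq 0$ for all $\vx \in \torus^d$, which I would verify by splitting into the four sign cases for $(\tau_1(\vx),\tau_2(\vx))$: both non-positive gives $h = 0$; both positive gives $(\tau_1-\tau_2)^2 \geq 0$; the two ``mixed'' cases give products of two terms of the same sign. Now because $\tau_1 - \tau_2 \in \TP(\Omega_0^*)$ has Fourier support in $\Omega_0^* \subseteq \Omega$ while $\hat h[\vk] = 0$ on $\Omega$, Parseval's identity yields
\[
\int_{\torus^d} h(\vx)(\tau_1(\vx) - \tau_2(\vx))\,d\vx = 0.
\]
Since $h$ is continuous (ReLU applied to smooth functions) and the integrand is continuous, non-negative, and has zero integral, it vanishes identically on $\torus^d$.

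From the case analysis, the integrand is strictly positive exactly on $\{\tau_1 > 0, \tau_2 < 0\} \cup \{\tau_1 < 0, \tau_2 > 0\}$, which is an open set by continuity of $\tau_1, \tau_2$. Pointwise vanishing therefore forces this open set to be empty, so wherever $\tau_1 > 0$ we have $\tau_2 \geq 0$, and symmetrically. Using the hypothesis $[\tau_1]_+ \not\equiv 0$, the open set $P_1 = \{\tau_1 > 0\}$ is non-empty. If $\tau_2 \equiv 0$ on $P_1$, real analyticity of trigonometric polynomials on the connected manifold $\torus^d$ forces $\tau_2 \equiv 0$ globally, contradicting $[\tau_2]_+ \not\equiv 0$. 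Hence the open set $U = P_1 \cap \{\tau_2 > 0\}$ is non-empty; on $U$ both $\tau_i$ are positive, so $h = \tau_1 - \tau_2$ and the identity $h(\tau_1-\tau_2) = (\tau_1-\tau_2)^2 \equiv 0$ on $U$ gives $\tau_1 = \tau_2$ on $U$. One more appeal to the identity theorem for real analytic functions extends this to $\tau_1 \equiv \tau_2$ on all of $\torus^d$, giving $\vw_1 = \vw_2$.

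I do not expect a serious obstacle here; the main care point is the transition from ``vanishing almost everywhere'' to ``empty interior'' of the strict-inequality set, which is what justifies using the analytic continuation argument in the final step rather than only a measure-theoretic identification. Beyond that the argument is essentially a non-negativity-plus-Parseval computation together with real analyticity of trigonometric polynomials.
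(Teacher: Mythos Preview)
Your argument is correct and takes a genuinely different route from the paper. The paper proceeds by exploiting the differential identity $\tau\,\nabla[\tau]_+ = [\tau]_+\,\nabla\tau$ a.e.: from the Fourier data $\vc = \gF_\Omega[\tau]_+$ it constructs a linear operator $\gQ_\vc$ on $\TP(\Omega_0^*)$ and shows (using the assumption $\Omega \supseteq \{\|\vk\|_\infty \leq 3K_0\}$, which ensures $\Omega|\Omega_0^* \supseteq 2\Omega_0^*$) that $\ker \gQ_\vc = \mathrm{span}\{\tau\}$. Injectivity then follows by comparing kernels. Your approach is more elementary: it only uses the monotonicity inequality $([\tau_1]_+-[\tau_2]_+)(\tau_1-\tau_2)\geq 0$, Parseval orthogonality against $\tau_1-\tau_2 \in \TP(\Omega_0^*)$, and the identity theorem for real analytic functions. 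A notable bonus is that your proof only needs $\Omega_0^* \subseteq \Omega$ (equivalently $K \geq K_0$), not the $K \geq 3K_0$ condition the paper requires for this lemma; since the proof of \Cref{thm:main} invokes the sampling hypothesis only through this lemma, your argument in fact sharpens the theorem. One minor exposition point: the integrand is not strictly positive \emph{exactly} on the mixed-sign region (it is also positive where both $\tau_i>0$ and $\tau_1\neq\tau_2$), but you only use the implication in one direction, so the logic is unaffected.
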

The key idea behind the proof of \Cref{lem:injectivity} is to show that given $\vc = \gF_\Omega[\tau]_+$ where $\tau = \vw^\T\bm\gamma(\cdot)$ we can construct a linear operator $\gQ_\vc$ acting on trigonometric polynomials whose nullspace is one-dimensional and spanned by $\tau$. Below we first show how to construct $\gQ_\vc$, and then formally prove this claim, and finally give the proof \Cref{lem:injectivity}.

First, fix any $\tau \in \TP(\Omega_0^*)$ such that $[\tau]_+\neq 0$. Let $H:\R\rightarrow\R$ denote the Heaviside step function, i.e., $H(t) = 1$ if $t\geq 0$ and $H(t) = 0$ otherwise. Observe that $[\tau]_+ = \tau H(\tau)$, and furthermore
\[
\nabla [\tau]_+ = H(\tau)\nabla \tau~~a.e.
\]
Therefore,
\[
\tau\nabla [\tau]_+ = \tau H(\tau)\nabla \tau = [\tau]_+ \nabla \tau~~a.e.
\]
or, equivalently,
\begin{equation}\label{eq:annihilate}
\tau\nabla [\tau]_+-[\tau]_+ \nabla \tau = 0~~a.e.
\end{equation}
Passing to Fourier domain, by the convolution theorem, we have
\begin{equation}\label{eq:conv1}
\left(\widehat{\nabla[\tau]_+}\ast \widehat{\tau}\right)[\vk]-\left(\widehat{[\tau]_+}\ast \widehat{\nabla\tau}\right)[\vk] = \bm 0,~~\forall\vk\in\mathbb{Z}^d,
\end{equation}
where $\ast$ denotes the linear convolution of $d$-dimensional arrays.
Using the fact that $\tau \in \TP(\Omega_0^*)$, and for any weakly differentiable functions $g$ we have $\widehat{\nabla g}[\vk] = 2\pi i \vk \widehat{g}[\vk]$ for all $\vk \in \mathbb{Z}^d$, we see that \eqref{eq:conv1} is equivalent to
\begin{align}
\sum_{\bm\ell \in \Omega_0^*}(\bm\ell-\vk) &  \widehat{[\tau]_+}[\bm\ell-\vk]\widehat{\tau}[\bm\ell]- \sum_{\bm\ell \in \Omega_0^*}\bm\ell\,\widehat{[\tau]_+}[\bm\ell-\vk]\,\widehat{\tau}[\bm\ell] = \bm 0,~~\forall\vk \in \mathbb{Z}^d. \label{eq:conv2}
\end{align}
Note that for any fixed frequency $\vk \in \mathbb{Z}^d$ the equation \eqref{eq:conv2} is linear in terms of the Fourier coefficients $(\widehat{\tau}[\bm\ell] : \bm\ell \in \Omega_0^*)$. Also, given Fourier samples $\vc = \gF_\Omega[\tau]_+ = (\widehat{[\tau]_+}[\vk] : \vk\in\Omega)$, the equations in \eqref{eq:conv2} are realizable for all frequencies $\vk \in \mathbb{Z}^d$ belonging to the set $\Omega|\Omega_0^* := \{\vk \in \mathbb{Z}^d : \bm\ell-\vk \in \Omega,~\forall\bm\ell \in \Omega_0^* \}$. 

This motivates the following definition: 
for any $\vc \in \Csym(\Omega)$, define the linear operator $\gQ_\vc: \TP(\Omega_0^*) \rightarrow \Csym(\Omega)^d$ that acts on any $\varphi \in \TP(\Omega_0^*)$ by 
\[
(\gQ_\vc \varphi)[\vk] = \sum_{\bm\ell \in \Omega_0^*}(\bm\ell-\vk)  c[\bm\ell-\vk]\widehat{\varphi}[\bm\ell]- \sum_{\bm\ell \in \Omega_0^*}\bm\ell\,c[\bm\ell-\vk]\,\widehat{\varphi}[\bm\ell],~\forall \vk \in \Omega|\Omega_0^*.
\]
In particular, when $\vc = \gF_\Omega[\tau]_+$ working backwards from \eqref{eq:conv2}, we see that
\[
\gQ_\vc\varphi = \gF_{\Omega|\Omega_0^*}(\varphi \nabla[\tau]_+-[\tau]_+ \nabla \varphi),
\]
and by the identity given in \eqref{eq:annihilate}, we see that $\gQ_{\vc}\tau = 0$, i.e., $\tau$ belongs to the nullspace of $\gQ_{\vc}$. The next result shows that when $\vc = \gF_\Omega[\tau]_+$, and $\Omega$ is sufficiently large, the nullspace of $\gQ_{\vc}$ is one-dimensional and spanned by $\tau$.
\begin{myLem}\label{lem:unique_nullspace} 
    Suppose $\Omega|\Omega_0^* \supseteq 2\Omega_0^*$. Let $\vc = \gF_\Omega[\tau]_+$ where $\tau \in \TP(\Omega_0^*)$ with $[\tau]_+\neq 0$. Then $\ker \gQ_\vc = \mathrm{span}\{\tau\}$.
\end{myLem}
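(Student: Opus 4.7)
The plan is to leverage the bilinear annihilation identity that already gave us $\tau \in \ker \gQ_\vc$ and turn it into a quadratic form via Parseval. Since $\tau \in \ker \gQ_\vc$ is already established in the main text, it suffices to prove $\ker \gQ_\vc \subseteq \mathrm{span}\{\tau\}$. So let $\varphi \in \TP(\Omega_0^*) \cap \ker \gQ_\vc$ be arbitrary. Using $\nabla[\tau]_+ = H(\tau)\nabla\tau$ and $[\tau]_+ = \tau H(\tau)$ (where $H$ is the Heaviside function), rewrite
\[
\varphi\nabla[\tau]_+ - [\tau]_+\nabla\varphi = H(\tau)\bm\chi, \qquad \bm\chi := (\chi_1,\dots,\chi_d),
\]
where $\chi_j := \varphi\,\partial_j\tau - \tau\,\partial_j\varphi$. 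Crucially, each $\chi_j$ is a product/difference of elements of $\TP(\Omega_0^*)$, hence $\chi_j \in \TP(2\Omega_0^*)$. The hypothesis $\gQ_\vc\varphi = 0$ together with $\Omega|\Omega_0^* \supseteq 2\Omega_0^*$ then translates to $\widehat{H(\tau)\chi_j}[\vk] = 0$ for every $\vk \in 2\Omega_0^*$ and $j \in [d]$.

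The key step is a Parseval calculation. Since $\hat{\chi}_j$ is supported on $2\Omega_0^*$,
\[
\int_{\torus^d} H(\tau(\vx))\,|\chi_j(\vx)|^2\,d\vx = \langle H(\tau)\chi_j,\,\chi_j\rangle_{L^2} = \sum_{\vk \in 2\Omega_0^*} \widehat{H(\tau)\chi_j}[\vk]\,\overline{\hat{\chi}_j[\vk]} = 0.
\]
Because $H(\tau) \geq 0$, and because $[\tau]_+ \neq 0$ forces the open set $\{\tau > 0\}$ to be non-empty, this integral identity combined with continuity of $\chi_j$ forces $\chi_j \equiv 0$ on $\{\tau > 0\}$. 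As $\chi_j$ is a trigonometric polynomial, hence real-analytic on the connected manifold $\torus^d$, the identity theorem for real-analytic functions yields $\chi_j \equiv 0$ on all of $\torus^d$ for each $j$.

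Pointwise, this reads $\varphi\nabla\tau = \tau\nabla\varphi$ on $\torus^d$. Pick any connected component $U_0$ of the non-empty open set $\{\tau > 0\}$; on $U_0$ we have $\nabla(\varphi/\tau) = (\tau\nabla\varphi - \varphi\nabla\tau)/\tau^2 = 0$, so $\varphi/\tau \equiv c$ for some $c \in \R$ on $U_0$. Then $\varphi - c\tau \in \TP(\Omega_0^*)$ vanishes on the non-empty open set $U_0$; invoking the identity theorem one more time gives $\varphi \equiv c\tau$ on all of $\torus^d$, completing the proof.

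The main obstacle I expect is the final passage from a local identity on one connected component of $\{\tau > 0\}$ to a global identity on $\torus^d$: in principle, $\varphi/\tau$ could take different constant values on different components of $\{\tau > 0\}$, so one must appeal to the real-analyticity of the trigonometric polynomial $\varphi - c\tau$ and the connectedness of $\torus^d$ (rather than arguing component-by-component) to glue everything together. The earlier step of deducing $\chi_j \equiv 0$ globally from its vanishing on $\{\tau > 0\}$ relies on the same identity theorem, which is the technical heart of the argument.
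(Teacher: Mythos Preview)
Your proof is correct and follows essentially the same approach as the paper. Both arguments exploit the fact that $\bm\chi := \varphi\nabla\tau - \tau\nabla\varphi \in \TP(2\Omega_0^*)^d$ together with the inclusion $2\Omega_0^* \subseteq \Omega|\Omega_0^*$ to show $\int_{\Td} H(\tau)\|\bm\chi\|_2^2\,d\vx = 0$; the paper phrases this as testing against $\bm\rho = \bm\chi \in \TP(\Omega|\Omega_0^*)^d$, whereas you phrase it as a Parseval identity, but these are the same computation, and the final real-analyticity extension via a ball (paper) or connected component (you) is likewise the same.
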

\begin{proof}
The condition $\varphi \in \ker \gQ_\vc$ is equivalent to 
\[
\gF_{\Omega|\Omega^*_0}(\varphi \nabla[\tau]_+-[\tau]_+ \nabla \varphi) = \bm 0.
\]
This implies for all vector-valued functions $\bm\rho \in \TP(\Omega|\Omega_0^*)^d$, i.e., $\bm\rho = (\rho_1,...,\rho_d)$ where $\rho_1,...,\rho_d \in \TP(\Omega|\Omega_0^*)$, we have
\begin{equation}\label{eq:nullcond}
\int_{\Td} (\varphi \nabla[\tau]_+-[\tau]_+ \nabla \varphi)^\T \bm\rho\, d\vx = 0.
\end{equation}
Recalling that $[\tau]_+ = H(\tau)\tau$ and $\nabla [\tau]_+ = H(\tau)\nabla \tau$ a.e., we see that equation \eqref{eq:nullcond} is equivalent to
\[
    \int_{\Td} H(\tau)(\varphi \nabla \tau -\tau \nabla \varphi)^\top \bm\rho\,d\vx = 0.
\]
Consider the specific choice $\bm\rho = \varphi \nabla \tau - \tau \nabla\varphi$. Then the components of $\bm\rho$ belong to  $\TP(2\Omega_0^*)$, and so $\bm\rho \in \TP(\Omega|\Omega_0^*)^d$ by the assumption that $\Omega|\Omega_0^* \supseteq 2\Omega_0^*$. Therefore, this implies
\begin{align}
\int_{\Td} H(\tau)
\|\varphi \nabla \tau -\tau \nabla \varphi\|_2^2 \,d\vx = 0,
\end{align}
which shows $\varphi(\vx) \nabla \tau(\vx) -\tau(\vx) \nabla \varphi(\vx) = \bm 0$ for all $\vx$ in the open set $P := \{\vx \in \Td : \tau(\vx) > 0\}$. Finally, we will show this implies $\varphi(\vx) = c\cdot \tau(\vx)$ for some constant $c \in \R$. 

Let $B$ be any closed ball contained in $P$. By continuity, $\tau$ must be bounded away from zero over $B$, hence $\frac{\varphi}{\tau}$ is smooth over $B$. Therefore, by the quotient rule 
\[
    \nabla\left(\frac{\varphi}{\tau}\right)(\vx) = \frac{\tau(\vx) \nabla \varphi(\vx) - \varphi(\vx) \nabla\tau(\vx)}{\tau(\vx)^2} = \bm 0~~\forall \vx \in B.
\]
This implies $\frac{\varphi}{\tau}$ is constant over $B$, and so there exists a constant $c \in \R$ such that $\varphi(\vx) = c \cdot \tau(\vx)$ for all $\vx \in B$, or equivalently, $\varphi(\vx) - c \cdot \tau(\vx) = 0$ for all $\vx \in B$. However, since a trigonometric polynomial is a real analytic function, it cannot vanish on a ball contained $\Td$ unless it is identically zero. Therefore, it must be the case that $\varphi(\vx) = c \cdot \tau(\vx)$ for all $\vx \in \Td$, as claimed.
\end{proof}

Now we prove \Cref{lem:injectivity}.
\begin{proof}[Proof of \Cref{lem:injectivity}]
Given $\vw_0,\vw_1 \in \Peta$, define trigonometric polynomials $\tau_0 = \vw_0^\T\bm\gamma(\cdot)$ and $\tau_1 = \vw_1^\T\bm\gamma(\cdot)$, and suppose for $\vc_0 = \gF_\Omega[\tau_0]_+$ and $\vc_1 = \gF_\Omega[\tau_1]_+$ we have $\vc_0 = \vc_1$. Then $\gQ_{\vc_0} = \gQ_{\vc_1}$ and so $\ker \gQ_{\vc_0} = \ker \gQ_{\vc_1}$. By \Cref{lem:unique_nullspace} this implies $\tau_0 = c\cdot\tau_1$ for some $c\in\R$. Since $\|\tau_0\|_{L^2(\Td)} = \|\vw_0\|_2 = 1$ and $\|\tau_1\|_{L^2(\Td)} = \|\vw_1\|_2 = 1$, we must have $c = \pm 1$. Suppose it were the case that $c = -1$, i.e., $\tau_1 = -\tau_0$, and so $\gF_\Omega[\tau_0]_+ = \gF_\Omega[-\tau_0]_+$. By the identity $[t]_+-[-t]_+ = t$ for all $t\in \R$, we see this implies $\gF_\Omega([\tau_0]_+-[-\tau_0]_+) = \gF_\Omega \tau_0 = \bm 0$. Since $\Omega_0 \subset \Omega$, this can only be the case if $\tau_0 = 0$, which is ruled out by our assumption that $[\tau_0]_+ \neq 0$. Therefore, $c=1$, and so $\tau_0 = \tau_1$ or equivalently, $\vw_0 = \vw_1$, which proves the claim.
\end{proof}

Finally, we prove \Cref{thm:main} by constructing a dual certificate and invoking \Cref{lem:dual_cert}.
\begin{proof}[Proof of \Cref{thm:main}]
Let $\eta(\vw) = \|\gF_\Omega[\vw^\T\bm\gamma(\cdot)]_+\|_2$ and $f = a_1[\vw_1^\T\bm\gamma(\cdot)]_+$. Define $\vw_0 = \vw_1/\|\vw_1\|$ and $a_0 = \|\vw_1\|a_1$. Then we have $f = a_0[\vw_0^\T \bm\gamma(\cdot)]_+$ with $\vw_0 \in S_\eta$.
We will show that
$\vz_0 := \sign(a_0) \eta(\vw_0)^{-1} \gF_\Omega[\vw_0^\T\bm\gamma(\cdot)]_+$
yields a valid dual certificate $\Phi = \gL_\Omega \vz_0$ for the measure $\mu = a_0\delta_{\vw_0}$ with respect to the support set $\gW = \{\vw_0\}$. 

First, observe that for all $\vw \in \Peta$ we have
\[
\Phi(\vw)  = \sign(a_0)\eta(\vw_0)^{-1}\langle \gF_\Omega[\vw^\T\bm\gamma(\cdot)]_+,\gF_\Omega[\vw_0^\T\bm\gamma(\cdot)]_+\rangle.
\]
Plugging in $\vw = \vw_0$, and since $\eta(\vw_0) = \|\gF_{\Omega}[\vw_0^\T\bm\gamma(\cdot)]_+\|_2$, we see that
\[
\Phi(\vw_0) = \sign(a_0)\eta(\vw_0)^{-1}\|\gF_{\Omega}[\vw_0^\T\bm\gamma(\cdot)]_+\|_2^2 = \sign(a_0) \eta(\vw_0).
\]
Also, by the Cauchy-Schwarz inequality, for all $\vw \in \Peta$ we have
\[
|\Phi(\vw)|  \leq \eta(\vw_0)^{-1}\|\gF_{\Omega}[\vw^\T\bm\gamma(\cdot)]_+\|_2\, \|\gF_{\Omega}[\vw_0^\T\bm\gamma(\cdot)]_+\|_2 = \eta(\vw),
\]
with strict inequality when $\gF_{\Omega}[\vw_0^\top\bm\gamma(\cdot)]_+$ is not co-linear with $\gF_{\Omega}[\vw^\T\bm\gamma(\cdot)]_+$. Therefore, to prove $\Phi$ is a valid dual certificate, it suffices to show $\gF_{\Omega}[\vw_0^\top\bm\gamma(\cdot)]_+$ is not co-linear with $\gF_{\Omega}[\vw^\T\bm\gamma(\cdot)]_+$ when $\vw \neq \vw_0$.

Towards this end, suppose $\gF_{\Omega}[\vw^\T\bm\gamma(\cdot)]_+ = c \cdot \gF_{\Omega}[\vw_0^\T\bm\gamma(\cdot)]_+$ for some $c \in \R$. 
First, note that for all $\vv \in \Peta$ we have $\widehat{[\vv^\T\bm\gamma(\cdot)]_+}[\bm 0] = \int_{\Td}[\vv^\T\bm\gamma(\cdot)]_+d\vx > 0$, since otherwise this would imply $[\vv^\T\bm\gamma(\cdot)]_+$ is identically zero, and so $\eta(\vv) = \|\gF_\Omega[\vv^\T\bm\gamma(\cdot)]_+\|_2 =  0$, contrary to assumption that $\vv \in \Peta$. This implies $c = \widehat{[\vw^\T\bm\gamma(\cdot)]_+}[\bm 0]/\widehat{[\vw_0^\T\bm\gamma(\cdot)]_+}[\bm 0] > 0$. Therefore, by linearity of $\gF_\Omega$ and positive 1-homogeneity of the ReLU activation, we have $\gF_\Omega[\vw^\T\bm\gamma(\cdot)]_+ = \gF_\Omega[(c\vw_0)^\T\bm\gamma(\cdot)]_+$. However, by \Cref{lem:injectivity}, we see this implies $\vw = c\vw_0$, and since we assume $\|\vw\|_2 = \|\vw_0\|_2 =1$, this can be the case only if $c = 1$, and so $\vw = \vw_0$. Therefore, if $\vw \in \Peta$ is such that $\vw \neq \vw_0$ then $\gF_{\Omega}[\vw^\top\bm\gamma(\cdot)]_+$ is not co-linear with $\gF_{\Omega}[\vw_0^\top\bm\gamma(\cdot)]_+$, and so $|\Phi(\vw)| < \eta(\vw)$. This proves $\Phi$ is a dual certificate for the measure $\mu = a\delta_{\vw_0}$, and so by \Cref{lem:dual_cert} (with $\gW = \{\vw_0\}$), $f = f_{\mu}$ is the unique function space minimizer of \eqref{eq:opt_func_space}. Finally, since $\mu$ is 1-sparse, $f$ is also the unique function space minimizer of \eqref{eq:opt_discrete_measures} for all $W\geq 1$, and hence also of \eqref{eq:opt_param_space}, since \eqref{eq:opt_param_space} has the same function space minimizers as \eqref{eq:opt_discrete_measures}.
\end{proof}

\subsection{Proof of \Cref{thm:main2}}\label{sec:app:thm2_proof}

The proof of \Cref{thm:main2} relies on some basic algebraic facts regarding the zero-sets of trigonometric polynomials as originally developed in \cite{ongie2016off}, which we summarize here. 

First, we say $\tau:\torus^2\rightarrow \C$ is a \emph{2D trigonometric polynomial} if it has the form
\begin{equation}\label{eq:2dtau}
\tau(x_1,x_2) = \sum_{(k_1,k_2)\in \Gamma} c_{(k_1,k_2)}e^{i2\pi k_1x_1}e^{i 2\pi k_2x_2},~~(x_1,x_2)\in\torus^2,
\end{equation}
for some constants $c_{(k_1,k_2)}\in\C$ indexed by a finite subset $\Gamma \subset \mathbb{Z}^2$, and use
$Z(\tau)$ denote its zero set
\[Z(\tau) = \{(x_1,x_2) \in \torus^2 : \tau(x_1,x_2) = 0\}.\] Following \cite{ongie2016off}, we say $C \subset \torus^2$ is a \emph{trigonometric curve} if $C = Z(\tau)$ for some non-zero 2D trigonometric polynomial $\tau$, and $C$ is an infinite set that has no isolated points. In \cite{ongie2016off} it is shown that every trigonometric curve is a union of finitely many piecewise smooth curves that are closed and connected as subsets of $\torus^2$.

Given any 2D trigonometric polynomial $\tau$ of the form \eqref{eq:2dtau} we may associate a unique complex polynomial in two variables $\gP[\tau] \in \C[z_1, z_2]$ by making the substitutions $e^{i2\pi x_1} \mapsto z_1$ and $e^{i2\pi x_2} \mapsto z_2$, and multiplying by the smallest powers of $z_1$ and $z_2$ so that the resulting function has no negative powers of $z_1$ or $z_2$. In particular, if $\tau$ has the form \eqref{eq:2dtau} then $p = \gP[\tau]$ has the form
\[
    p(z_1,z_2) = z_1^{\ell_1} z_2^{\ell_2}\left(\sum_{(k_1,k_2)\in\Gamma} c_{(k_1,k_2)} z_1^{k_1}z_2^{k_2}\right),~~(z_1,z_2)\in \C^2,
\]
for some non-negative integers $\ell_1$ and $\ell_2$. Note $\tau(x_1,x_2) = 0$ if and only if $p(e^{i2\pi x_1},e^{i2\pi x_2}) = 0$.

Following \cite{ongie2016off}, we say the 2D trigonometric polynomial $\tau$ is \emph{irreducible} if $p = \gP[\tau]$ is irreducible as a polynomial in $\C[z_1,z_2]$, i.e., $p$ cannot be written as a product of two or more non-constant polynomials in $\C[z_1,z_2]$.
Also, we say the 2D trigonometric polynomial $\tau_0$ is a \emph{minimal polynomial} for the trigonometric curve $C$ if $C = Z(\tau_0)$ and $p_0 = \gP[\tau_0]$ is a product of non-constant irreducible polynomials $p_1,...,p_n \in \C[z_1,z_2]$ that are not scalar multiples of each other and the zero set of each $p_i$ has infinite intersection with the complex unit torus $\C\torus^2 = \{(z_1,z_2) \in \C^2 : |z_1|=|z_2|=1\}$. In \cite{ongie2016off} it is shown there is a unique (up to scaling) real-valued minimal polynomial associated with every trigonometric curve. In particular, if $C = Z(\tau)$ is an infinite set and $\tau$ is real-valued and irreducible, then $\tau$ is a minimal polynomial for $C$.

Additionally, we will need the following facts regarding irreducible and minimal polynomials.

\begin{myLem}[Adapted from \cite {ongie2016off}]\label{lem:tau_prop}~~\\[-0.5em]
\begin{enumerate}
    \item[(a)] Suppose $\tau, \sigma $ are irreducible trigonometric polynomials and $\tau \neq c\cdot \sigma$ for any $c\in\R$. Then $Z(\tau) \cap Z(\sigma)$ is a finite set.
    \item[(b)] Suppose $\tau$ is a minimal polynomial for the trigonometric curve $C$. Then $\nabla \tau(\vx) = \bm 0$ for at most finitely  many $\vx \in C$.
    \item[(c)] Suppose $\tau$ is a minimal polynomial for the trigonometric curve $C$, and $\sigma$ is a trigonometric polynomial that vanishes on $C$. Then $\tau$ divides $\sigma$, i.e., there exists a trigonometric polynomial $\beta$ such that $\sigma = \tau \cdot \beta$.
\end{enumerate}
\end{myLem}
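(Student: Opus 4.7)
The plan is to translate each of the three claims into a statement about polynomials in the unique factorization domain $\C[z_1,z_2]$ via the correspondence $\tau\leftrightarrow \gP[\tau]$, so that each reduces to a classical fact about plane algebraic curves. Throughout, I will rely on two properties of this correspondence: zero sets on $\torus^2$ biject with $Z(\gP[\tau])\cap\C\torus^2$ under $(x_1,x_2)\mapsto(e^{i2\pi x_1},e^{i2\pi x_2})$, and irreducibility of a trigonometric polynomial is \emph{by definition} irreducibility of its image under $\gP$.

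For part (a), I will set $p=\gP[\tau]$ and $q=\gP[\sigma]$, both irreducible in $\C[z_1,z_2]$. The hypothesis $\tau\ne c\cdot\sigma$ for all $c\in\R$ must be promoted to the statement that $p$ and $q$ are non-associate as elements of $\C[z_1,z_2]$, after accounting for the monomial prefactors introduced by $\gP$ and using reality of $\tau,\sigma$. Once $p,q$ are known to be non-associate irreducibles, a standard Bezout/resultant argument shows their common zero locus in $\C^2$ is zero-dimensional, hence a finite set; intersecting with $\C\torus^2$ and transporting back via the bijection gives finiteness of $Z(\tau)\cap Z(\sigma)$.

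For part (b), write $p=\gP[\tau]=\prod_i p_i$ as a product of pairwise non-associate irreducibles, which is permitted by the definition of minimal polynomial. A direct chain-rule computation shows that for $\vx\in Z(\tau)$, the real gradient $\nabla\tau(\vx)$ vanishes if and only if $(\partial_{z_1}p,\partial_{z_2}p)$ vanishes at the corresponding point of $Z(p)\cap\C\torus^2$. The product rule then decomposes such points into two types: points lying on $Z(p_i)\cap Z(p_j)$ for some $i\ne j$, which are finite by part (a); and points on a single $Z(p_i)$ where also $\nabla p_i=\bm 0$, i.e., singular points of the irreducible curve $Z(p_i)$. The singular locus of an irreducible curve is a proper Zariski-closed subset of that one-dimensional variety, hence zero-dimensional and finite. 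Summing over the finitely many $i$ finishes the argument.

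For part (c), with the same factorization $p=\prod_i p_i$ and $q=\gP[\sigma]$, the hypothesis that $\sigma$ vanishes on $C=Z(\tau)$ implies that $q$ vanishes on $Z(p_i)\cap\C\torus^2$, which is infinite by the defining property of a minimal polynomial. An infinite subset of an irreducible one-dimensional algebraic variety is Zariski-dense, so $Z(p_i)\subseteq Z(q)$, and Hilbert's Nullstellensatz together with irreducibility of $p_i$ yields $p_i\mid q$ in $\C[z_1,z_2]$. Coprimality of the distinct $p_i$ then gives $p\mid q$, and translating back through $\gP$ produces a trigonometric polynomial $\beta$ with $\sigma=\tau\cdot\beta$. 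The main obstacle across all three parts will not be the algebraic geometry itself, which is largely off-the-shelf, but rather careful bookkeeping at the interface between the complex polynomial world and the real trigonometric world: the monomial prefactors $z_1^{\ell_1}z_2^{\ell_2}$ built into $\gP$, the restriction from $Z(p)\subseteq\C^2$ to $Z(p)\cap\C\torus^2$, and verifying that divisibility in $\C[z_1,z_2]$ descends to a genuine \emph{real} trigonometric quotient in part (c).
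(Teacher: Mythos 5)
The paper does not actually prove this lemma: all three parts are imported wholesale from the cited reference, with part (a) attributed to Corollary A.2, part (b) to Proposition A.4, and part (c) to Proposition A.3 of \cite{ongie2016off}. Your sketch is therefore a reconstruction rather than an alternative to anything in this paper, and as a reconstruction it is sound: passing through $\gP$ to $\C[z_1,z_2]$, using a resultant/Bezout argument for non-associate irreducibles in (a), finiteness of the singular locus of an irreducible plane curve plus pairwise intersections in (b), and Zariski-density of an infinite subset of an irreducible curve together with the Nullstellensatz and unique factorization in (c) is exactly the standard route (and, given the provenance, almost certainly the route of the cited appendix). Two small points deserve explicit care beyond the ``bookkeeping'' you flag. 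First, in (a), $p=\gP[\tau]$ and $q=\gP[\sigma]$ could a priori be associate over $\C$ with \emph{different} monomial prefactors, which on the torus would give $\tau(\vx)=\lambda e^{i2\pi \vk^\T\vx}\sigma(\vx)$ with $\vk\neq\bm 0$; reality of $\tau,\sigma$ plus conjugation kills this case (it forces $\sigma\equiv 0$), and when the prefactors agree reality forces $\lambda\in\R$, contradicting $\tau\neq c\sigma$ --- worth writing out, since this is where the real hypothesis is actually used. Second, in (b) you say the mixed points are ``finite by part (a)'', but part (a) as stated is about trigonometric polynomials, whereas you need finiteness of $Z(p_i)\cap Z(p_j)$ for the non-associate irreducible \emph{complex} factors $p_i$, which need not be of the form $\gP[\cdot]$ of a real trigonometric polynomial; this is harmless because the resultant fact you invoke inside your proof of (a) is precisely the complex statement, so cite that directly rather than part (a). Also note that on $\C\torus^2$ the monomial prefactor $z_1^{\ell_1}z_2^{\ell_2}$ is nonvanishing, so it affects neither the zero sets nor the gradient criterion at points of $Z(\tau)$, and in (c) the quotient $h\in\C[z_1,z_2]$ restricted to the torus is automatically a trigonometric polynomial in the paper's (complex-coefficient) sense, which is all the lemma asks for.
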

Part (a) is special case of \cite[Corollary A.2]{ongie2016off}, part (b) is \cite[Proposition A.4]{ongie2016off}, and part (c) is \cite[Proposition A.3]{ongie2016off}.

Let $\Gamma_1 = \{[\pm 1,0]^\T, [0,\pm 1]^\T\}$. Then $\TP(\Gamma_1)$ is the space of 2D trigonometric polynomials realized by the restricted Fourier features embedding $\bm\gamma$ defined in \eqref{eq:gamma_restricted}. In particular, $\tau \in \TP(\Gamma_1)$ if and only if
\begin{equation}\label{eq:tau_reduced_form}
\tau(x_1,x_2) = w_1 \cos(2\pi x_1) + w_2\sin(2\pi x_1) + w_3\cos(2\pi x_2) +  w_4 \sin(2\pi x_2) 
\end{equation}
for some $w_1,w_2,w_3,w_4\in\R$.

Our first result shows that every $\tau \in \TP(\Gamma_1)$ is such that $C = Z(\tau)$ is a trigonometric curve for which $\tau$ is a minimal polynomial, and characterizes when $\tau$ is irreducible. The proof of this result, which relies on elementary calculus and algebra, is provided in the Supplementary Materials  \ref{sec:supp:lem_proof}.
\begin{myLem}\label{lem:taus_are_minimal}
Suppose $\tau \in \TP(\Gamma_1)$ is non-zero. Then $C = Z(\tau)$ is a trigonometric curve, and $\tau$ is a minimal polynomial for $C$. Furthermore, $\tau$ is irreducible except in the following cases:
\begin{enumerate}
    \item $\tau(x_1,x_2) = \alpha \sin(2\pi(x_1-t))$ for some $\alpha \in \R$, $t \in \torus$.
    \item $\tau(x_1,x_2) = \alpha \sin(2\pi(x_2-t))$ for some $\alpha \in \R$, $t \in \torus$.
    \item $\tau(x_1,x_2) = \alpha (\sin(2\pi(x_1-t_1)) + \sin(2\pi(x_2-t_2)))$ for some $\alpha \in \R$, $t_1,t_2 \in \torus$.
\end{enumerate}
Together, these cases correspond to all $\tau$ of the form \eqref{eq:tau_reduced_form} where the weight vector $\vw = [w_1,w_2,w_3,w_4]^\T$ belongs to the set $V$ defined in \eqref{eq:Vset}.
\end{myLem}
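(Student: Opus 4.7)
The plan is to reduce the statement to a standard factorization analysis for a single two-variable polynomial. Introducing complex parameters $a = \tfrac{1}{2}(w_1 - iw_2)$, $b = \tfrac{1}{2}(w_3 - iw_4)$ and substituting $z_j = e^{i2\pi x_j}$, the generic case $a, b \neq 0$ yields
\[
p := \gP[\tau] = az_1^2z_2 + bz_1z_2^2 + \bar b z_1 + \bar a z_2,
\]
which has total degree $3$ and degree $2$ in each variable separately. The two degenerate cases $a = 0$ (i.e.\ $w_1 = w_2 = 0$) and $b = 0$ (i.e.\ $w_3 = w_4 = 0$) correspond to cases 2 and 1 of the statement, respectively, via the amplitude--phase identity $A \cos \theta + B \sin \theta = R \sin(\theta - \varphi)$.

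To verify $Z(\tau)$ is a trigonometric curve, I would first note that $\tau$ has zero mean on $\torus^2$, so a non-zero $\tau \in \TP(\Gamma_1)$ must change sign, whence $Z(\tau)$ is non-empty and contains a smooth arc through every regular point by the implicit function theorem. The only candidates for isolated zeros are critical points of $\tau$. At any critical point one checks $w_1 \cos(2\pi x_1) + w_2 \sin(2\pi x_1) = \pm\sqrt{w_1^2 + w_2^2}$, and analogously for the second summand, so $\tau = 0$ at such a point forces $w_1^2 + w_2^2 = w_3^2 + w_4^2$ with opposite sign choices. In this situation the Hessian of $\tau$ is diagonal (the cross-partial vanishes) with entries of opposite sign, so the critical point is a saddle and $Z(\tau)$ crosses transversely through it, ruling out isolated zeros.

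The heart of the proof is the factorization analysis of $p$. Viewed as a quadratic polynomial in $z_1$ with coefficients in $\C[z_2]$, any non-trivial factor of $p$ lying purely in $\C[z_2]$ would have to divide all three coefficients $az_2$, $bz_2^2 + \bar b$, $\bar a z_2$ simultaneously, which (using $a, \bar b \neq 0$) forces it to be a unit. By Gauss's lemma, the only remaining possibility is a factorization of $p$ into two linear polynomials in $z_1$ over $\C[z_2]$, which exists iff the discriminant $\Delta(z_2) = (bz_2^2 + \bar b)^2 - 4|a|^2 z_2^2$ is a perfect square in $\C[z_2]$. Substituting $u = z_2^2$, $\Delta$ becomes a quadratic in $u$ whose discriminant simplifies to $16|a|^2(|a|^2 - |b|^2)$, so $\Delta$ is a perfect square exactly when $|a| = |b|$, in which case the quadratic formula yields the explicit factorization $p = \tfrac{1}{a}(az_1 z_2 + \bar b)(az_1 + b z_2)$. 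Translating back via the amplitude--phase identity produces case 3, with the condition $|a| = |b|$ matching exactly $w_1^2 + w_2^2 = w_3^2 + w_4^2$. Minimality of $\tau$ in each reducible case then follows by inspection: the two factors have different monomial supports (so are not scalar multiples of one another), and each vanishes on infinitely many points of $\C\torus^2$ (using $|b/a| = 1$ in case 3, and $|\bar a / a| = 1$ for the roots of $az_1^2 + \bar a$ in cases 1--2).

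The main obstacle will be the careful enumeration of factorization types for $p$ in $\C[z_1, z_2]$; in particular, one must exclude ``mixed'' factorizations that are neither purely in $\C[z_2]$ nor linear in $z_1$. This is resolved by degree bookkeeping: $p$ is quadratic in $z_1$, so a non-trivial factorization must split the $z_1$-degree either as $1 + 1$ (handled via the discriminant) or as $0 + 2$ (the purely-$\C[z_2]$ case ruled out above). A secondary care point is translating constants between the trigonometric and polynomial parametrizations, to confirm that the three exceptional cases of the lemma correspond precisely to the set $V$.
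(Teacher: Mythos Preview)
Your proposal is correct and takes a genuinely different route from the paper, particularly in the factorization analysis.

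For the ``no isolated zeros'' step, the paper argues that an isolated zero must be a strict local extremum, forcing the Hessian to be semi-definite; since the diagonal entries of $\nabla^2\tau$ sum to $-4\pi^2\tau(x^*)=0$, they are either both zero or of opposite sign, and the paper then chases the both-zero case down to $\tau\equiv 0$. Your saddle/Morse argument reaches the same conclusion more directly in the generic case $a,b\neq 0$, since the diagonal entries are $\mp 4\pi^2\sqrt{w_1^2+w_2^2}$ and $\pm 4\pi^2\sqrt{w_3^2+w_4^2}$, both nonzero, so the critical point is automatically a nondegenerate saddle.

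For the irreducibility analysis, the paper writes out a generic degree-$1$ times degree-$2$ factorization $p=(a_0+a_1z_1+a_2z_2)(b_0+\cdots+b_5z_2^2)$ and matches all ten coefficients by hand. Your approach---viewing $p$ as a quadratic in $z_1$ over $\C[z_2]$, checking that the content is trivial, and then applying the discriminant criterion---is considerably cleaner and yields the condition $|a|=|b|$ (i.e., $w_1^2+w_2^2=w_3^2+w_4^2$) in one stroke, together with the explicit factorization $p=\tfrac{1}{a}(az_1z_2+\bar b)(az_1+bz_2)$. Your degree-bookkeeping remark (the $z_1$-degree must split as $0+2$ or $1+1$) is exactly what justifies that these are the only two factorization types to check. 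The paper's coefficient-comparison is more elementary and self-contained, avoiding any appeal to Gauss's lemma, but is substantially longer; your method scales better and makes the connection between the algebraic condition and the set $V$ transparent.
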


Next, we prove a key lemma, which shows that given any $\tau \in \TP(\Gamma_1)$, a second-order differential operator applied to $[\tau]_+$ yields a distribution (i.e., generalized function \cite{friedlander1998introduction}) supported on $Z(\tau)$. Later, we use this property to construct a dual certificate.

Below, for a distribution $D$ and $\mathcal{C}^\infty$-smooth test function $\varphi:\torus^2\rightarrow\R$, we let $\langle D,\varphi\rangle$ denote the evaluation of $D$ at $\varphi$. In particular, when $D$ can be identified with a function $g \in L^1(\torus^2)$, we have $\langle D,\varphi\rangle = \int_{\torus^2} g \varphi \,d\vx$.

\begin{myLem}\label{lem:laplace_curve}
Let $\tau\in \TP(\Gamma_1)$, and let $\Delta = \frac{\partial^2}{\partial x_1^2} + \frac{\partial^2}{\partial x_2^2}$ denote the Laplacian. Then $(\Delta+4\pi^2)[\tau]_+$ is a distribution acting on $\mathcal{C}^\infty$-smooth test functions $\varphi:\torus^2\rightarrow \R$ by
\[
\langle(\Delta+4\pi^2)[\tau]_+,\varphi\rangle = \oint_{Z(\tau)} \varphi \|\nabla\tau\| ds,
\]
where $\oint_C(\cdot) ds$ denotes the line integral of a function over the curve $C$.
\end{myLem}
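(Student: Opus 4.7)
The plan is to exploit the fact that every basis function $\cos(2\pi x_i), \sin(2\pi x_i)$ in $\TP(\Gamma_1)$ is an eigenfunction of the Laplacian with eigenvalue $-4\pi^2$, so $\Delta\tau = -4\pi^2\tau$ pointwise on $\torus^2$ and hence $(\Delta+4\pi^2)\tau = 0$. The proof reduces to integration by parts twice against a test function, with all volume terms cancelling by this eigenvalue identity and the only surviving contribution coming from the boundary of $\{\tau>0\}$.

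Specifically, I would test against a smooth $\varphi$ by moving the self-adjoint operator $(\Delta+4\pi^2)$ onto $\varphi$:
\[
\langle (\Delta+4\pi^2)[\tau]_+,\varphi\rangle = \int_{\torus^2} [\tau]_+\,(\Delta+4\pi^2)\varphi\,d\vx = \int_{\{\tau>0\}} \tau\,(\Delta\varphi + 4\pi^2\varphi)\,d\vx.
\]
Apply the divergence theorem on the open set $U = \{\tau>0\}$, whose boundary is $Z(\tau)$ with outward unit normal $\nu_{\mathrm{out}} = -\nabla\tau/\|\nabla\tau\|$. The first integration by parts on $\int_U \tau\Delta\varphi\,d\vx$ gives $-\int_U \nabla\tau\cdot\nabla\varphi\,d\vx$, since the boundary term $\oint_{Z(\tau)} \tau\,\partial_\nu\varphi\,ds$ vanishes because $\tau=0$ on $Z(\tau)$. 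The second integration by parts produces
\[
-\int_U \nabla\tau\cdot\nabla\varphi\,d\vx = -\oint_{Z(\tau)}\varphi\,\nabla\tau\cdot\nu_{\mathrm{out}}\,ds + \int_U \varphi\,\Delta\tau\,d\vx = \oint_{Z(\tau)}\varphi\|\nabla\tau\|\,ds + \int_U \varphi\,\Delta\tau\,d\vx,
\]
using $\nabla\tau\cdot\nu_{\mathrm{out}} = -\|\nabla\tau\|$. Adding back $4\pi^2\int_U \tau\varphi\,d\vx$ and invoking $\Delta\tau + 4\pi^2\tau = 0$ on $U$ causes the interior integrals to cancel, leaving only the line integral $\oint_{Z(\tau)}\varphi\|\nabla\tau\|\,ds$, as desired.

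The main obstacle is justifying the divergence theorem on $U$, since its boundary $Z(\tau)$ may have singular points where $\nabla\tau = 0$. Here I would invoke \Cref{lem:taus_are_minimal}, which asserts that $\tau$ is a minimal polynomial for $C = Z(\tau)$, together with \Cref{lem:tau_prop}(b), which then guarantees $\nabla\tau$ vanishes at only finitely many points of $C$. Consequently $C$ is a finite union of piecewise smooth closed curves, smooth away from a finite critical set $F \subset C$. I would make the argument rigorous by applying the divergence theorem to $U_\epsilon = \{\tau > 0\}\setminus \bigcup_{p\in F} B(p,\epsilon)$, whose boundary is smooth, and then take $\epsilon \to 0$; the extra boundary contributions along $\partial B(p,\epsilon)$ are $O(\epsilon)$ because $\tau$ and $\|\nabla\tau\|$ are bounded and the arc length shrinks, while the line integral on the punctured $Z(\tau)\setminus F$ converges to the full line integral on $Z(\tau)$ since $F$ has one-dimensional measure zero. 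Edge cases where $\tau\equiv 0$ or $Z(\tau)$ is empty are trivial (both sides vanish; note $\tau$ has zero mean, so $Z(\tau)\neq\emptyset$ whenever $\tau$ is nonzero).
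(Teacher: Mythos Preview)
Your proposal is correct and follows essentially the same approach as the paper: both exploit the eigenfunction identity $(\Delta+4\pi^2)\tau=0$, reduce to a divergence-theorem computation on $\{\tau>0\}$ with outward normal $-\nabla\tau/\|\nabla\tau\|$, and justify the boundary regularity via the fact that $\tau$ is a minimal polynomial so $\nabla\tau$ vanishes at only finitely many points of $Z(\tau)$. The only cosmetic difference is the order of operations: the paper first passes one derivative to $\varphi$ via $\langle\Delta[\tau]_+,\varphi\rangle=-\langle H(\tau)\nabla\tau,\nabla\varphi\rangle$ and then applies the divergence theorem once to $\int_{\tau>0}\nabla\cdot(\varphi\nabla\tau)\,d\vx$, whereas you move the full operator onto $\varphi$ and then integrate by parts twice on $U$; your treatment of the singular set (excising balls and letting $\epsilon\to 0$) is in fact more explicit than the paper's.
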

\begin{proof}
First, observe that $\Delta \tau = -4\pi^2 \tau$ and so $(\Delta+4\pi^2)\tau = 0$. Also, we have $\nabla [\tau]_+ = H(\tau) \nabla \tau$ almost everywhere. 
Now, for any $C^\infty$-smooth test function $\varphi:\torus^2\rightarrow \R$ we have
\begin{align}
\langle \Delta[\tau]_+, \varphi\rangle & = -\langle \nabla [\tau]_+, \nabla\varphi\rangle \nonumber\\
& = -\langle H(\tau)\nabla \tau, \nabla \varphi\rangle \nonumber\\
& = -\langle H(\tau) ,\nabla\tau\cdot \nabla \varphi\rangle \nonumber\\
& = -\langle H(\tau) , \nabla\cdot(\varphi\nabla \tau)-\varphi \Delta\tau\rangle \label{eq:weirdstep}\\
& = -\langle H(\tau) ,\nabla\cdot(\varphi\nabla \tau)\rangle + \langle H(\tau), \varphi \Delta\tau\rangle \nonumber\\
& = -\langle H(\tau) ,\nabla\cdot(\varphi\nabla \tau)\rangle - 4\pi^2 \langle H(\tau) ,  \varphi\tau\rangle \nonumber\\
& = -\langle H(\tau) ,\nabla\cdot(\varphi\nabla \tau)\rangle - 4\pi^2 \langle [\tau]_+,  \varphi\rangle.\nonumber
\end{align}
where in \eqref{eq:weirdstep} we used the identity $\nabla \cdot (\varphi\nabla \tau) = \nabla \tau \cdot \nabla \varphi + \varphi \Delta \tau$.
Rearranging terms above gives
\begin{align*}
\langle (\Delta + 4\pi^2)\tau,\varphi\rangle & = -\langle H(\tau), \nabla \cdot (\varphi \nabla \tau)\rangle\\
& = -\int_{\tau > 0}  \nabla \cdot (\varphi \nabla \tau)\, d\vx\\
& = -\oint_{Z(\tau)} \phi (\nabla \tau \cdot \vn)\, ds
\end{align*}
where $\vn$ is an outward unit normal to the curve $Z(\tau)$, and this last step follows from the divergence theorem. Since $\tau$ is a minimal polynomial for the trigonometric curve $Z(\tau)$, $\nabla \tau(\vx) = \bm 0$ for at most finitely many $\vx \in Z(\tau)$. Therefore, $\vn(\vx) = -\frac{\nabla \tau(\vx)}{\|\nabla\tau (\vx)\|}$ for all $\vx \in Z(\tau)$ except possibly finite many points, and so we have
\[
\langle (\Delta + 4\pi^2)\tau,\varphi\rangle  = \oint_{Z(\tau)} \varphi \|\nabla \tau\|\, ds,
\]
as claimed.
\end{proof}

Now we proceed with proof of \Cref{thm:main2}, which we split into two lemmas. First, we prove that, for sufficient large sampling set $\Omega$, the injectivity condition of \Cref{lem:dual_cert} is satisfied assuming the parameter vectors $\vw_i$ generate irreducible 2D trigonometric polynomials.

\begin{myLem}\label{lem:linind_of_rtps}
Let $\vw_1,...,\vw_s \in \mathbb{S}^3$ be such that $\vw_i \neq \pm \vw_j$ for all $i\neq j$, and $\tau_i = \vw_i^\top \bm\gamma(\cdot) \in \TP(\Gamma_1)$ is irreducible. Suppose $\Omega \supseteq \{ \vk \in \mathbb{Z}^2 : \|\vk\|_1 \leq 2(s-1)\}$. Then for any measure $\nu \in \gM(\mathbb{S}^3)$ supported on the set $\gW = \{\pm \vw_1,...,\pm \vw_s\}$, we have $\gK_\Omega \nu = 0$ implies $f_{\nu} = 0$.
\end{myLem}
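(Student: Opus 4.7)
The plan is to decompose $\nu=\sum_{i=1}^s(\alpha_i\delta_{\vw_i}+\beta_i\delta_{-\vw_i})$ and use the identity $[-t]_+=[t]_+-t$ to write
\begin{equation*}
f_\nu=\sum_{i=1}^s a_i\,[\tau_i]_+-\sum_{i=1}^s b_i\,\tau_i,\qquad a_i:=\alpha_i+\beta_i,\quad b_i:=\beta_i.
\end{equation*}
The goal then reduces to showing $a_i=0$ for every $i$: once this holds, $f_\nu=-\sum_i b_i\tau_i\in\TP(\Gamma_1)$, and since $\Gamma_1\subseteq\Omega$ (using $2(s-1)\geq 2$ in the non-trivial regime $s\geq 2$), the constraint $\widehat{f_\nu}|_\Omega=0$ forces $f_\nu=0$ by Fourier uniqueness on trigonometric polynomials.

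The key device is the second-order operator $\mathcal{L}:=\Delta+4\pi^2$. Because $\Delta\tau_i=-4\pi^2\tau_i$ for every $\tau_i\in\TP(\Gamma_1)$, $\mathcal{L}$ annihilates the polynomial part, and Lemma~\ref{lem:laplace_curve} identifies
\begin{equation*}
\mathcal{L}f_\nu=\sum_{i=1}^s a_i\,\rho_i,\qquad \langle\rho_i,\varphi\rangle=\oint_{Z(\tau_i)}\varphi\,\|\nabla\tau_i\|\,ds,
\end{equation*}
that is, a superposition of arc-length measures on the curves $Z(\tau_i)$ weighted by $\|\nabla\tau_i\|$. In Fourier domain $\mathcal{L}$ acts as the multiplier $4\pi^2(1-\|\vk\|_2^2)$, so the hypothesis $\gK_\Omega\nu=0$ immediately yields $\widehat{\mathcal{L}f_\nu}|_\Omega=0$; by conjugate-symmetry of $\Omega$, this is equivalent to $\langle\mathcal{L}f_\nu,\varphi\rangle=0$ for every $\varphi\in\TP(\Omega)$.

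To isolate each $a_i$, I would pair $\mathcal{L}f_\nu$ against the test function $\varphi_i:=\prod_{j\neq i}\tau_j^2$. Since each $\tau_j^2\in\TP(2\Gamma_1)$ and frequency supports add under multiplication, $\varphi_i\in\TP(\{\vk:\|\vk\|_1\leq 2(s-1)\})\subseteq\TP(\Omega)$. Because $\varphi_i\equiv 0$ on each $Z(\tau_j)$ with $j\neq i$, the pairing collapses to
\begin{equation*}
0=\langle\mathcal{L}f_\nu,\varphi_i\rangle=a_i\oint_{Z(\tau_i)}\varphi_i\,\|\nabla\tau_i\|\,ds.
\end{equation*}
The integrand is non-negative, and is strictly positive off a finite set: the intersections $Z(\tau_i)\cap Z(\tau_j)$ where $\varphi_i$ vanishes are finite by Lemma~\ref{lem:tau_prop}(a) (invoking irreducibility together with $\vw_i\neq\pm\vw_j$ to rule out $\tau_i=c\tau_j$), and the critical points of $\tau_i$ on $Z(\tau_i)$ are finite by Lemma~\ref{lem:tau_prop}(b). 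Since $Z(\tau_i)$ has positive total arc length, the integral is strictly positive, forcing $a_i=0$. The main obstacle is precisely this strict-positivity step; once it is in hand, the Fourier-multiplier identity and the concluding polynomial-uniqueness argument are both routine verifications.
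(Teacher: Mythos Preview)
Your proposal is correct and follows essentially the same approach as the paper's proof: both decompose $\nu$ over $\{\pm\vw_i\}$, exploit the operator $\Delta+4\pi^2$ together with Lemma~\ref{lem:laplace_curve} to reduce to line integrals over $Z(\tau_i)$, test against $\prod_{j\neq i}\tau_j^2$ to isolate the $i$th term, invoke Lemma~\ref{lem:tau_prop}(a),(b) for strict positivity, and finish via $\Gamma_1\subseteq\Omega$. The only cosmetic difference is that the paper applies $(\Delta+4\pi^2)$ to the test function (i.e., sets $\varphi=(\Delta+4\pi^2)\psi_j$) whereas you apply it to $f_\nu$ via the Fourier multiplier; since the operator is formally self-adjoint, these are equivalent.
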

\begin{proof}
Assume $\nu$ has the form
\[
\nu = \sum_{i=1}^s \left(a_i \delta_{\vw_i} +  b_i\delta_{-\vw_i}\right),
\]
for some scalars $a_i,b_i \in \R$, $i\in [s]$. Define $\tau_i = \vw_i^\T\bm\gamma(\cdot)$ for all $i \in [s]$. If  $\gK_\Omega \nu = 0$ then
\begin{equation}\label{eq:innernull0}
\sum_{i=1}^s \left(a_i \gF_\Omega[\tau_i]_+ + b_i \gF_\Omega[-\tau_i]_+\right) = \gF_\Omega\left(\sum_{i=1}^s a_i [\tau_i]_+ + b_i [-\tau_i]_+\right) = 0.
\end{equation}
This implies for all $\varphi \in \TP(\Omega)$ we have
\begin{equation}\label{eq:innernull}
\left\langle\sum_{i=1}^s a_i[\tau_i]_+ + b_i [-\tau_i]_+, \varphi \right\rangle = \sum_{i=1}^s a_i \langle [\tau_i]_+,\varphi \rangle + b_i \langle [-\tau_i]_+,\varphi \rangle  =  0.
\end{equation}
Now, fix any $j \in [s]$, and suppose we set $\varphi = (\Delta+4\pi^2)\psi_j$ where $\psi_j = \prod_{i\neq j} \tau_i^2$, which has frequency support contained in $\{ \vk \in \mathbb{Z}^2 : \|\vk\|_1 \leq 2(s-1)\}$, hence belongs to $\TP(\Omega)$ by assumption. Then, starting from equation \eqref{eq:innernull} we have
\begin{align}
    0 & = \sum_{i=1}^s a_i \langle [\tau_i]_+,(\Delta+4\pi^2)\psi_j \rangle + b_i \langle [-\tau_i]_+,(\Delta+4\pi^2)\psi_j \rangle \nonumber \\
    & = \sum_{i=1}^s a_i \langle (\Delta+4\pi^2)[\tau_i]_+,\psi_j\rangle + b_i\langle (\Delta+4\pi^2)[-\tau_i]_+,\psi_j\rangle \nonumber \\
    & = \sum_{i=1}^s (a_i+b_i) \oint_{Z(\tau_i)} \psi_j \|\nabla \tau_i\| ds \label{eq:innernull_2ndlast} \\
    & = (a_j+b_j) \oint_{Z(\tau_j)} \psi_j \|\nabla \tau_j\| ds, \label{eq:innernull_last}
\end{align}
where in \eqref{eq:innernull_2ndlast} we applied \Cref{lem:laplace_curve}, and in \eqref{eq:innernull_last} all but the $j$th term in the sum vanishes because $\psi_j$ contains a factor of $\tau_i$ for all $i\neq j$.
Finally, the integrand in \eqref{eq:innernull_last} is non-negative and not identically zero since $Z(\psi_j) = \cup_{i\neq j}Z(\tau_j)$, and by \Cref{lem:tau_prop}(a) each of the $\tau_i$ can have only finitely many zeros on $Z(\tau_j)$, and by \Cref{lem:tau_prop}(b), $\nabla\tau_j$ has at most finitely many zeros over $Z(\tau_j)$. Therefore, the integral in \eqref{eq:innernull_last} non-zero, and so $a_j + b_j = 0$, or equivalently, $b_j = -a_j$ for all $j \in [s]$. However, by the identity $\tau_i = [\tau_i]_+ - [-\tau_i]_+$, from equation \eqref{eq:innernull0}, this implies
\[
\gF_\Omega\left(\sum_{i=1}^s a_i \tau_i \right) = 0.
\]
But since $\Gamma_1 \subset \Omega$, this gives $\sum_{i=1}^s a_i \tau_i = 0$, and so
\[
f_\nu = \int_{\Peta}[\vw^\T\bm\gamma(\cdot)]_+ d\nu(\vw) =  \sum_{i=1}^s a_i \tau_i = 0,
\]
as claimed.
\end{proof}

Finally, to finish the proof of \Cref{thm:main2}, we construct a dual certificate assuming that $f$ is a positive weighted linear combination of rectified \emph{irreducible} trigonometric polynomials belonging to $\TP(\Gamma_1)$.
\begin{myLem}
Let $\vw_1,...,\vw_s \in \mathbb{S}^3$ be such that $\vw_i \neq \pm \vw_j$ for all $i\neq j$, and $\tau_i = \vw_i^\top \bm\gamma(\cdot)$ is irreducible for all $i\in[s]$. 
Suppose $f = \sum_{i=1}^s a_i [\tau_i]_+$ for some scalars $a_i > 0$, and $\vy = \gF_\Omega f$ where $\Omega \supseteq \{ \vk \in \mathbb{Z}^2 : \|\vk\|_1 \leq 2s\}$. Let the weighting function $\eta$ be given by $\eta(\vw) =\int_{\torus^2}[\vw^\T\bm\gamma(\vx)]_+ d \vx$. Then there exists a dual certificate for the measure $\mu = \sum_{i=1}^s a_i \delta_{\vw_i}$ with respect to the support set $\gW = \{\pm \vw_1,...,\pm \vw_s\}$.
\end{myLem}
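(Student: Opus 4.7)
The plan is to build the dual certificate $\Phi = \gL_\Omega \vz$ by first specifying the trigonometric polynomial $q = \gF_\Omega^*\vz \in \TP(\Omega)$ and then setting $\vz = \gF_\Omega q$ (which automatically lies in $\Csym(\Omega)$ because $q$ will be real-valued). Motivated by \Cref{lem:laplace_curve}, the key idea is to look for $q$ in the form
\[
q = \left(I + \tfrac{1}{4\pi^2}\Delta\right) r
\]
for some real $r \in \TP(\Omega)$. Integration by parts (self-adjointness of $\Delta$ on $\torus^2$) plus \Cref{lem:laplace_curve} with test function $r$ then collapses $\Phi(\vw) = \int_{\torus^2}[\vw^\T\bm\gamma]_+ q\,d\vx$ to the \emph{line-integral identity}
\[
\Phi(\vw) = \tfrac{1}{4\pi^2}\oint_{Z(\tau_\vw)} r\,\|\nabla \tau_\vw\|\,ds, \qquad \tau_\vw := \vw^\T\bm\gamma,
\]
which is valid for every $\vw \in \Peta$ because \Cref{lem:taus_are_minimal} ensures $\tau_\vw$ is a minimal polynomial for its zero set. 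Evaluating \Cref{lem:laplace_curve} at the constant test function $\varphi\equiv 1$ gives $\eta(\vw) = \tfrac{1}{4\pi^2}\oint_{Z(\tau_\vw)}\|\nabla \tau_\vw\|\,ds$, so $\Phi(\vw)$ is exactly $\eta(\vw)$ times the $\|\nabla\tau_\vw\|\,ds$-weighted average of $r$ along the zero curve $Z(\tau_\vw)$.

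I then choose
\[
r(\vx) = 1 - c\prod_{i=1}^{s}\tau_i(\vx)^2
\]
with $c > 0$ taken small enough that $-1 < r \le 1$ pointwise on $\torus^2$. Each $\tau_i \in \TP(\Gamma_1)$, so $\prod_i \tau_i^2$ has frequency support in $\{\vk : \|\vk\|_1 \le 2s\}$, which by hypothesis is contained in $\Omega$; hence $r \in \TP(\Omega)$ and $q \in \TP(\Omega)$ as required. By construction $r \equiv 1$ on $\bigcup_{i=1}^s Z(\tau_i)$ and $r < 1$ off this union. Conditions (i) and (ii) of \Cref{lem:dual_cert} are then immediate from the line-integral identity: the weighted average of $r$ lies in $[-1,1]$, giving $|\Phi(\vw)| \le \eta(\vw)$, while the average equals $1$ on $Z(\tau_i)$, so $\Phi(\vw_i) = \eta(\vw_i) = \sign(a_i)\eta(\vw_i)$ (using $a_i > 0$).

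The main obstacle is the strict inequality (iii), i.e.\ $|\Phi(\vw)| < \eta(\vw)$ for every $\vw \in \Peta \setminus \gW$. Because $r > -1$ everywhere and $r < 1$ off $\bigcup_i Z(\tau_i)$, the weighted average is strictly inside $(-1,1)$ unless $Z(\tau_\vw) \subseteq \bigcup_{i=1}^{s} Z(\tau_i)$ up to a finite set. I would rule this out in two cases. If $\vw \notin V$, then $\tau_\vw$ is irreducible by \Cref{lem:taus_are_minimal}, and \Cref{lem:tau_prop}(a) gives that each $Z(\tau_\vw)\cap Z(\tau_i)$ is finite unless $\tau_\vw$ is a real scalar multiple of $\tau_i$, i.e.\ $\vw = \pm \vw_i \in \gW$ (excluded); since $Z(\tau_\vw)$ is an infinite curve and each pairwise intersection is finite, $Z(\tau_\vw)$ cannot be absorbed into the finite union. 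If $\vw \in V$, then $\tau_\vw$ factors as in \Cref{lem:taus_are_minimal} into irreducible one-variable-type pieces of strictly smaller ``degree'' than any $\tau_i \in \TP(\Gamma_1)\setminus V$, so no irreducible factor of $\tau_\vw$ can be a scalar multiple of any $\tau_i$; applying \Cref{lem:tau_prop}(a) to each factor individually again forces the intersection with each $Z(\tau_i)$ to be finite and so leaves an arc of positive length in $Z(\tau_\vw)$ on which $r < 1$. This yields the strict inequality and completes the construction of the dual certificate.
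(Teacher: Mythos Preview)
Your proposal is correct and essentially reproduces the paper's construction: unwinding $q = (I + \tfrac{1}{4\pi^2}\Delta)(1 - c\rho)$ with $\rho = \prod_i \tau_i^2$ gives exactly the paper's test function $\varphi = 1 - \alpha(\Delta + 4\pi^2)\rho$ with $\alpha = c/(4\pi^2)$, and your line-integral/weighted-average reading of $\Phi(\vw)$ matches the paper's bounds on $\langle(\Delta+4\pi^2)[\tau]_+,\rho\rangle$.

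The one place you diverge is in the strict-inequality step (iii). The paper argues uniformly over all $\vw$: if the line integral $\oint_{Z(\tau_\vw)}\rho\|\nabla\tau_\vw\|\,ds$ vanishes then $\rho$ vanishes on $Z(\tau_\vw)$, so by \Cref{lem:tau_prop}(c) the minimal polynomial $\tau_\vw$ divides $\prod_i \tau_i^2$; since $\tau_\vw$ is square-free and the $\tau_i$ are irreducible, $\tau_\vw = c\prod_{i\in I}\tau_i$, and the frequency-support constraint $\tau_\vw \in \TP(\Gamma_1)$ forces $|I|=1$, i.e.\ $\vw = \pm\vw_i$. Your route---splitting on $\vw \in V$ versus $\vw \notin V$ and applying \Cref{lem:tau_prop}(a) to each irreducible factor of $\tau_\vw$---also works (the degree considerations you invoke are correct, since for irreducible $\tau_i$ the associated complex polynomial has total degree $3$ while the factors arising in the $\vw\in V$ cases have degree $1$ or $2$), but the divisibility argument via \Cref{lem:tau_prop}(c) is cleaner and avoids the case analysis entirely.
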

\begin{proof}
Let $\rho = \prod_{i=1}^s \tau_i^2$. Then $\hat{\rho}[\vk]$ has support contained within $\{ \vk \in \mathbb{Z}^2 : \|\vk\|_1 \leq 2s\}$, and so $\rho \in \TP(\Omega)$ by assumption. Also, let $\varphi = 1 -\alpha(\Delta + 4\pi^2)\rho$, where $\alpha>0$ is a scale factor to be determined later. Since the Laplacian $\Delta$ is a Fourier multiplier, we also have $\varphi \in \TP(\Omega)$. Finally, let $\vz = \gF_\Omega\varphi$, and let $\Phi(\vw) =\gL_\Omega\vz = \langle \gF_\Omega[\vw^\T\bm\gamma(\cdot)]_+,\vz\rangle$ be the corresponding dual function. 

Now fix any $\vw\in \mathbb{S}^3$, and let $\tau = \vw^\T\bm\gamma(\cdot)$. Then, since $\gF_\Omega^*\vz = \gF_\Omega^*\gF_\Omega \varphi = \varphi$, we have
\begin{align}\label{eq:phi_description}
\Phi(\vw) & = \langle \gF_\Omega[\tau]_+,\vz\rangle  =  \langle [\tau]_+,\varphi \rangle = \eta(\vw)-\alpha\langle (\Delta + 4\pi^2)[\tau]_+, \rho \rangle,
\end{align}
where in the last equality we used the fact that $\langle [\tau]_+,1\rangle =  \int_{\torus^2} [\vw^\T\bm\gamma(\vx)]_+\,d\vx = \eta(\vw)$.

Next, by \Cref{lem:laplace_curve}, observe that
\begin{align*}
 \langle (\Delta + 4\pi^2)[\tau]_+, \rho \rangle
& = \oint_{Z(\tau)} \rho \|\nabla \tau\| ds \geq 0,
\end{align*}
since $\rho \geq 0$. Furthermore, we have the bound
\begin{align}
 \oint_{Z(\tau)} \rho \|\nabla \tau\|ds & \leq \|\rho\|_{L^\infty(\torus^2)}\oint_{Z(\tau)}\|\nabla \tau\| ds \nonumber \\
 & = \|\rho\|_{L^\infty(\torus^2)}\langle(\Delta + 4\pi^2)[\tau]_+,1\rangle \label{eq:used_lemma}\\
 & = \|\rho\|_{L^\infty(\torus^2)}\langle[\tau]_+,(\Delta + 4\pi^2)1\rangle \label{eq:self_adjoint}\\
 & = 4\pi^2\|\rho\|_{L^\infty(\torus^2)}\langle[\tau]_+,1\rangle \nonumber\\
 & = 4\pi^2\|\rho\|_{L^\infty(\torus^2)}\eta(\vw), \nonumber
\end{align}
where the equality in \eqref{eq:used_lemma} follows from  \Cref{lem:laplace_curve}, and the equality in \eqref{eq:self_adjoint} follows from the definition of distributional derivatives.
Therefore, we have shown
\[
0 \leq \langle (\Delta + 4\pi^2)[\tau]_+, \rho \rangle \leq 4\pi^2\|\rho\|_{L^\infty(\torus^2)}\eta(\vw).
\]
Choosing $\alpha$ to be any constant satisfying $0 < \alpha < (4\pi^2 \|\rho\|_{L^\infty(\torus^2)})^{-1}$, then \eqref{eq:phi_description} combined with the inequality above shows that $0 \leq \Phi(\vw) \leq \eta(\vw)$ for all $\vw\in\S^3$.

Finally, we prove $\Phi(\vw) = \eta(\vw)$ if and only $\vw = \pm \vw_i$ for some $i \in [s]$. From \eqref{eq:phi_description}, this is equivalent to showing
\begin{equation}\label{eq:vanishing_cond}
\langle (\Delta + 4\pi^2)[\tau]_+, \rho\rangle = \oint_{Z(\tau)} \left(\textstyle\prod_{i=1}^s \tau_i^2\right) \|\nabla \tau\| ds = 0
\end{equation} if and only if $\tau = \pm \tau_i$ for some $i\in[s]$. First, if $\tau = \pm \tau_i$ then clearly \eqref{eq:vanishing_cond} holds. Conversely, if \eqref{eq:vanishing_cond} holds, then the integrand $(\prod_{i=1}^s \tau_i^2)\|\nabla \tau\|$ must vanish over $Z(\tau)$. However, by \Cref{lem:tau_prop}(b), $\nabla \tau$ has at most finitely zeros over $Z(\tau)$, and so  $\prod_{i=1}^s \tau_i^2$ must vanish everywhere on $Z(\tau)$. Since $\tau$ is a minimal polynomial for $Z(\tau)$,  by \Cref{lem:tau_prop}(c) this implies $\tau$ divides $\prod_{i=1}^s \tau_i^2$. Furthermore, since the $\tau_i$ are irreducible, and $\tau$ must be square-free (by virtue of being a minimal polynomial), this is possible if and only if $\tau = c \prod_{i\in I}\tau_i$ for some constant $c \in \R$ and index set $I \subset [s]$. Finally, since the product of two or more $\tau_i$ has Fourier support extending outside $\Gamma_1$, the only possibility is $\tau = c\cdot\tau_i$ for some $i\in[s]$, which implies $\vw = c\vw_i$ for some $i$. Finally, since $\|\vw\|_2=1$, we have $c = \pm 1$, which proves the claim.

Therefore, we have shown
$
\Phi(\vw) = \eta(\vw)$
for all $\vw = \pm \vw_i$, $i \in [s]$, while
$
0 \leq  \Phi(\vw) < \eta(\vw)
$
for all $\vw$ such that $\vw \neq \pm\vw_i$. This shows $\Phi$ satisfies the conditions of \Cref{lem:dual_cert}, i.e., $\Phi$ is a dual certificate for the measure $\mu^* = \sum_{i=1}^s a_i \delta_{\vw_i}$ with respect to the support set $\gW=\{\pm \vw_1,...,\pm\vw_s\}$.
\end{proof}

\section{INR fitting with Augmented Lagrangian Method}\label{sec:app:AL} Here we describe an Augmented Lagrangian (AL) approach to solving the equality constrained INR fitting problem \eqref{eq:opt_param_space1} (using the approximation $\tilde{\gF}_\Omega$ in place of $\F_\Omega$).
For INR parameters $\theta \in \Theta_W$, we introduce a Lagrange multiplier vector $\vq \in \Im(\F_\Omega)$ and penalty parameter $\sigma > 0$, and define the AL loss function $AL(\theta,\vq,\sigma)$ associated with \eqref{eq:opt_param_space1} by
\[
AL(\theta,\vq,\sigma) := R(\theta) + \langle\vq, \tilde{\gF}_\Omega f_\theta-\vy\rangle + \frac{\sigma}{2}\|\tilde{\gF}_\Omega f_\theta-\vy\|_2^2.
\] 
Then starting from initializations $\theta_0$, $\vq_0$, and $\sigma_0$, for $n=0,1,2,...,N-1$, we perform the updates
\begin{align}
\theta_{n+1} & = \argmin_{\theta}AL(\theta,\vq_n,\sigma_n)\label{eq:AL_main}\\
\vq_{n+1} & = \vq_n+ \sigma_n (\tilde{\gF}_\Omega f_{\theta_{n+1}}-\vy)\nonumber\\
\sigma_{n+1} & = \gamma \sigma_n,\nonumber
\end{align}
where $\gamma > 1$ is a fixed constant. To approximate the solution of the subproblem \eqref{eq:AL_main} we use a standard gradient-based INR training algorithm, warm-starting from the parameters $\theta_n$ obtained in the previous outer-loop iteration. See the Supplementary Materials \ref{sec:sm:exact} for more details on the hyperparameter settings used in the experiments in \Cref{sec:exact_recovery}.

\section*{Acknowledgments}
This work was supported by NSF CRII award CCF-2153371.

\bibliographystyle{siamplain}

\bibliography{references}

\begin{thebibliography}{10}

\bibitem{gelb2002}
{\sc R.~Archibald and A.~Gelb}, {\em A method to reduce the {G}ibbs ringing artifact in {MRI} scans while keeping tissue boundary integrity}, IEEE Transactions on Medical Imaging, 21 (2002), pp.~305--319.

\bibitem{bach2017breaking}
{\sc F.~Bach}, {\em Breaking the curse of dimensionality with convex neural networks}, Journal of Machine Learning Research, 18 (2017), pp.~1--53.

\bibitem{benbarka2022seeing}
{\sc N.~Benbarka, T.~H{\"o}fer, A.~Zell, et~al.}, {\em Seeing implicit neural representations as {F}ourier series}, in Proceedings of the IEEE/CVF Winter Conference on Applications of Computer Vision, 2022, pp.~2041--2050.

\bibitem{bengio2005convex}
{\sc Y.~Bengio, N.~Roux, P.~Vincent, O.~Delalleau, and P.~Marcotte}, {\em Convex neural networks}, Advances in neural information processing systems, 18 (2005).

\bibitem{bhaskar2013atomic}
{\sc B.~N. Bhaskar, G.~Tang, and B.~Recht}, {\em Atomic norm denoising with applications to line spectral estimation}, IEEE Transactions on Signal Processing, 61 (2013), pp.~5987--5999.

\bibitem{bredies2020sparsity}
{\sc K.~Bredies and M.~Carioni}, {\em Sparsity of solutions for variational inverse problems with finite-dimensional data}, Calculus of Variations and Partial Differential Equations, 59 (2020), p.~14.

\bibitem{cai2024batch}
{\sc Z.~Cai, H.~Zhu, Q.~Shen, X.~Wang, and X.~Cao}, {\em Batch normalization alleviates the spectral bias in coordinate networks}, in Proceedings of the IEEE/CVF Conference on Computer Vision and Pattern Recognition, 2024, pp.~25160--25171.

\bibitem{candes2014towards}
{\sc E.~J. Cand{\`e}s and C.~Fernandez-Granda}, {\em Towards a mathematical theory of super-resolution}, Communications on pure and applied Mathematics, 67 (2014), pp.~906--956.

\bibitem{chi2014compressive}
{\sc Y.~Chi and Y.~Chen}, {\em Compressive two-dimensional harmonic retrieval via atomic norm minimization}, IEEE Transactions on Signal Processing, 63 (2014), pp.~1030--1042.

\bibitem{chi2020harnessing}
{\sc Y.~Chi and M.~F. Da~Costa}, {\em Harnessing sparsity over the continuum: Atomic norm minimization for superresolution}, IEEE Signal Processing Magazine, 37 (2020), pp.~39--57.

\bibitem{conway2019course}
{\sc J.~B. Conway}, {\em A course in functional analysis}, vol.~96, Springer, 2nd~ed., 1990.

\bibitem{eftekhari2021stable}
{\sc A.~Eftekhari, T.~Bendory, and G.~Tang}, {\em Stable super-resolution of images: theoretical study}, Information and Inference: A Journal of the IMA, 10 (2021), pp.~161--193.

\bibitem{ekeland1999convex}
{\sc I.~Ekeland and R.~Temam}, {\em Convex analysis and variational problems}, SIAM, 1999.

\bibitem{ergen2021convex}
{\sc T.~Ergen and M.~Pilanci}, {\em Convex geometry and duality of over-parameterized neural networks}, Journal of machine learning research, 22 (2021), pp.~1--63.

\bibitem{essakine2024we}
{\sc A.~Essakine, Y.~Cheng, C.-W. Cheng, L.~Zhang, Z.~Deng, L.~Zhu, C.-B. Sch{\"o}nlieb, and A.~I. Aviles-Rivero}, {\em Where do we stand with implicit neural representations? a technical and performance survey}, arXiv preprint arXiv:2411.03688,  (2024).

\bibitem{feng2022spatiotemporal}
{\sc J.~Feng, R.~Feng, Q.~Wu, Z.~Zhang, Y.~Zhang, and H.~Wei}, {\em Spatiotemporal implicit neural representation for unsupervised dynamic {MRI} reconstruction}, arXiv preprint arXiv:2301.00127,  (2022).

\bibitem{fessler2010model}
{\sc J.~A. Fessler}, {\em Model-based image reconstruction for {MRI}}, IEEE signal processing magazine, 27 (2010), pp.~81--89.

\bibitem{friedlander1998introduction}
{\sc F.~G. Friedlander}, {\em Introduction to the Theory of Distributions}, Cambridge University Press, 1998.

\bibitem{guerquin2011realistic}
{\sc M.~Guerquin-Kern, L.~Lejeune, K.~P. Pruessmann, and M.~Unser}, {\em Realistic analytical phantoms for parallel magnetic resonance imaging}, IEEE Transactions on Medical Imaging, 31 (2011), pp.~626--636.

\bibitem{gunasekar2018characterizing}
{\sc S.~Gunasekar, J.~Lee, D.~Soudry, and N.~Srebro}, {\em Characterizing implicit bias in terms of optimization geometry}, in International Conference on Machine Learning, PMLR, 2018, pp.~1832--1841.

\bibitem{huang2023neural}
{\sc W.~Huang, H.~B. Li, J.~Pan, G.~Cruz, D.~Rueckert, and K.~Hammernik}, {\em Neural implicit k-space for binning-free non-cartesian cardiac {MR} imaging}, in International Conference on Information Processing in Medical Imaging, Springer, 2023, pp.~548--560.

\bibitem{krogh1991simple}
{\sc A.~Krogh and J.~Hertz}, {\em A simple weight decay can improve generalization}, Advances in neural information processing systems, 4 (1991).

\bibitem{kunz2024implicit}
{\sc J.~F. Kunz, S.~Ruschke, and R.~Heckel}, {\em Implicit neural networks with fourier-feature inputs for free-breathing cardiac {MRI} reconstruction}, IEEE Transactions on Computational Imaging,  (2024).

\bibitem{kurmanbek2023multivariate}
{\sc B.~Kurmanbek and E.~Robeva}, {\em Multivariate super-resolution without separation}, Information and Inference: A Journal of the IMA, 12 (2023), pp.~1938--1963.

\bibitem{liu2024finer}
{\sc Z.~Liu, H.~Zhu, Q.~Zhang, J.~Fu, W.~Deng, Z.~Ma, Y.~Guo, and X.~Cao}, {\em {FINER}: Flexible spectral-bias tuning in implicit neural representation by variable-periodic activation functions}, in Proceedings of the IEEE/CVF Conference on Computer Vision and Pattern Recognition, 2024, pp.~2713--2722.

\bibitem{mcginnis2023single}
{\sc J.~McGinnis, S.~Shit, H.~B. Li, V.~Sideri-Lampretsa, R.~Graf, M.~Dannecker, J.~Pan, N.~Stolt-Ans{\'o}, M.~M{\"u}hlau, J.~S. Kirschke, et~al.}, {\em Single-subject multi-contrast {MRI} super-resolution via implicit neural representations}, in Medical Image Computing and Computer-Assisted Intervention, Springer, 2023, pp.~173--183.

\bibitem{mildenhall2021nerf}
{\sc B.~Mildenhall, P.~P. Srinivasan, M.~Tancik, J.~T. Barron, R.~Ramamoorthi, and R.~Ng}, {\em Nerf: Representing scenes as neural radiance fields for view synthesis}, Communications of the ACM, 65 (2021), pp.~99--106.

\bibitem{muller2022instant}
{\sc T.~M{\"u}ller, A.~Evans, C.~Schied, and A.~Keller}, {\em Instant neural graphics primitives with a multiresolution hash encoding}, ACM transactions on graphics (TOG), 41 (2022), pp.~1--15.

\bibitem{asilomar}
{\sc M.~Najaf and G.~Ongie}, {\em Towards a sampling theory for implicit neural representations}, in Asilomar Conference on Signals, Systems, and Computers, 2024, pp.~105--110.

\bibitem{nocedal1999numerical}
{\sc J.~Nocedal and S.~J. Wright}, {\em Numerical optimization}, Springer, 2~ed., 2006.

\bibitem{novello2022understanding}
{\sc T.~Novello}, {\em Understanding sinusoidal neural networks}, arXiv preprint arXiv:2212.01833,  (2022).

\bibitem{novello2025tuningfrequenciesrobusttraining}
{\sc T.~Novello, D.~Aldana, A.~Araujo, and L.~Velho}, {\em Tuning the frequencies: Robust training for sinusoidal neural networks}, 2025, \url{https://arxiv.org/abs/2407.21121}, \url{https://arxiv.org/abs/2407.21121}.

\bibitem{ongie2017convex}
{\sc G.~Ongie, S.~Biswas, and M.~Jacob}, {\em Convex recovery of continuous domain piecewise constant images from nonuniform {F}ourier samples}, IEEE Transactions on Signal Processing, 66 (2017), pp.~236--250.

\bibitem{ongie2015}
{\sc G.~Ongie and M.~Jacob}, {\em Recovery of piecewise smooth images from few {F}ourier samples}, in 2015 International Conference on Sampling Theory and Applications (SampTA), 2015, pp.~543--547.

\bibitem{ongie2016off}
{\sc G.~Ongie and M.~Jacob}, {\em Off-the-grid recovery of piecewise constant images from few {F}ourier samples}, SIAM Journal on Imaging Sciences, 9 (2016), pp.~1004--1041.

\bibitem{ongie2019function}
{\sc G.~Ongie, R.~Willett, D.~Soudry, and N.~Srebro}, {\em A function space view of bounded norm infinite width {R}e{LU} nets: The multivariate case}, in International Conference on Learning Representations, 2020.

\bibitem{parhi2021banach}
{\sc R.~Parhi and R.~D. Nowak}, {\em Banach space representer theorems for neural networks and ridge splines}, Journal of Machine Learning Research, 22 (2021), pp.~1--40.

\bibitem{parhi2023deep}
{\sc R.~Parhi and R.~D. Nowak}, {\em Deep learning meets sparse regularization: A signal processing perspective}, IEEE Signal Processing Magazine, 40 (2023), pp.~63--74.

\bibitem{poon2023geometry}
{\sc C.~Poon, N.~Keriven, and G.~Peyr{\'e}}, {\em The geometry of off-the-grid compressed sensing}, Foundations of Computational Mathematics, 23 (2023), pp.~241--327.

\bibitem{poon2019multidimensional}
{\sc C.~Poon and G.~Peyr{\'e}}, {\em Multidimensional sparse super-resolution}, SIAM Journal on Mathematical Analysis, 51 (2019), pp.~1--44.

\bibitem{ramasinghe2022beyond}
{\sc S.~Ramasinghe and S.~Lucey}, {\em Beyond periodicity: Towards a unifying framework for activations in coordinate-{MLP}s}, in European Conference on Computer Vision, Springer, 2022, pp.~142--158.

\bibitem{saragadam2023wire}
{\sc V.~Saragadam, D.~LeJeune, J.~Tan, G.~Balakrishnan, A.~Veeraraghavan, and R.~G. Baraniuk}, {\em {WIRE}: Wavelet implicit neural representations}, in Proceedings of the IEEE/CVF Conference on Computer Vision and Pattern Recognition, 2023, pp.~18507--18516.

\bibitem{saratchandran2024sampling}
{\sc H.~Saratchandran, S.~Ramasinghe, V.~Shevchenko, A.~Long, and S.~Lucey}, {\em A sampling theory perspective on activations for implicit neural representations}, in Proceedings of the 41st International Conference on Machine Learning, 2024, pp.~43422--43444.

\bibitem{savarese2019infinite}
{\sc P.~Savarese, I.~Evron, D.~Soudry, and N.~Srebro}, {\em How do infinite width bounded norm networks look in function space?}, in Conference on Learning Theory, PMLR, 2019, pp.~2667--2690.

\bibitem{shen2022nerp}
{\sc L.~Shen, J.~Pauly, and L.~Xing}, {\em {NeRP}: implicit neural representation learning with prior embedding for sparsely sampled image reconstruction}, IEEE Transactions on Neural Networks and Learning Systems, 35 (2022), pp.~770--782.

\bibitem{shenouda2024relus}
{\sc J.~Shenouda, Y.~Zhou, and R.~D. Nowak}, {\em {R}e{LU}s are sufficient for learning implicit neural representations}, International Conference on Machine Learning (ICML),  (2024).

\bibitem{shepp1974fourier}
{\sc L.~A. Shepp and B.~F. Logan}, {\em The {F}ourier reconstruction of a head section}, IEEE Transactions on nuclear science, 21 (1974), pp.~21--43.

\bibitem{sitzmann2020implicit}
{\sc V.~Sitzmann, J.~Martel, A.~Bergman, D.~Lindell, and G.~Wetzstein}, {\em Implicit neural representations with periodic activation functions}, Advances in Neural Information Processing Systems, 33 (2020), pp.~7462--7473.

\bibitem{tancik2020fourier}
{\sc M.~Tancik, P.~Srinivasan, B.~Mildenhall, S.~Fridovich-Keil, N.~Raghavan, U.~Singhal, R.~Ramamoorthi, J.~Barron, and R.~Ng}, {\em Fourier features let networks learn high frequency functions in low dimensional domains}, Advances in Neural Information Processing Systems, 33 (2020), pp.~7537--7547.

\bibitem{van2022scale}
{\sc D.~Van~Veen, R.~Van~der Sluijs, B.~Ozturkler, A.~D. Desai, C.~Bluethgen, R.~D. Boutin, M.~H. Willis, G.~Wetzstein, D.~B. Lindell, S.~Vasanawala, et~al.}, {\em Scale-agnostic super-resolution in {MRI} using feature-based coordinate networks}, in Medical Imaging with Deep Learning, 2022.

\bibitem{veraart2016gibbs}
{\sc J.~Veraart, E.~Fieremans, I.~O. Jelescu, F.~Knoll, and D.~S. Novikov}, {\em Gibbs ringing in diffusion {MRI}}, Magnetic resonance in medicine, 76 (2016), pp.~301--314.

\bibitem{wu2022arbitrary}
{\sc Q.~Wu, Y.~Li, Y.~Sun, Y.~Zhou, H.~Wei, J.~Yu, and Y.~Zhang}, {\em An arbitrary scale super-resolution approach for {3D} {MR} images via implicit neural representation}, IEEE Journal of Biomedical and Health Informatics, 27 (2022), pp.~1004--1015.

\bibitem{wu2021irem}
{\sc Q.~Wu, Y.~Li, L.~Xu, R.~Feng, H.~Wei, Q.~Yang, B.~Yu, X.~Liu, J.~Yu, and Y.~Zhang}, {\em {IREM}: High-resolution magnetic resonance image reconstruction via implicit neural representation}, in Medical Image Computing and Computer Assisted Intervention, Springer, 2021, pp.~65--74.

\bibitem{xie2022neural}
{\sc Y.~Xie, T.~Takikawa, S.~Saito, O.~Litany, S.~Yan, N.~Khan, F.~Tombari, J.~Tompkin, V.~Sitzmann, and S.~Sridhar}, {\em Neural fields in visual computing and beyond}, Computer Graphics Forum, 41 (2022), pp.~641--676.

\bibitem{yuce2022structured}
{\sc G.~Y{\"u}ce, G.~Ortiz-Jim{\'e}nez, B.~Besbinar, and P.~Frossard}, {\em A structured dictionary perspective on implicit neural representations}, in Proceedings of the IEEE/CVF Conference on Computer Vision and Pattern Recognition, 2022, pp.~19228--19238.

\bibitem{zou2021recovery}
{\sc Q.~Zou and M.~Jacob}, {\em Recovery of surfaces and functions in high dimensions: sampling theory and links to neural networks}, SIAM Journal on Imaging Sciences, 14 (2021), pp.~580--619.

\end{thebibliography}

\newpage
\section*{Supplementary Material}
\addcontentsline{toc}{section}{Supplementary Material}

\section{Additional Experimental Results and Figures}\label{sec:supp:figures}

In this section, we provide additional reconstruction figures.

\begin{figure*}[htbp!]
    \centering    
\includegraphics[width=\textwidth]
{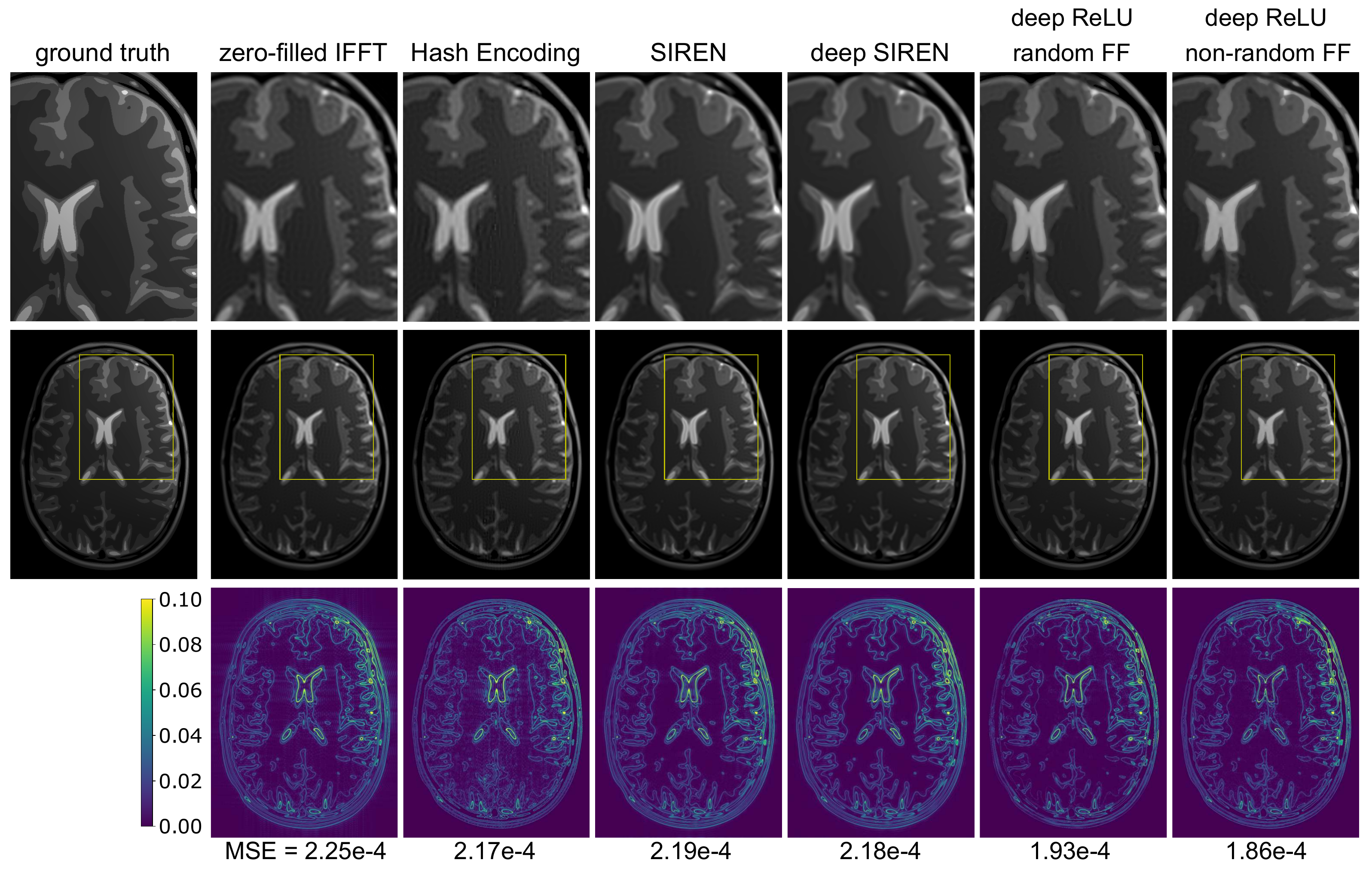}
    \caption{\footnotesize\textbf{\textsf{PWS-BRAIN} recovery using different INR architectures.} Top row: we compare the zero-filled IFFT reconstruction with the phantom recovery from Hash Encoding, depth 5 SIREN (SIREN), depth 14 SIREN (deep SIREN), and depth 15 INR (deep INR) with \emph{random} Fourier features and \emph{non-random} Fourier features using ReLU. The bottom row shows the absolute value of the difference with the ground truth. All grayscale images are shown on the scale [0, 1].}
    \label{fig:PWS_diff_methods}
\end{figure*}

\begin{figure*}[htbp!]
    \centering    \includegraphics[width=0.63\textwidth]{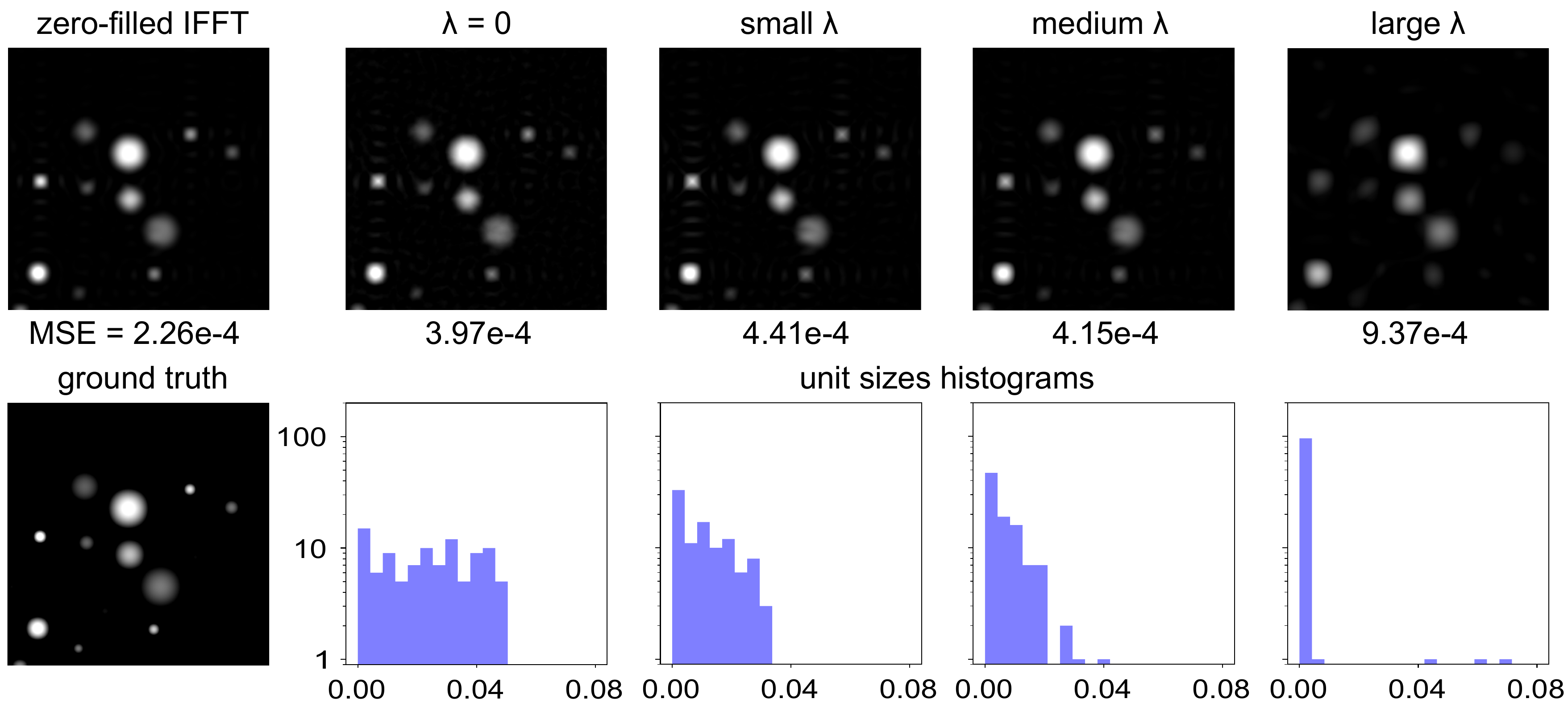} \\[0.4em]
    (a) standard weight decay regularization\\[0.4em]
\includegraphics[width=0.63\textwidth]{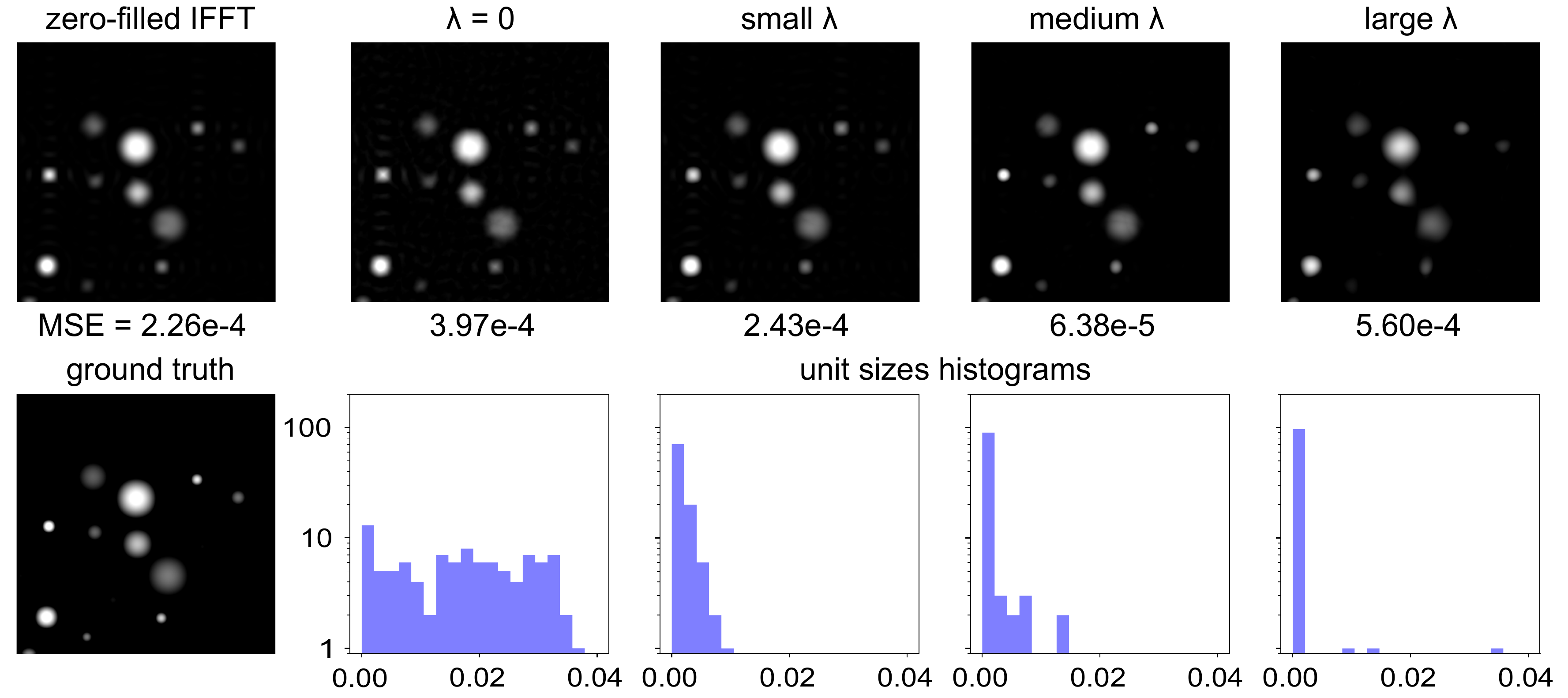}\\[0.4em]
   (b) modified weight decay regularization proposed in Theorem 1\\[0.4em]
   \includegraphics[width=0.63\textwidth]{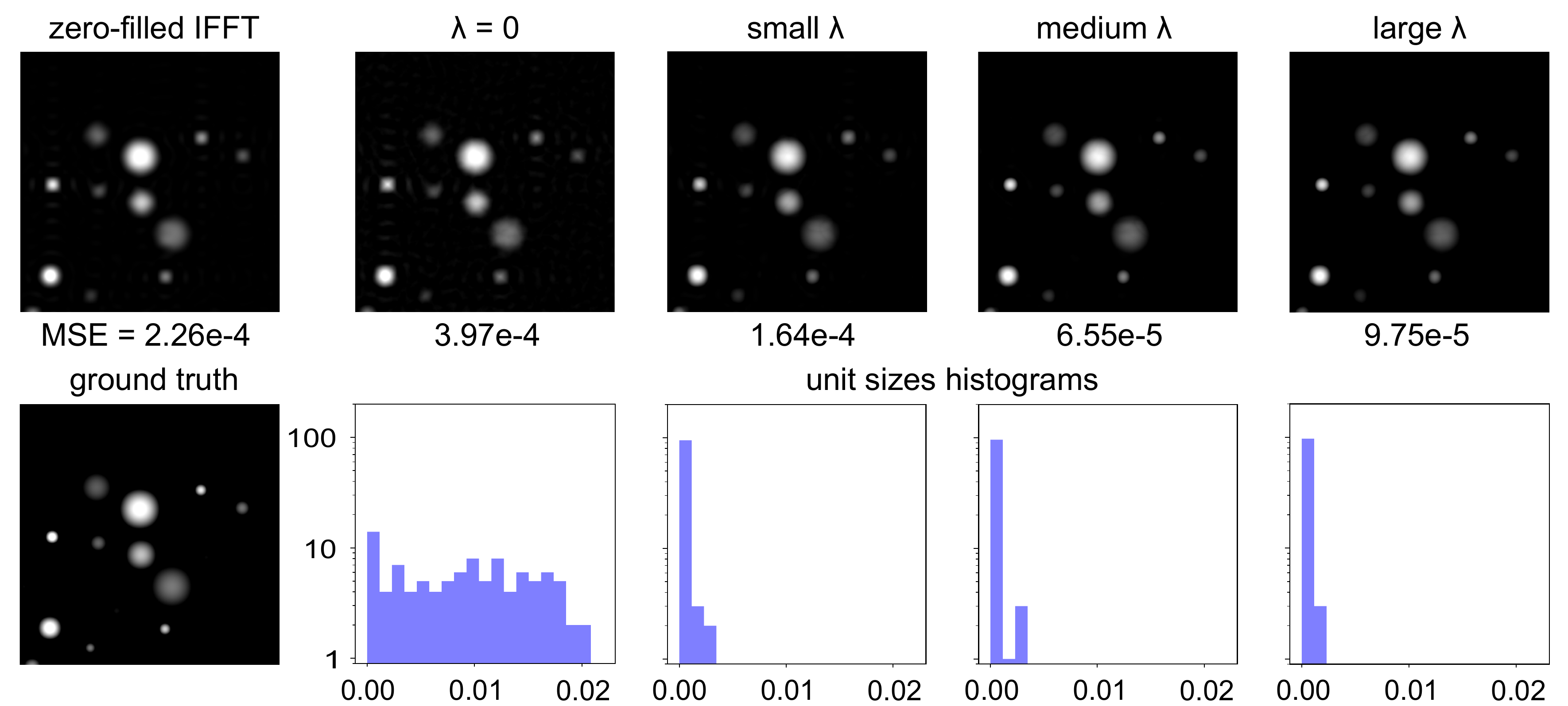}\\
   (b) modified weight decay regularization proposed in Theorem 2
     \caption{\footnotesize \textbf{Impact of regularization on the \textsf{DOT} phantom reconstruction and unit sizes of a trained shallow INR.} In the first row of each panel, we compare the zero-filled IFFT reconstruction with the reconstructions obtained with a shallow INR trained with standard/modified weight decay regularization for different values of regularization strength $\lambda$. The histograms in the second row of each panel represent the distribution of unit sizes $\{|a_i|\eta(\vw_i)\}_{i=1}^{100}$ of the trained INRs across different values of $\lambda$. As $\lambda$ increases, the unit sizes cluster at zero, indicating the trained INR is sparse in the sense that it has few active units. (Note: histograms are plotted on a log scale.)
}
    \label{fig:dot_std_mod_wd_10K}
\end{figure*}

\begin{figure*}[htbp!]
    \centering    
\includegraphics[width=\textwidth]{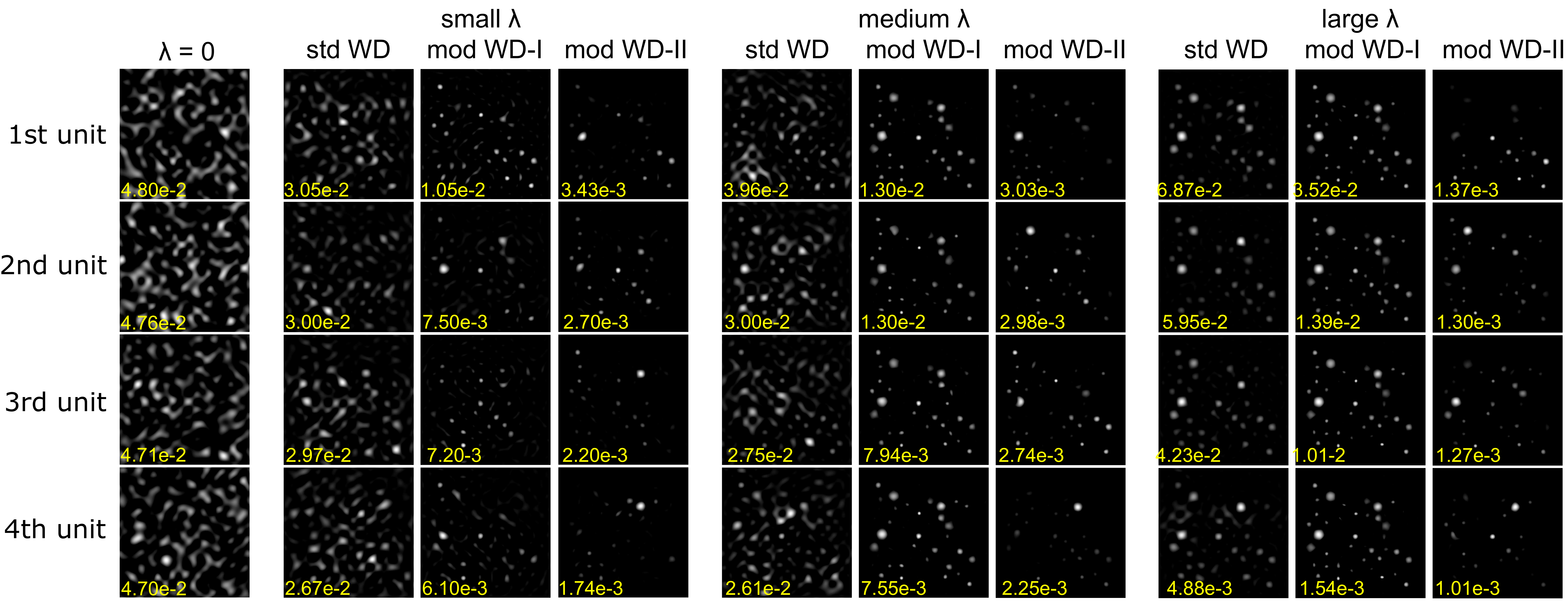}
     \caption{\footnotesize\textbf{Visualization of the 4 largest active units for the \textsf{DOT} phantom, obtained with a shallow INR.} The INR is trained with standard weight decay regularization (std WD), modified weight decay regularization proposed in Theorem 1 (mod WD-I), and the modified weight decay regularization proposed in Theorem 2 (mod WD-II). All images are normalized to the [0,1] range. The corresponding ``unit size'' $|a_i|\eta(\vw_i)$ is displayed on the image in yellow.}
    \label{fig:dot_active_units_normalized}
\end{figure*}

\begin{figure*}[htbp!]
    \centering    
    \includegraphics[width=\textwidth]{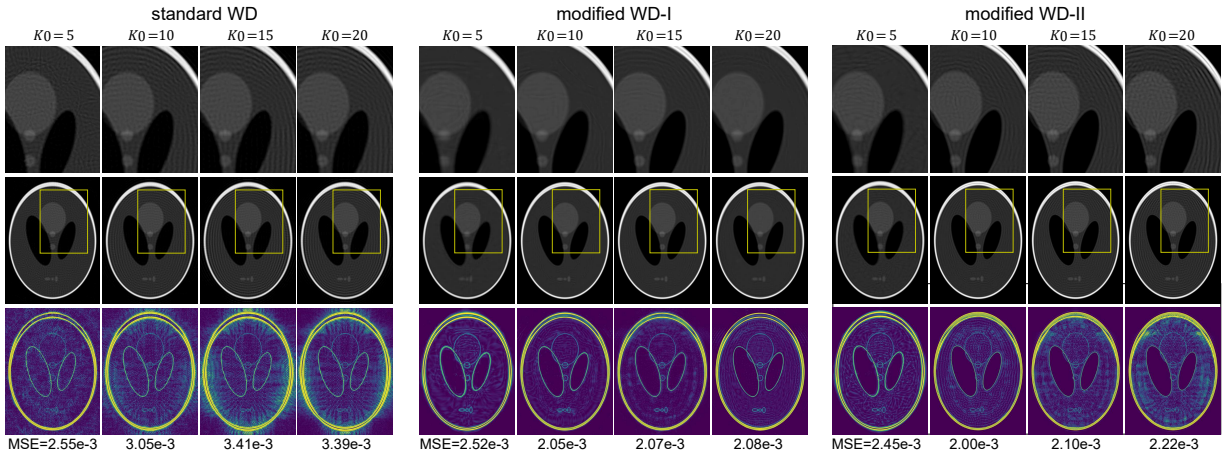}
    \caption{\footnotesize The effect of varying the maximum frequency for Fourier features $K_0$, on the \textsf{SL} phantom recovery trained using a shallow INR with standard WD, modified WD-I, and modified WD-II regularization. Observe that the lowest MSE is achieved for $K_0=10$    for modified WD-I and WD-II.}
    \label{fig:K0 test}
\end{figure*}

\vspace{2em}

\section{Experimental Details}\label{sec:supp:exp_details}

In this section, we provide additional details for some experiments included in the main text.

\subsection{Additional Details of  Experiments in Section \ref{sec:exact_recovery}}\label{sec:sm:exact}

In the setting of Theorem \ref{thm:main}, we use AL parameters $\sigma_0 = 10$, with $\gamma = 1.5$ and perform $N=60$ AL outer loop iterations. In the setting of Theorem \ref{thm:main2}, we use AL parameters $\sigma_0 = 10$, with $\gamma = 1.1$, and perform $N=100$ AL outer loop iterations. In both cases, we run $5,000$ iterations of the Adam optimizer with a learning rate of $1\times 10^{-3}$ to approximately minimize the AL inner loop subproblem \eqref{eq:AL_main}.

\subsection{Additional Details of Experiments in Section \ref{subsec:Effect-of-Depth}}
\label{depth}

This section presents the details of the experiments for reconstructing the \textsf{SL} and \textsf{PWC-BRAIN} phantoms using INR with different depths, with and without weight decay regularization, as shown in Figure \ref{fig:PWC_SL_depth_reg_eff}. 

For the \textsf{SL} phantom recovery, we set the maximum sampling frequency to $K=48$, and set the maximum sampling frequency of the Fourier features map to $K_0=10$. For the reconstructions with weight decay regularization, the regularization strength parameter $\lambda$ is set as follows: $ 1\times 10^{-6}$ for depth 2, $ 5\times 10^{-8}$ for depth 5, $ 5\times 10^{-7}$ for depth 10, $ 5\times 10^{-8}$ for depth 15 and $ 2\times 10^{-14}$ for depth 20. 

For the \textsf{PWC BRAIN} phantom recovery, we set $K=64$ and $K_0=10$. Additionally, $\lambda$ is set as follows: $ 1\times 10^{-4}$ for depth 2, $ 2\times 10^{-8}$ for depths 5 and 10, $ 1\times 10^{-7}$ for depth 15 and $ 1\times 10^{-8}$ for depth 20. In all experiments, both phantoms are recovered by training an INR of width 100 using the Adam optimizer for 40,000 iterations with a learning rate of $1\times 10^{-3}$ followed by 10,000 iterations with a learning of $ 1\times 10^{-4}$.

\subsection{Additional Details of Experiments in Section \ref{subsec:{Comparison with Different INR Architectures}}}
\label{diff_architectures_ex_details}

This section outlines the experimental details for reconstructing the \textsf{SL}, \textsf{PWC-BRAIN}, and \textsf{PWS-BRAIN} phantoms using different INR architectures, as shown in Figures \ref{fig:SL_diff_methods}, \ref{fig:PWC_diff_methods} and  \ref{fig:PWS_diff_methods}.

\paragraph{Hash Encoding}
For the hash encoding INR architecture, we use a fully-connected ReLU network with a depth of 3 and a width of 64 and the following parameters for the hash encoding layer: the number of levels $L=16$, number of feature dimensions per entry $F=2$, coarsest resolution $N_{min}=16$, finest resolution $N_{max}= 1024$. Additionally, details on training the INR using hash encoding for each phantom reconstruction are provided below.

\textsf{SL:} We set the hash table size $T= 2^{21}$, and train the INR using the Adam optimizer with a learning rate of $1 \times 10 ^{-2}$ for 100,000 iterations, and set the weight decay regularization strength  parameter to $\lambda = 1\times 10^{-4}$.
 
\textsf{PWC-BRAIN}:  We set the hash table size $T= 2^{19}$, and train the INR using the Adam optimizer with a learning rate of $1 \times 10 ^{-2}$ for 100,000 iterations, and set $\lambda=1\times 10^{-6}$.
 
\textsf{PWS-BRAIN}: We set the hash table size $T= 2^{19}$, and train the INR using the Adam optimizer with a learning rate of $1 \times 10 ^{-2}$ for 100,000 iterations, and set $\lambda=1\times 10^{-4}$.

\paragraph{SIREN}
For SIREN, we set the network's depth to 5 and for the deep SIREN, we set it to 14. For both SIREN and deep SIREN the hidden-layer widths is set to 256 and we set the bandlimit parameter $\omega_0 = 30$. Our experiments showed that increasing the depth from 5 to 14 can lead to lower image MSE. Additionally, in experiments with depth 5 SIREN networks, we did not apply weight decay regularization since it did not have a significant effect on the output. However, our experimental results indicate that with deep SIREN, using weight decay regularization improves the reconstructions and leads to lower image MSE values. Details on training the INR using SIREN for each phantom reconstruction are provided below.

\textsf{SL:} For SIREN, we train the INR using the Adam optimizer with a learning rate of $1 \times 10 ^{-5}$ for 50,000 iterations without weight decay regularization. For deep SIREN, we train the INR using the Adam optimizer with a learning rate of $1 \times 10 ^{-5}$ for 50,000 iterations and set $\lambda=5\times 10^{-3}$.
 
\textsf{PWC-BRAIN}: For SIREN, we train the network using the Adam optimizer with a learning rate of $1\times 10^{-6}$ for 50,000 iterations without weight decay regularization. For deep SIREN, we train the INR using the Adam optimizer for 10,000 iterations with a learning rate of $1\times 10^{-4}$ followed by 40,000 iterations with a learning rate of $1\times 10^{-5}$, and set $\lambda=5\times 10^{-2}$.

\textsf{PWS-BRAIN}: For SIREN, we train the network using the Adam optimizer with a learning rate of $1\times 10^{-6}$ for 50,000 iterations without weight decay regularization. For deep SIREN, we train the INR using the Adam optimizer for 50,000 iterations with a learning rate of $1\times 10^{-6}$, and set $\lambda$ to $5\times 10^{-3}$.

\paragraph{Deep INR with random Fourier features}
The INR width is set to 100, and the initial Fourier features layer is defined by sampling 256 random frequency vectors whose entries are i.i.d. Gaussian with mean-zero and standard deviation $10$, resulting in a total of $D=513$ features. Additionally, the network is trained using the Adam optimizer for 40,000 iterations with a learning rate of $1\times 10^{-3}$, followed by 10,000 iterations with a learning rate of $1\times 10^{-4}$. For each phantom, the INR's depth and weight decay hyperparameter are as below:

\textsf{SL:} The network's depth is set to 20 and $\lambda=1\times 10^{-11}$.

\textsf{PWC-BRAIN}: The network's depth is set to 20 and $\lambda=4\times 10^{-8}$.

\textsf{PWS-BRAIN}: The network's depth is set to 15 and $\lambda=5\times 10^{-7}$.

\paragraph{Deep INR with non-random Fourier features}

In these experiments, an INR of width 100 is trained using the Adam optimizer for 40,000 iterations with a learning rate of $1\times 10^{-3}$, followed by 10,000 iterations with a learning rate of $1\times 10^{-4}$. For each phantom, we define $K, K_0$, the INR's depth and weight decay hyperparameter $\lambda$ as below:

\textsf{SL:} The INR's depth is set to 20, $K=48$, $K_0=10$ and  $\lambda=2\times 10^{-14}$.

\textsf{PWC-BRAIN}: The INR's depth is set to 20, $K=64$, $K_0=10$, and  $\lambda=1\times 10^{-8}$.

\textsf{PWS-BRAIN}: The INR's depth is set to 20, $K=64$, $K_0=10$, and  $\lambda=5\times 10^{-8}$.

\subsection{Experimental Details of the Reconstructions Shown in Figures \ref{fig:std_mod_wd_10K}, \ref{fig:dot_std_mod_wd_10K}, \ref{fig:brain_active_units_normalized}, and \ref{fig:dot_active_units_normalized}}
\label{brain_dot_unit_sizes} This section describes the experimental setup for the effect of regularization on the \textsf{PWC BRAIN} and \textsf{DOT} phantoms reconstruction and
unit sizes of a trained shallow INR as shown in Figures \ref{fig:std_mod_wd_10K} and \ref{fig:dot_std_mod_wd_10K}. 
For the \textsf{PWC-BRAIN} phantom, we set $K=64$, $K_0=20$, width 500. For the \textsf{DOT} phantom, we set $K=32$, $K_0=10$, INR width to 100. For both phantoms, we train the INR networks using the Adam optimizer for 10,000 iterations with a learning rate of $1\times 10^{-3}$. The weight decay regularization hyperparameter $\lambda$ is selected by a grid search to optimize the image domain MSE in each experiment. Additionally, we use the same experimental settings and parameters to visualize the four largest active units shown in Figures \ref{fig:brain_active_units_normalized} and \ref{fig:dot_active_units_normalized}.

\subsection{Experimental Details of the Reconstructions Shown in Figure \ref{fig:K0 test}}
This section outlines the experimental details for reconstructing the \textsf{SL} phantom using a depth 2 INR with different $K_0$ as shown in Figure \ref{fig:K0 test}. 
In these experiments, we set $K=48$ and train a shallow INR of width 100 using the Adam optimizer for 40,000 iterations with a learning rate of $1\times 10^{-3}$, followed by 10,000 iterations with a learning rate of $1\times 10^{-4}$. We trained the INR using standard WD, modified WD-I, and modified WD-II regularization and tested varying values of $K_0$ for each. In each experiment, we selected the regularization parameter $\lambda$ using a grid search to optimize the image MSE. For the main results included in the paper, we set $K_0=10$ for the \textsf{SL} phantom recovery. When $K_0=10$, we have $n=220$ frequency vectors, plus one constant feature, which gives a total of $D = 2n + 1 = 441$ features. 

\section{Proof of Lemma \ref{lem:taus_are_minimal}}\label{sec:supp:lem_proof}
For convenience, we restate Lemma \ref{lem:taus_are_minimal} before giving its proof. Recall that $\TP(\Gamma_1)$ is the space of 2D trigonometric polynomials $\tau:\torus^2\rightarrow\R$ of the form
\[
\tau(x_1,x_2) = a_1 \cos(2\pi x_1) + b_1\sin(2\pi x_1) + a_2\cos(2\pi x_2) +  b_2 \sin(2\pi x_2)
\]
for some $a_1,a_2,b_1,b_2\in \R$.
\begin{myLemStar}[Restatement of \Cref{lem:taus_are_minimal}]
    Suppose $\tau \in \TP(\Gamma_1)$ is non-zero. Then $C = Z(\tau)$ is a trigonometric curve, and $\tau$ is a minimal polynomial for $C$. Furthermore, $\tau$ is irreducible except in the following cases:
    \begin{enumerate}
        \item $\tau(x_1,x_2) = \alpha \sin(2\pi(x_1-t))$ for some $\alpha \in \R$, $t \in \torus$.
        \item $\tau(x_1,x_2) = \alpha \sin(2\pi(x_2-t))$ for some $\alpha \in \R$, $t \in \torus$.
        \item $\tau(x_1,x_2) = \alpha (\sin(2\pi(x_1-t_1)) + \sin(2\pi(x_2-t_2)))$ for some $\alpha \in \R$, $t_1,t_2 \in \torus$.
    \end{enumerate}
\end{myLemStar}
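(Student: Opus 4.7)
The plan is to transfer the question to an irreducibility problem about the associated bivariate polynomial $p = \gP[\tau] \in \C[z_1,z_2]$ and run a discriminant-based case analysis. Writing $\alpha_1 := w_1 - i w_2$ and $\alpha_2 := w_3 - i w_4$, a direct substitution $\cos(2\pi x_j) \mapsto (z_j + z_j^{-1})/2$, $\sin(2\pi x_j) \mapsto (z_j - z_j^{-1})/(2i)$ followed by clearing denominators will yield
\[
p(z_1,z_2) = \alpha_1 z_1^2 z_2 + \bar\alpha_1 z_2 + \alpha_2 z_1 z_2^2 + \bar\alpha_2 z_1
\]
when both $\alpha_1, \alpha_2 \neq 0$, with obvious univariate simplifications when either vanishes. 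The amplitude-phase identity $w\cos(2\pi x) + w'\sin(2\pi x) = R\sin(2\pi(x-t))$ with $R = \sqrt{w^2 + (w')^2}$ will immediately identify the degenerate cases $\alpha_2 = 0$ and $\alpha_1 = 0$ as cases 1 and 2 of the lemma, and in both subcases the polynomial factors as $\alpha_j(z_j - \zeta)(z_j + \zeta)$ with $|\zeta| = 1$, so $\tau$ is reducible.

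For the generic case $\alpha_1, \alpha_2 \neq 0$, I would view $p$ as a quadratic polynomial in $z_1$ over $\C[z_2]$ with leading coefficient $\alpha_1 z_2$, middle coefficient $\alpha_2 z_2^2 + \bar\alpha_2$, and constant term $\bar\alpha_1 z_2$. Since $z_2 \nmid \alpha_2 z_2^2 + \bar\alpha_2$, these three coefficients are coprime in $\C[z_2]$, so $p$ is primitive and by Gauss's lemma is reducible in $\C[z_1,z_2]$ iff it factors in $\C(z_2)[z_1]$, iff its discriminant
\[
D(z_2) = (\alpha_2 z_2^2 + \bar\alpha_2)^2 - 4|\alpha_1|^2 z_2^2
\]
is a square in $\C[z_2]$. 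Setting $u = z_2^2$ and matching $D$ against the square of a linear polynomial in $u$, the cross-term comparison will force $|\alpha_1|^2 = |\alpha_2|^2$, and in that case $D = (\alpha_2 z_2^2 - \bar\alpha_2)^2$ and the quadratic formula delivers the explicit factorization
\[
p = \tfrac{1}{\alpha_1}\bigl(\alpha_1 z_1 z_2 + \bar\alpha_2\bigr)\bigl(\alpha_1 z_1 + \alpha_2 z_2\bigr).
\]
Translating this condition back through $\alpha_j = -iR\, e^{-i2\pi t_j}$ will exhibit the corresponding $\tau$ in the case 3 form, while $|\alpha_1|^2 \neq |\alpha_2|^2$ forces $p$ irreducible. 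The three algebraic conditions $\alpha_1 = 0$, $\alpha_2 = 0$, $|\alpha_1|^2 = |\alpha_2|^2$ match the defining conditions of $V$ exactly.

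For the minimal-polynomial claim, I would verify in each reducible case that the irreducible factors of $p$ are pairwise non-proportional and each cuts out a variety with infinite intersection with $\C\torus^2$. In cases 1 and 2, the linear factors $z_j \pm \zeta$ have $|\zeta| = 1$, so each carries a whole $\C\torus$-fiber of zeros. In case 3, the factors have different total degrees and are therefore non-proportional, and the relation $|\alpha_1| = |\alpha_2|$ makes each factor solvable for $z_2$ as a unimodular function of $z_1$, yielding a whole circle of torus zeros. In the irreducible generic case the single factor suffices, and writing $\tau = R_1 \cos(2\pi(x_1-\phi_1)) + R_2 \cos(2\pi(x_2-\phi_2))$ with $R_1, R_2 > 0$ shows $Z(\tau) \cap \torus^2$ is a one-parameter family, hence infinite and without isolated points (so $Z(\tau)$ is a trigonometric curve). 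The main technical step, and the only nonroutine obstacle, is the discriminant matching and its clean translation back to the sine-sum trigonometric form of case 3; once that correspondence is made explicit, the remaining irreducibility, non-proportionality, and infinitude checks reduce to direct algebra.
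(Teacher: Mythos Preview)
Your proposal is correct and follows a genuinely different route from the paper. The paper handles the generic case $\alpha_1,\alpha_2\neq 0$ by brute-force coefficient matching: it posits a factorization $p=(a_0+a_1z_1+a_2z_2)(b_0+\cdots+b_5z_2^2)$, expands, and solves the resulting system of equations to force the factorization into the form $(a_1z_1+a_2z_2)(b_0+b_3z_1z_2)$ with $|a_1|=|a_2|$. Your discriminant/Gauss's lemma argument is cleaner and more conceptual: primitivity reduces the question to whether $D(z_2)$ is a square, and the even-in-$z_2$ structure of $D$ makes the matching trivial. This buys you a shorter and more systematic irreducibility analysis.

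On the other hand, the paper establishes the ``trigonometric curve'' property (that $Z(\tau)$ is infinite with no isolated points) by a single uniform Hessian argument valid for every nonzero $\tau\in\TP(\Gamma_1)$: an isolated zero would be a local extremum, but the diagonal Hessian combined with $\tau=0$ forces $\nabla^2\tau$ to be indefinite unless $\tau\equiv 0$. Your approach handles this case-by-case, which is fine, but the irreducible case deserves one extra sentence: writing $\tau=R_1\cos(2\pi(x_1-\phi_1))+R_2\cos(2\pi(x_2-\phi_2))$ with $R_1\neq R_2$ (since you are in the irreducible case), any point with $\nabla\tau=0$ has $\tau=\pm R_1\pm R_2\neq 0$, so the implicit function theorem applies at every zero and rules out isolated points. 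Without that observation, ``one-parameter family'' is an assertion rather than an argument.
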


\begin{proof}
    Let $\tau \in \TP(\Gamma_1)$ be non-zero, and $C = Z(\tau)$. First, we note that $C$ is nonempty: we have $\hat{\tau}[0,0] = \int\int_{\torus^2}\tau(x_1,x_2)dx_1 dx_2 = 0$, which implies $\tau$ must change sign over $\torus^2$, and so by the intermediate value theorem $\tau$ must have at least one zero.
    
    Next, we prove that $C$ has no isolated points (i.e., $\tau$ has no isolated zeros). Suppose, by way of contradiction, that $\tau$ has an isolated zero at a point $(x_1^*,x_2^*)$. Then, by continuity of $\tau$, it must have a strict local maximum or minimum at $(x_1^*,x_2^*)$. In particular, the Hessian matrix $\nabla^2 \tau(x_1^*,x_2^*)$ must be positive semi-definite or negative semi-definite. And since
    \[
    \tau(x_1,x_2) = a_1 \cos(2\pi x_1) + b_1\sin(2\pi x_1) + a_2\cos(2\pi x_2) +  b_2 \sin(2\pi x_2)
    \]
    for some $a_1,a_2,b_1,b_2\in\R$, we have
    \[
    \nabla^2 \tau(x_1^*,x_2^*) = -4\pi^2 \begin{bmatrix}
    a_1 \cos(2\pi x_1^*) + b_1 \sin(2\pi x_1^*) & 0 \\ 0 & a_2 \cos(2\pi x_2^*) + b_2 \sin(2\pi x_2^*)
    \end{bmatrix}.
    \]
    Also, $\tau(x_1^*,x_2^*) = 0$ implies $a_2 \cos(2\pi x_2^*) + b_2 \sin(2\pi x_2^*) = -(a_1 \cos(2\pi x_1^*) + b_1 \sin(2\pi x_1^*))$. Therefore, if $a_1 \cos(2\pi x_1^*) + b_1 \sin(2\pi x_1^*)$ were non-zero, this would imply the Hessian is indefinite, so the only possibility is
    \[a_1 \cos(2\pi x_1^*) + b_1 \sin(2\pi x_1^*) = a_2 \cos(2\pi x_2^*) + b_2 \sin(2\pi x_2^*) = 0.\]
    This implies
    \[
    a_1 = -\alpha \sin(2\pi x_1^*), b_1 = \alpha \cos(2\pi x_1^*),
    a_2 = -\beta \sin(2\pi x_2^*),
    b_2 = \beta \cos(2\pi x_2^*)
    \]
    for some scalars $\alpha,\beta \in \R$. Simple trigonometric identities show that this implies
    \[
    \tau(x_1,x_2) = \alpha \sin(2\pi(x_1-x_1^*)) + \beta \sin(2\pi(x_2-x_2^*)).
    \]
    Finally, since $(x_1^*,x_2^*)$ is a critical point of $\tau$, we have
    \[
    \nabla \tau (x_1^*,x_2^*) = 2\pi (\alpha,\beta) = (0,0)
    \]
    and so $\alpha = \beta = 0$, i.e., $\tau$ is identically zero, a contradiction. Therefore, $\tau$ cannot have any isolated zeros, and so $C = Z(\tau)$ is a trigonometric curve.
    
    Now we prove that $\tau$ is a minimal polynomial for $C$. To this end, we show that the conversion of $\tau$ into a complex polynomial in two variables is either irreducible or factors into a product of distinct irreducible factors. First, expanding $\tau$ as a sum of complex exponentials, we have
    \[
    \tau(x_1,x_2) = c_1 e^{i2\pi x_1} + \overline{c_1} e^{-i2\pi x_1} + c_2 e^{i2\pi x_2} + \overline{c_2} e^{-i2\pi x_2}
    \]
    where $c_1, c_2 \in \mathbb{C}$ are such that either $c_1 \neq 0$ or $c_2 \neq 0$. Let $p = \gP[\tau]$ be the conversion of $\tau$ to a complex polynomial in two variables. Then $p$ has the form
    \[
    p(z_1,z_2) = z_1^{q_1} z_2^{q_2} \left(c_1 z_1 + \overline{c_1} z_1^{-1} + c_2 z_2 + \overline{c_2} z_2^{-1}\right)
    \]
    where $q_1,q_2\in\{0,1\}$.
    We consider the following cases:
    
    Case 1: $c_1 = 0$ or $c_2 = 0$. Then $p$ has the form
    \[
    p(z_1,z_2) = z_i(c_iz_i^{-1} + \overline{c_i}z_i) = c_i + \overline{c_i}z_i^2
    \]
    for some $i=1,2$, which factors as
    \[
    p(z_1,z_2) = \overline{c_i}(z_i - r)(z_i+r)
    \]
    where $r = \sqrt{-c_i/\overline{c_i}}$ satisfies $|r|=1$. Note that the linear factors $z_i+r$  and $z_i-r$ are irreducible. Also, their zero-sets $Z(z_i-r) = \{(z_1,z_2) : z_i =  r\}$ and $Z(z_i+r) = \{(z_1,z_2) : z_i = -r\}$ are distinct, and have infinite intersection with the complex unit torus. Therefore, $\tau$ is a minimal polynomial for $Z(\tau)$. Converting $p$ back into a trigonometric polynomial gives $\tau(x_1,x_2) = \alpha \sin(2\pi (x_i-t))$ for some $i\in\{1,2\}$, $\alpha \in \R$ with $\alpha \neq 0$, and $t\in\torus$, which corresponds to the first two cases in the statement of the lemma.
    
    Case 2: $c_1\neq 0$ and $c_2\neq 0$. Then $p$ has the form
    \begin{equation}\label{eq:pform1}
    p(z_1,z_2) = z_1z_2\left(c_1 z_1 + \overline{c_1} z_1^{-1} + c_2 z_2 + \overline{c_2} z_2^{-1}\right) = c_1 z_1^2z_2 + \overline{c_1} z_2 +  c_2 z_2^2z_1 + \overline{c_2} z_1.
    \end{equation}
    If $p$ is irreducible, there is nothing to show since then $\tau$ is a minimal polynomial by definition. Instead, assume $p$ is reducible, i.e., $p$ factors as $p = p_1p_2$ for some non-constant polynomials $p_1,p_2 \in \mathbb{C}[z_1,z_2]$. Since $p$ has total degree 3, the only possibility is that one of the factors has total degree 1 and the other factor has total degree 2, i.e., $p$ has a factorization of the form
    \begin{align}
    p(z_1,z_2) & = (a_0 + a_1 z_1 + a_2 z_2)(b_0 + b_1 z_1 + b_2 z_2 + b_3 z_1z_2 + b_4 z_1^2 + b_5 z_2^2) \nonumber \\
    & = a_{1} b_{4} z_{1}^{3} + a_{2} b_{5} z_{2}^{3} + (a_{1} b_{3} + a_{2} b_{4}) z_{1}^{2}z_2 + (a_{1} b_{5} + a_2b_3) z_2^2z_1 \label{eq:pform2} \\ 
    &\quad  + (a_{0} b_{4} + a_{1} b_{1})z_1^2  + \left(a_{0} b_{5} + a_{2} b_{2}\right) z_{2}^{2} 
     + (a_1b_2 + a_2 b_1 + a_0b_3) z_1z_2 \nonumber \\ 
    &\quad + (a_2b_0 + a_0b_2)z_{2} + (a_{0} b_{1} + a_{1} b_{0})z_1 + a_{0} b_{0} \nonumber
    \end{align}
    Comparing coefficient between \eqref{eq:pform1} and \eqref{eq:pform2}, we see that:
    \begin{gather*}
    a_0b_0 = a_1b_4 = a_2b_5 = a_0b_4 + a_1b_1 = a_{0} b_{5} + a_{2} b_{2} = a_1b_2 + a_2 b_1 + a_0b_3 = 0,\\
    c_1 = a_1b_2 + a_2b_4 = \overline{a_2b_0 + a_0b_2} \neq 0,\\
    c_2 = a_1b_5 + a_2b_3 = \overline{a_0b_1+a_1b_0} \neq 0.
    \end{gather*}
    From the equation $a_0b_0 = 0$, either $a_0 = 0$ or $b_0 = 0$ (or both).  First, suppose $b_0 = 0$. From the equations above, this implies $a_0b_2 \neq 0$ and $a_0b_1 \neq 0$, and so $a_0,b_1,b_2$ are all nonzero. 
    If $a_1 \neq 0$ then the equation $a_1b_4 = 0$ implies $b_4 = 0$. Together with the equation $a_0b_4 +a_1b_1 = 0$ 
    this implies $b_1 = 0$, a contradiction. Therefore, $a_1=0$. By the same reasoning, if $a_2 \neq 0$ then this would imply $b_2 = 0$, a contradiction, and so $a_2=0$. But if $a_1 = a_2 = 0$ then the first factor $a_0 + a_1z_1 + a_2 z_2$ is a constant polynomial, and so $p$ is irreducible, contradicting our assumption that $p$ was reducible. 
    
    Therefore, if $p$ is reducible, then $b_0 \neq 0$ and $a_0 = 0$. From the equations above this implies $a_2 b_0 \neq 0$ and $a_1b_0 \neq 0$, i.e., $b_0,a_1,a_2$ are all non-zero. This in turn implies $b_1=b_2=b_4=b_5 = 0$. Therefore, the only possible factorization of $p$ is
    \[
    p(z_1,z_2) = (a_1 z_1 + a_2 z_2)(b_0 + b_3 z_1z_2)
    \]
    where $a_1,a_2,b_0,b_3 \in \mathbb{C}$ are all non-zero, and satisfy $a_1 b_3 = \overline{a_2b_0}$ and $a_2b_3 = \overline{a_1b_0}$. In particular, from these relations imply $|b_0| = |b_3|$ and $|a_1| = |a_2|$. Note that $p_1(z_1,z_2) = a_1 z_1 + a_2 z_2$ is irreducible since it is degree 1. Also, the zero-set of $p_1$ has an infinite intersection with the complex unit torus: substituting $z_i = e^{i2\pi x_i}$, $i=1,2$, the condition $a_1 z_1 + a_2 z_2 = 0$ is equivalent to $e^{j2\pi (x_1-x_2)} = a$ where $a = -a_2/a_1$ is such that $|a|=1$. This has infinitely many solutions $\{(x_1,x_2) \in \mathbb{T}^2 : x_1 - x_2 = t\}$ where $t \in \mathbb{T}$ is such that $e^{i2\pi t} = a$.
    
    Also, the degree 2 polynomial $p_2(z_1,z_2) = b_0 + b_3 z_1z_2$ is irreducible since it is easy to show it cannot be written as a product of two degree 1 polynomials. Furthermore, the zero-set of $p_2$ has infinite intersection with the complex unit torus: substituting $z_i = e^{i2\pi x_i}$, $i=1,2$, the condition $b_0 + b_3 z_1z_2 = 0$ is equivalent to $e^{i2\pi(x_1+x_2)} = b$ where $b = -b_3/b_0$ is such that $|b|=1$. This has infinitely many solutions $\{(x_1,x_2) \in \mathbb{T}^2 : x_1 + x_2 = t\}$ where $t \in \mathbb{T}$ is such that $e^{i2\pi t} = b$.
    
    Therefore, we have shown that if $p$ is reducible, it is a product of distinct irreducible factors, each of which vanish on an infinite subset of the complex unit torus, which proves that $\tau$ is a minimal polynomial. Finally, converting $p$ back into a trigonometric polynomial, it is straightforward to show that in this case
    \[
    \tau(x_1,x_2) = \alpha (\sin(2\pi(x_1-t_1)) + \sin(2\pi(x_2-t_2)))
    \]
    for some $\alpha \in \R$ with $\alpha \neq 0$ and some $t_1,t_2 \in \torus$, which corresponds to the third case in the statement of the lemma.
    \end{proof}

\end{document}